\newtheorem{theorem}{Theorem}
\newtheorem{assumption}{Assumption}
\newtheorem{definition}{Definition}
\newtheorem{property}{Property}
\newtheorem{proposition}{Proposition}
\newtheorem{lemma}{Lemma}
\newtheorem{corollary}[lemma]{Corollary}
\newtheorem{remark}{Remark}
\DeclareSymbolFont{cyrletters}{OT2}{wncyr}{m}{n}
\DeclareMathSymbol{\Sha}{\mathalpha}{cyrletters}{"58}
\def\BibTeX{{\rm B\kern-.05em{\sc i\kern-.025em b}\kern-.08em
 T\kern-.1667em\lower.7ex\hbox{E}\kern-.125emX}}
\begin{document}

\title{Necessary and Sufficient Conditions for Harmonic Control in Continuous Time}
\author{N. Blin$^{1,2,3}$, P. Riedinger$^{1}$, J. Daafouz$^{1}$, L. Grimaud$^2$, P. Feyel$^{3}$
\thanks{$^{1}$Universit\'e de Lorraine, CNRS-CRAN UMR 7039, 2, avenue de la for\^et de Haye, 54516 Vandoeuvre-l\`es-Nancy Cedex, France.}
\thanks{$^2$ Safran Electronics \& Defense, Research \& Technologies department, 95 route de Montpellier, Valence, France}
\thanks{$^3$ Safran Electronics \& Defense, Research \& Technologies department, 100 avenue de Paris, Massy, France}
\thanks{This work is supported by HANDY project ANR-18-CE40-0010-02.} 
}
\maketitle

\begin{abstract} 
In this paper, we revisit the concepts and tools of harmonic analysis and control and provide a rigorous mathematical answer to the following question: when does an harmonic control has a representative in the time domain ? By representative we mean a control in the time domain that leads by sliding Fourier decomposition to exactly the same harmonic control. Harmonic controls that do not have such representatives lead to erroneous results in practice. The main results of this paper are: a one-to-one correspondence between ad hoc functional spaces guaranteeing the existence of a representative, a strict equivalence between the Carathéorody solutions of a differential system and the solutions of the associated harmonic differential model, and as a consequence, a general harmonic framework for Linear Time Periodic (LTP) systems and bilinear affine systems. The proposed framework allows to design globally stabilizing harmonic control laws. We illustrate the proposed approach on a single-phase rectifier bridge. Through this example, we show how one can design stabilizing control laws that guarantee periodic disturbance rejection and low harmonic content.
\end{abstract}
\begin{IEEEkeywords}
	Sliding Fourier Decomposition, Dynamic phasors, Harmonic modeling and control, repetitive control, Lyapunov harmonic equations, Riccati harmonic equations, bilinear affine systems, power converters
\end{IEEEkeywords}

\section{Introduction}
Designing controllers that achieve both good dynamic performance and a low harmonic distortion at steady state is a very challenging problem from both a theoretical and practical point of view. This problem arises in many applications and in particular those related to power systems such as transmission and conversion of electrical power where more and more complex power converters with non-linear components and complicated frequential behaviors are involved. A central concern is that of modeling harmonic behaviors and developing harmonic based control methods with stability and performance guarantees often expressed as a reduction of harmonic distortion. To this end, many harmonic modeling approaches have been introduced in the literature, such as generalized state-space averaging \cite{Middlebrook}, \cite{Sanders}, dynamic phasors \cite{Mattavelli}, \cite{Almer}, extended harmonic domain \cite{Madrigal}, harmonic state-space \cite{Wereley_1990}, \cite{Ormrod}, \cite{Hwang} etc. The models introduced in these papers capture both the transient evolution and the steady-state of harmonics and are widely used for power systems analysis and control, e.g. \cite{Mollerstedt}, \cite{Javaid}, \cite{Mattavelli}, \cite{Stankovic}, \cite{Tadmor}, \cite{Almer2}, \cite{Chavez}, \cite{Karami}, \cite{Rico}, \cite{Ghita}. We also mention \cite{Zhou2002}, \cite{Zhou2008}, \cite{Bittanti} where control design methods based on harmonic Lyapunov or Riccati equations are proposed. Repetitive control methods are also an
alternative that allow to track periodic signals and/or reject
periodic disturbances (see \cite{Hara}, \cite{Hillerstrom}, \cite{Steinbuch} \cite{Ghosh} \cite{Longman} and \cite{Wang}.) but our focus in this paper is on harmonic methods.

The main feature of the harmonic modeling techniques is that a linear time-periodic model can be regarded, in the harmonic domain, as a linear time-invariant model, possibly with a state space of infinite dimension. This feature allows to ease the analysis and control design tasks as classical Linear Time Invariant (LTI) methods can be applied. We refer to \cite{EJC2020} where a detailed review is given and where the links between these different approaches are highlighted. However, a fundamental question remains unanswered : when does an harmonic control designed using these approaches has a representative control in the time domain ? By representative we mean a control in the time domain that leads by sliding Fourier decomposition to exactly the same harmonic control. Harmonic controls that do not have such representatives are of no interest in practice. Indeed, even if these harmonic controls stabilize the harmonic model, one cannot obtain their time domain counterparts. This is a very important question as it has a major impact on the implementation of the designed control laws and their stabilizing properties. To the best of our knowledge, there is no result in the literature that answers this question in a rigorous mathematical framework. The objective of this paper is to fill this gap.

The main tool used in harmonic modeling is the the sliding Fourier decomposition. Following the example of Fourier series for which Riesz-Fisher theorem establishes a one-to-one correspondence between the square integrable functions and square summable sequences, we point out the fact that it is necessary to specify precisely the functional spaces involved in the harmonic methods. This is a key point that allows us to build the main contributions of this paper and answer the previous question.

This paper is organized as follows. First, we start by recalling in the next section some of the concepts that appear at the intersection of signal processing and control theory and which will be used in the paper. This concerns the Fourier series, the sliding Fourier decomposition as well as the ad hoc functional spaces. In section III, we present the first main result of this paper which is a necessary and sufficient condition called "coincidence condition" that guarantees a one-to-one correspondence between temporal signals and a subspace of harmonic signals denoted $H$. As it well be explained in the sequel, this result has important consequences especially in control design. In particular, it implies that any transformation performed on any variable in the harmonic domain must have an equivalent transformation in the time domain, otherwise the correspondence is lost. Moreover, this section provides the ingredients to guarantee that the inverse of a sliding Fourier decomposition recovers the original signal in $L^2$ sense. A necessary and sufficient condition that characterizes harmonic signals belonging to $H$ is also provided. The second main result of this paper is presented in Section IV. It concerns a necessary and sufficient condition to have a strict equivalence between solutions (in Carathéodory sense) of general differential equations and solutions of the associated harmonic ones. This result is generic and it has a major impact. We illustrate this impact in section V on stability and stabilization of very general Linear Time Periodic (LTP) systems and bilinear affine systems. We first prove a strict equivalence between harmonic Lyapunov or Riccati equations and their periodic versions under very weak assumptions on the regularity of the time varying matrices. This extends existing results to more general LTP systems. Indeed, this equivalence is established independently of the sliding window period $T$ without invoking Floquet theory \cite{Floquet} as done in general \cite{Zhou2008}, \cite{Bittanti}. As a consequence, a Lyapunov based approach is proposed for global asymptotic stability analysis and control design for both LTP systems and bilinear affine systems subject to periodic exogenous inputs and disturbances. The proposed harmonic framework allows the design of globally and asymptotically stabilizing periodic control laws suitable for both tracking and periodic disturbance rejection. Finally, in Section VI, we apply the control design approach proposed in this paper to a single-phase rectifier bridge, an electronic device that converts AC supply voltage to DC one. Through this example, we illustrate how one can design stabilizing control laws that guarantee periodic disturbance rejection and low harmonic content. We end the paper by a conclusion and perspectives.

{\bf Notation:} The transpose of a matrix $A$ is denoted $A'$ and $A^*$ denote the complex conjugate transpose $A^*=\bar A'$. The $n$-dimensional identity matrix is denoted $Id_n$. $A \otimes B$ is the Kronecker product of two matrices $A$ and $B$. The space of piecewise $k$-differentiable and continuous functions is denoted $C^k_{pw}$. $C^a$ is the space of absolutely continuous function and $L^{p}$ the Lebesgues spaces for $1\leq p\leq\infty$. $x(t^-)$ and $x(t^+)$ denote respectively the left and right limits of $x$ at time $t$. The notation $a.e.$ means almost everywhere. 

\section{Preliminaries on functional space and sliding Fourier Decomposition}

For a given interval $[a,b]$, we recall that the functional space $L^2([a,b],\mathbb{C}^n)$ is the space of measurable and square integrable functions on $[a,b]$ with values in $\mathbb{C}^n$
endowed with the scalar product $$<x,y>_{L^2([a,b],\mathbb{C}^n)}=\frac{1}{b-a}\int_{a}^bx(\tau)^*y(\tau)d\tau,$$
and of induced norm:
$\|x\|_{L^2([a,b],\mathbb{C}^n)} =<x,x>_{L^2([a,b],\mathbb{C}^n)}^{\frac{1}{2}}.$

We also consider the space of complex square summable sequences: $$\ell^2(\mathbb{C}^n)=\{ X : k\in\mathbb{Z}\mapsto X_k\in \mathbb{C}^n\text{ such that} \sum_{k=-\infty}^{+\infty} X_k^*X_k<+\infty\},$$ endowed with the scalar product $$<X,Y>_{\ell^2(\mathbb{C}^n)}=\sum_{k=-\infty}^{+\infty}X_k^*Y_k,$$ and of induced norm:
$\|X\|_{\ell^2(\mathbb{C}^n)}=<X,X>_{\ell^2(\mathbb{C}^n)}^{\frac{1}{2}}$.\\
We recall here two Theorems of particular importance. The first one states that there is an isometry between the spaces $L^2([a,b],\mathbb{C}^n)$ and $\ell^2(\mathbb{C}^n)$. The second one states that the Fourier decomposition defines a one-to-one correspondence (bijective function) between these two spaces.
\begin{theorem}[Parseval's Identity]\label{thparseval} $\forall x\in L^2([a,b],\mathbb{C}^n)$, the following relation holds:$$\|x\|_{L^2([a,b],\mathbb{C}^n)}= \|X\|_{\ell^2(\mathbb{C}^n)},$$ with the components $X_k$ given by 
$$X_k=\frac{1}{b-a}\int_{a}^b x(\tau)e^{-j\omega k \tau}d\tau,$$
and where $\omega =\frac{2\pi}{b-a}$.
\end{theorem}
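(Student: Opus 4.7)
The plan is to reduce to the scalar case, establish that the normalized exponentials form an orthonormal family, derive Bessel's inequality to guarantee summability, and finally upgrade to equality via a density/completeness argument.

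First I would reduce to $n=1$. Writing $x=(x_1,\ldots,x_n)'$, the normalized inner product on $L^2([a,b],\mathbb{C}^n)$ decomposes as $\|x\|_{L^2}^2=\sum_i\|x_i\|_{L^2}^2$ since the integrand $x^*x$ is the sum of $|x_i|^2$. The Fourier coefficient formula applies entrywise, so $X_k=((X_k)_1,\ldots,(X_k)_n)'$ with each component being the scalar Fourier coefficient of $x_i$, and $\|X\|_{\ell^2}^2=\sum_i\sum_k|(X_k)_i|^2$. Exchanging the two sums reduces the identity to the scalar statement for each $x_i$.

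Next, with $\omega=2\pi/(b-a)$, set $e_k(t)=e^{j\omega k t}$. I would verify orthonormality directly: for $k\neq l$, $\langle e_k,e_l\rangle_{L^2}=\frac{1}{b-a}\int_a^b e^{j\omega(l-k)\tau}d\tau=0$ because the integrand is a complete-period complex exponential, and $\langle e_k,e_k\rangle_{L^2}=1$. Hence $X_k=\langle e_k,x\rangle_{L^2}$ is precisely the Fourier coefficient produced by the orthonormal projection of $x$ onto $\mathrm{span}(e_k)$. From the projection identity for the partial sums $S_N x=\sum_{|k|\leq N}X_k e_k$ one obtains
\[
\|x-S_N x\|_{L^2}^2=\|x\|_{L^2}^2-\sum_{|k|\leq N}|X_k|^2\ge 0,
\]
which gives Bessel's inequality $\sum_k|X_k|^2\le\|x\|_{L^2}^2$, already ensuring that $X\in\ell^2(\mathbb{C})$.

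The main obstacle is promoting Bessel's inequality to equality, which requires the completeness (totality) of $\{e_k\}_{k\in\mathbb{Z}}$ in $L^2([a,b],\mathbb{C})$. My plan would be the classical two-step density argument: first approximate $x\in L^2$ by a continuous function $\tilde x$ with $\tilde x(a)=\tilde x(b)$ (density of such functions in $L^2$ on a bounded interval is a standard consequence of Lusin/Urysohn type constructions), then extend $\tilde x$ to a continuous $(b-a)$-periodic function on $\mathbb{R}$ and invoke Fejér's theorem, which asserts that the Cesàro means of its Fourier series converge uniformly — hence in $L^2$-norm — to $\tilde x$. Since trigonometric polynomials belong to $\mathrm{span}\{e_k\}$ and $S_N x$ is the best $L^2$-approximation of $x$ in that subspace, uniform approximability forces $\|x-S_N x\|_{L^2}\to 0$. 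Letting $N\to\infty$ in the Pythagorean identity above yields $\|x\|_{L^2}^2=\sum_k|X_k|^2$, and combining with the componentwise reduction completes the proof.
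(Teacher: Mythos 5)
Your proof is correct: the reduction to the scalar case, the orthonormality computation, Bessel's inequality via the Pythagorean identity for partial sums, and the upgrade to equality through density of continuous periodic functions plus Fej\'er's theorem together constitute the standard complete argument. Note, however, that the paper does not prove this statement at all --- it is explicitly \emph{recalled} as a classical theorem (alongside Riesz--Fischer) and used as a black box --- so there is no in-paper proof to compare against; your argument is simply the textbook proof of the result being cited.
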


\begin{theorem}[Riesz-Fischer's Theorem]\label{riesz}
A function $x$ is square integrable if and only if the corresponding Fourier series defined by: 
$$y(t)=\sum_{k=-\infty}^{+\infty} X_k e^{j \omega k t}$$
converges to $x$ in the space $L^2$.
\end{theorem}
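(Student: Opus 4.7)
The plan is to establish the equivalence by combining the orthonormality of the exponentials $\{e^{j\omega k t}\}_{k\in\mathbb{Z}}$ on $[a,b]$ with Parseval's identity (Theorem \ref{thparseval}), which reduces the $L^2$ convergence question to a statement about the coefficient tails in $\ell^2$. The underlying philosophy is that Parseval already encodes the ``hard'' content, and Riesz--Fischer in this form becomes its norm-convergent counterpart.

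For the forward direction, I would assume $x\in L^2([a,b],\mathbb{C}^n)$, introduce the partial sums $S_N(t)=\sum_{k=-N}^{N} X_k e^{j\omega k t}$ with $X_k$ given by the integral formula in Theorem \ref{thparseval}, and exploit the orthonormality $\langle e^{j\omega k t},e^{j\omega \ell t}\rangle_{L^2}=\delta_{k\ell}$, which is a direct calculation from the scalar product given in the preliminaries. Expanding $\|x-S_N\|_{L^2}^2$ and cancelling cross terms produces the Bessel-type identity
$$\|x-S_N\|_{L^2}^2=\|x\|_{L^2}^2-\sum_{k=-N}^{N}|X_k|^2.$$
Parseval's identity then forces the right-hand side to tend to $0$ as $N\to\infty$, which is exactly the claimed $L^2$ convergence of the Fourier series to $x$.

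For the converse, I would observe that each partial sum $S_N$ is a trigonometric polynomial, hence bounded on the compact interval $[a,b]$ and therefore automatically in $L^2([a,b],\mathbb{C}^n)$. If $S_N\to x$ in the $L^2$ norm, then $(S_N)$ is a Cauchy sequence whose limit, by completeness of $L^2([a,b],\mathbb{C}^n)$, must again lie in $L^2([a,b],\mathbb{C}^n)$. This direction is essentially a completeness remark once one grants that the statement ``the corresponding Fourier series converges to $x$'' presupposes a well-defined sequence $(X_k)$ acting as coefficients.

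The main obstacle is conceptual rather than computational: Parseval's identity, as stated in Theorem \ref{thparseval}, already encapsulates the deep content of Riesz--Fischer, namely the density of trigonometric polynomials in $L^2([a,b],\mathbb{C}^n)$. With Parseval granted as a black box, the proof reduces to the orthonormal expansion above; without it, one would need a separate density argument, typically Stone--Weierstrass on $C([a,b])$ together with the standard density of continuous functions in $L^2$. I would therefore treat Parseval as the key available tool, make the short algebraic step from equality of norms to norm convergence of the partial sums fully explicit, and simply note the completeness argument for the reverse direction.
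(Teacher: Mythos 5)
The paper states this theorem as a recalled classical result and offers no proof of its own, so there is nothing internal to compare against; your argument must be judged on its own merits. It is correct as far as it goes, and it is the standard route: orthonormality of the exponentials under the normalized scalar product gives $\|x-S_N\|_{L^2}^2=\|x\|_{L^2}^2-\sum_{|k|\leq N}X_k^*X_k$, and Parseval (Theorem~\ref{thparseval}) makes the right-hand side vanish in the limit, which is the forward implication. Your converse is, as you sense, essentially tautological: the phrase ``converges to $x$ in the space $L^2$'' already places the limit in $L^2$, so completeness of $L^2([a,b],\mathbb{C}^n)$ is the only substantive ingredient. The one point worth making explicit is the caveat you raise yourself: in most developments Parseval's identity is \emph{derived from} the totality of the trigonometric system (via Stone--Weierstrass and density of continuous functions in $L^2$), so taking Parseval as a black box to prove Riesz--Fischer inverts the usual logical order. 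Within this paper that is harmless, since Theorem~\ref{thparseval} is itself stated without proof as an independent premise, but a self-contained proof would need the density argument you mention, together with the completeness of $L^2$ to handle the genuinely nontrivial half of the classical Riesz--Fischer statement, namely that \emph{every} $\ell^2$ sequence is realized as the coefficient sequence of some $L^2$ function --- a surjectivity claim that the paper's ``if and only if'' phrasing quietly suppresses and that your proof, accordingly, does not address.
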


In the sequel, we will often consider the following functional spaces.
 \begin{definition}For a given $p$, $1\leq p\leq +\infty$,
 a function $x$ is a locally $p$ integrable function denoted by $x\in L^p_{loc}(\mathbb{R},\mathbb{C}^n)$ if for any compact set $A$ of $\mathbb{R}$, its restriction $x|_A\in L^p(A,\mathbb{C}^n)$.
\end{definition}

%

We also recall the definition of the sliding Fourier decomposition over a window of length $T$:
\begin{definition}The sliding Fourier decomposition over a window of length $T$ from $ L^{2}_{loc}(\mathbb{R},\mathbb{C}^n)$ to $L^{\infty}_{loc}(\mathbb{R},\ell^2(\mathbb{C}^n))$ is defined by:
\begin{align*}
\mathcal{F}_T: \ &L^{2}_{loc}(\mathbb{R},\mathbb{C}^n) \rightarrow L^{\infty}_{loc}(\mathbb{R},\ell^2(\mathbb{C}^n))\\
&x\mapsto X 
\end{align*}
where the time varying infinite sequence $X$ is defined by:
$$t\mapsto \mathcal{F}_T(x)(t)=X(t)$$
whose ($n$-dimensional) components $X_k(t)$, $k\in\mathbb{Z}$ satisfy:
\begin{align*}X_k(t)& =\frac{1}{T}\int_{t-T}^t x(\tau)e^{-j\omega k \tau}d\tau.
\end{align*}
 \end{definition}

In this definition, the time varying component $X_k$ is often called $k-$th phasor of the signal $x$. These phasors can be viewed as the result of a sliding projection of $x$ on the Fourier basis. 
It can be noticed that if $x\in L_{loc}^{2}(\mathbb{R},\mathbb{C}^n)$, the restriction of $x$ to the interval $[t-T,t]$ denoted by $x|_{ [t-T,t]}$ belongs obviously to $L^2([t-T,t],\mathbb{C}^n)$.
Moreover, $X$ is well defined in $L^{\infty}_{loc}(\mathbb{R},\ell^2(\mathbb{C}^n))$ as the functions $X_k$ are absolutely continuous with respect to $t$ (thus bounded on every compact set) and since, for any fixed $t$, the sequence of terms $X_k(t)$ is square summable (see Parseval's Identity~\ref{thparseval}).

Note also that the definition of the sliding Fourier decomposition can be extended to more general signals belonging to the space $L^{1}_{loc}(\mathbb{R}, \mathbb{C}^n)$ but for which the space of arrival is not any more $L^{\infty}_{loc}(\mathbb{R},\ell^2(\mathbb{C}^n))$ but $L^{\infty}_{loc}(\mathbb{R},\ell^{\infty}(\mathbb{C}^n))$.

In the sequel and for a given $p=1,2$ or $\infty$, by abuse of language, we will say that a function defined on $\mathbb{R}$ belongs to $L^p(I,\mathbb{C}^n)$ if its restriction to the interval $I$, $x|_{ I}\in L^p(I,\mathbb{C}^n)$.
Moreover, in order to simplify the notations, $L^p([a,b])$ or $L^p$ will be often used instead of $L^p([a,b],\mathbb{C}^n)$. 
Thus, for example, $x\in L^2([a,b])$ means $x \in L^2([a,b],\mathbb{C}^n)$.
\section{Coincidence Condition and Inverse Sliding Fourier decomposition}
Before detailing the contributions related to harmonic control, we present in this section a preliminary theoretical result of particular importance. It concerns conditions under which a one-to-one correspondence can be established between time domain signals and harmonic domain ones. This result has a strong impact on the contributions related to solutions of general dynamical systems, their stability analysis and the design of stabilizing harmonic control laws.

Let $X\in L_{loc}^{\infty}(\mathbb{R},\ell^2)$. By application of Riesz-Fisher's Theorem (Thereom~\ref{riesz}), it is possible for almost every $t$ to consider a function $x_t\in L^2([t-T,t])$ such that for $\tau\in [t-T,t]$,
$$x_t(\tau)=\sum_{k=-\infty}^{+\infty} X_k(t)e^{j\omega k \tau},$$ where
$X_k(t)=\frac{1}{T}\int_{t-T}^t x_t(\tau)e^{-j\omega k \tau}d\tau.$

However, if we consider two time instants $t_1<t_2$ with $t_2-t_1<T$, the functions $x_{t_1}$ and $x_{t_2}$ do not necessarily coincide on their common definition domain, that is we do not have:
$$x_{t_1}(\tau)= x_{t_2}(\tau),$$
for almost all $ \tau\in[t_2-T, \ t_1]$ (see Fig. \ref{f1} and \ref{f2}).
Thus, there does not exist a priori a function $x \in L^{2}_{loc}(\mathbb{R},\mathbb{C}^n)$ such that 
$$X=\mathcal{F}_T(x).$$

 \begin{figure}[h]
\begin{minipage}{0.48\linewidth}

\begin{center}
\includegraphics[width=\linewidth,height=3cm]{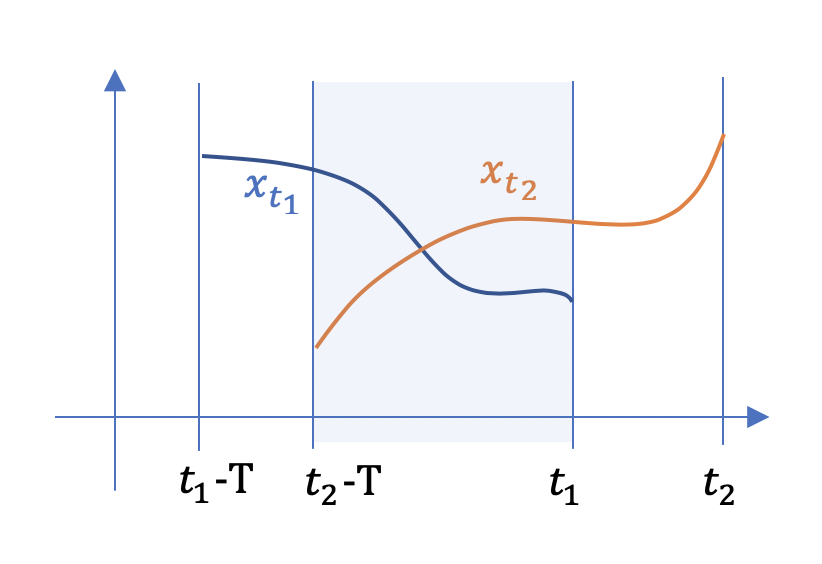}
\caption{The functions $x_{t_1}$ and $x_{t_2}$
do not coincide on the interval $[t_2-T, \ t_1]$}\label{f1}
\end{center}
\end{minipage}\hfill
\begin{minipage}{0.48\linewidth}
\begin{center}
\includegraphics[width=\linewidth,height=3cm]{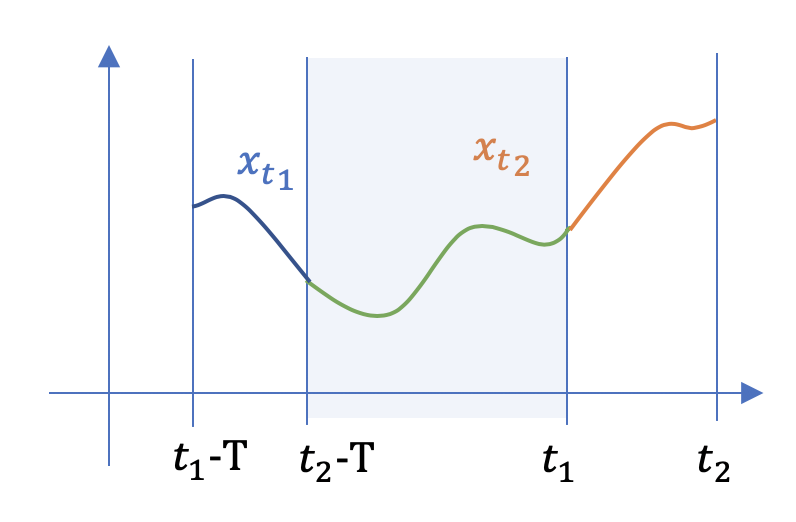}
\caption{Coincidence situation: $x_{t_1}(\tau)= x_{t_2}(\tau)$
for almost all $ \tau\in[t_2-T, \ t_1]$}\label{f2}
\end{center}
\end{minipage}
\end{figure}

We are now in position to define the notion of a representative of $X$ and state the main result of this section.
\begin{definition}Let $X\in L_{loc}^{\infty}(\mathbb{R},\ell^2(\mathbb{C}^n))$. A function $x\in L_{loc}^{2}(\mathbb{R},\mathbb{C}^n)$ is a representative of $X$ if $X=\mathcal{F}_T(x)$.
\end{definition}

\begin{theorem}[Coincidence Condition]\label{coincidence}
There exists a representative $x$ of $X$, with $X\in L_{loc}^{\infty}(\mathbb{R},\ell^2(\mathbb{C}^n))$, if and only if $X$ is absolutely continuous (i.e $X\in C^a(\mathbb{R},\ell^2(\mathbb{C}^n))$ and fulfills for any $k$ the following condition: 
\begin{equation}\dot X_k(t)=\dot X_0(t)e^{-j\omega k t} \ a.e.\label{dphasor2}\end{equation}
\end{theorem}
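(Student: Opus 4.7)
Suppose $X=\mathcal{F}_T(x)$ for some $x\in L^2_{loc}(\mathbb{R},\mathbb{C}^n)$. Since $\tau\mapsto x(\tau)e^{-j\omega k\tau}$ is locally integrable, the Lebesgue differentiation theorem makes each scalar $X_k$ absolutely continuous with, for a.e.\ $t$,
\[
\dot X_k(t)=\tfrac{1}{T}\bigl(x(t)e^{-j\omega kt}-x(t-T)e^{-j\omega k(t-T)}\bigr)=\tfrac{1}{T}\bigl(x(t)-x(t-T)\bigr)e^{-j\omega kt},
\]
the collapse using $e^{j\omega kT}=1$. Specialising to $k=0$ identifies $T\dot X_0(t)$ with $x(t)-x(t-T)$, which is exactly (\ref{dphasor2}). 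Absolute continuity of the $\ell^2$-valued map $X$ follows by applying Parseval on the sliding window: for $0\le t_2-t_1\le T$,
\[
\|X(t_2)-X(t_1)\|_{\ell^2}^2=\tfrac{1}{T}\int_{t_1}^{t_2}|x(\tau)-x(\tau-T)|^2\,d\tau,
\]
which furnishes the required control on increments.

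\textbf{Sufficiency -- building candidates.} Conversely, assume $X\in C^a(\mathbb{R},\ell^2)$ satisfies (\ref{dphasor2}). Since $X(t)\in\ell^2$ for every $t$, Theorem~\ref{riesz} produces, at each $t$, a function $x_t\in L^2([t-T,t])$ with
\[
x_t(\tau)=\sum_{k\in\mathbb{Z}}X_k(t)e^{j\omega k\tau},\qquad \|x_t\|_{L^2}=\|X(t)\|_{\ell^2}.
\]
The plan is to define the representative by $x(\tau):=x_t(\tau)$ for any $t$ with $\tau\in[t-T,t]$. The sole nontrivial point is compatibility: for $t_1<t_2<t_1+T$ one must have $x_{t_1}=x_{t_2}$ a.e.\ on the overlap $[t_2-T,t_1]$.

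\textbf{The compatibility step (main obstacle).} Extend $x_{t_1}$ and $x_{t_2}$ to $T$-periodic functions $\tilde x_{t_1},\tilde x_{t_2}$ on $\mathbb{R}$; on any window of length $T$, the $k$th Fourier coefficient of their difference is
\[
X_k(t_2)-X_k(t_1)=\int_{t_1}^{t_2}\dot X_k(s)\,ds=\int_{t_1}^{t_2}\dot X_0(s)\,e^{-j\omega ks}\,ds,
\]
where absolute continuity gives the first equality and (\ref{dphasor2}) the second. Introduce $g(\tau):=\dot X_0(\tau)\mathbf{1}_{[t_1,t_2]}(\tau)$. Because $[t_1,t_2]$ has length $<T$ and lies inside the window $[t_2-T,t_2]$, the $k$th Fourier coefficient of $g$ on this window equals $\tfrac{1}{T}\bigl(X_k(t_2)-X_k(t_1)\bigr)$. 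Hence $g$ (viewed as an $L^1$ function) and $\tfrac{1}{T}(\tilde x_{t_2}-\tilde x_{t_1})$ share the same Fourier coefficients on $[t_2-T,t_2]$; uniqueness of $L^2$-Fourier expansions (together with $L^1$ uniqueness, which upgrades $g$ to $L^2$) forces $\tilde x_{t_2}-\tilde x_{t_1}=Tg$ a.e.\ on $[t_2-T,t_2]$. Since $g\equiv 0$ on $[t_2-T,t_1]$, the compatibility follows.

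\textbf{Conclusion.} The glued function $x$ is well defined a.e.\ and belongs to $L^2_{loc}(\mathbb{R},\mathbb{C}^n)$: every compact set is covered by finitely many length-$T$ windows, and on each such window $\|x\|_{L^2}=\|X(t)\|_{\ell^2}$ is finite by hypothesis $X\in L^\infty_{loc}(\mathbb{R},\ell^2)$. By construction of each $x_t$, $\mathcal{F}_T(x)(t)=X(t)$ for every $t$, so $x$ is a representative. The decisive insight is that (\ref{dphasor2}) is precisely the relation making the spectrum of $\tilde x_{t_2}-\tilde x_{t_1}$ coincide with that of a function supported strictly inside $[t_1,t_2]$, which is what forces the difference to vanish on the complementary part $[t_2-T,t_1]$ of the period.
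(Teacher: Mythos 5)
Your proof is correct, and while the necessity half coincides with the paper's (the same computation of $\dot X_k$ and specialization to $k=0$), your sufficiency argument takes a genuinely different and arguably cleaner route. The paper differentiates the two-variable series $\bar x(\tau,t)=\sum_k X_k(t)e^{j\omega k\tau}$ with respect to $t$ \emph{in the sense of distributions}, recognizes a Dirac comb $T\dot X_0(t)\Sha_T(\tau-t)$, and concludes that $\partial_t\bar x$ vanishes on the open strip $t-T<\tau<t$, so that $\bar x$ is constant along the lines $t=\tau+h$; this requires term-by-term distributional differentiation of a series that does not converge classically, and a restriction argument for distributions. You instead integrate the hypothesis \eqref{dphasor2} over $[t_1,t_2]$ to identify $X_k(t_2)-X_k(t_1)$ with ($T$ times) the $k$-th Fourier coefficient, on the window $[t_2-T,t_2]$, of the function $g=\dot X_0\,\mathbf 1_{[t_1,t_2]}$, and then invoke uniqueness of Fourier coefficients for $L^1$ functions to force $\tilde x_{t_2}-\tilde x_{t_1}=Tg$, hence coincidence on $[t_2-T,t_1]$ where $g$ vanishes. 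This stays entirely within classical Fourier analysis (only $\dot X_0\in L^1_{loc}$, guaranteed by absolute continuity of $X_0$, is needed) and makes the mechanism transparent: \eqref{dphasor2} says exactly that the spectral increment of $X$ is the spectrum of a function supported in $[t_1,t_2]$. What the paper's route buys in exchange is the explicit jump formula $\sigma(\tau+kT)=T\dot X_0(\tau+kT)$ recorded in its Remark, which your argument also yields (as $Tg$ on $[t_1,t_2]$) but does not emphasize. Two minor caveats, neither fatal: your claim that the Parseval identity for increments "furnishes" absolute continuity of the $\ell^2$-valued map should not be read as Bochner-sense absolute continuity --- by Proposition~\ref{derive} the derivative $\dot X(t)$ is not in $\ell^2$, so only componentwise absolute continuity holds (the paper shares this imprecision, and componentwise is all that \eqref{dphasor2} requires); and the final gluing "for any $t$ with $\tau\in[t-T,t]$" should be pinned down by a countable covering (or by the paper's formula $x(\tau)=\bar x(\tau,\tau+h)$) to produce a single measurable function, though the pairwise compatibility you proved is precisely the content needed.
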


\begin{proof}
Let us prove the necessity part of this Theorem.
For any $x\in L_{loc}^{2}(\mathbb{R},\mathbb{C}^n)$, one can define its sliding Fourier decomposition $X=\mathcal{F}_T(x) \in L_{loc}^{\infty}(\mathbb{R},\ell^2)$
with phasors: 
$$X_k(t)=\frac{1}{T}\int_{t-T}^t x(\tau)e^{-j\omega k \tau}d\tau.$$

Obviously, $X_k$ is absolutely continuous (defined by an integral) and therefore differentiable almost everywhere: 
\begin{align}
\dot X_k(t)&=\frac{1}{T}(x(t)e^{-j\omega k t}-x(t-T)e^{-j\omega k (t-T)})\nonumber\\
&=\frac{1}{T}(x(t)-x(t-T))e^{-j\omega k t} \ a.e. \label{dphasor}
\end{align}
In particular, for $k=0$,
\begin{equation*}\dot X_0(t)=\frac{1}{T}(x(t)-x(t-T)) \ a.e. 
\end{equation*} 
As a result, the relationship between phasors follows:
\begin{align}
\dot X_k(t)&=\dot X_0(t)e^{-j\omega k t} \ a.e. \label{dphasor0}
\end{align}
Let us now prove the sufficiency part.
Let $X\in L_{loc}^{\infty}(\mathbb{R},\ell^2)$ be an absolutely continuous function and such that the relation \eqref{dphasor0} is satisfied.

For any $t$ fixed, by application of Riesz-Fischer's Theorem (Theorem~\ref{riesz}), there is a function $\bar x\in L^2([t-T,t],\mathbb{C}^n)$ such that, for $\tau\in [t-T,t]$,
$$\bar x(\tau,t)=\sum_{k=-\infty}^{+\infty} X_k(t)e^{j\omega k \tau} ,$$ and 
\begin{equation}X_k(t)=\frac{1}{T}\int_{t-T}^t \bar x(\tau,t)e^{-j\omega k \tau}d\tau.\label{co}\end{equation}
As $\bar x(\tau,t)$ is $T-$periodic in its first argument and defined for any $t$, $\bar x$ can be defined on $\mathbb{R}\times \mathbb{R}$:
\begin{align*}
\bar x:&\ \mathbb{R}\times \mathbb{R}\rightarrow \mathbb{C}^n\\
&(\tau,t)\mapsto \bar x(\tau,t)=\sum_{k=-\infty}^{+\infty} X_k(t)e^{j\omega k \tau}.
\end{align*}
Since the $X_k$ are continuous and differentiable almost everywhere, the partial derivative $\frac{\partial}{\partial t} \bar x(\tau,t)$ in the sense of distributions\footnote{Recall that the derivative of a distribution defined by a series is the distribution defined by the sum of the derivative (in the sense of distributions) of each term of the series.} leads to the expressions: 
\begin{align*}
\frac{\partial}{\partial t} \bar x(\tau,t)&=\sum_{k=-\infty}^{+\infty} \dot X_k(t)e^{j\omega k \tau} \\
&=\dot X_0(t)\sum_{k=-\infty}^{+\infty} e^{j\omega k (\tau-t)} \ a.e. 
\end{align*}
Remember that a Dirac comb, $\Sha_T(t)=\sum_{k=-\infty}^{+\infty} \delta_{kT} (t)$ is a $T-$periodic distribution whose Fourier series is given by $\Sha_T(t)=\frac{1}{T}\sum_{k=-\infty}^{+\infty} e^{j\omega k t}$\cite{Schwartz}. 
This Fourier series does not converge in the classical sense but in the sense of distributions and it converges towards a Dirac comb.
Thus, one can write:
\begin{align}
\frac{\partial}{\partial t} \bar x(\tau,t)&=T\dot X_0(t)\sum_{k=-\infty}^{+\infty} \delta_{t+kT} (\tau) \label{jump}\\
&=T\dot X_0(t)\Sha_T(\tau-t) \ a.e. . \nonumber
\end{align}
The restriction of this distribution to the set $\{(\tau,t)\in \mathbb{R}^2: t-T<\tau<t\}$, leads to:
$$\frac{\partial}{\partial t} \bar x(\tau,t)=0 \ a.e.$$

Therefore, for any $h$ such that $\tau<t+h<\tau+T$, we have the relationship (see Fig. \ref{f3}),
 $\bar x(\tau,t)=\bar x(\tau,t+h).$

 \begin{figure}[h]
\begin{center}
\includegraphics[width=7cm,height=4cm]{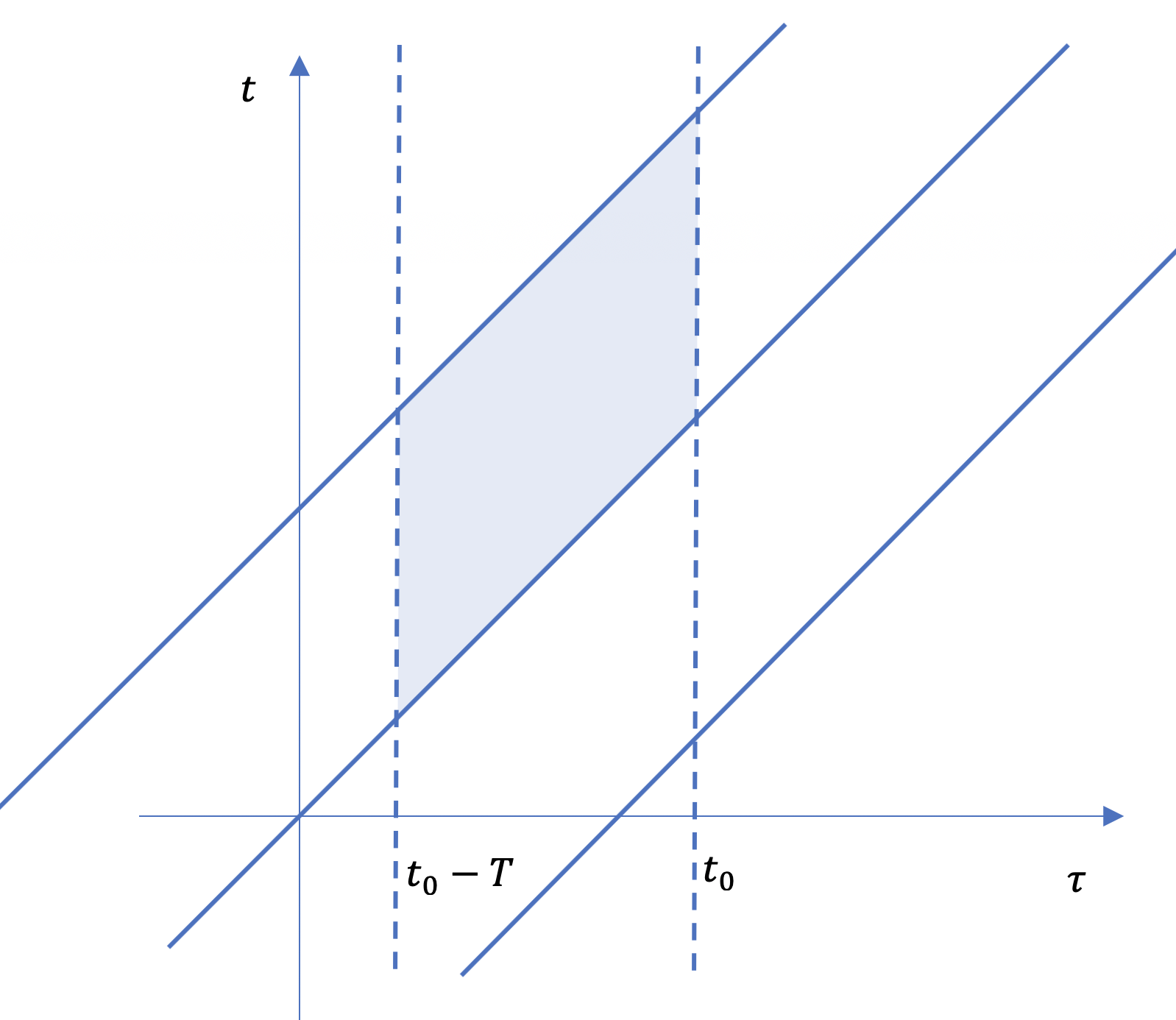}
\caption{The blue parallelogram represents the domain where $\bar x(\tau,t_0)=\bar x(\tau,t_0+h)$ for $t_0-T<\tau<t_0$.}\label{f3}
\end{center}
\end{figure}

Considering the lines parameterized by $h$, $\tau\mapsto t=\tau+h$ with $0<h<T$, strictly included in the set $\{(\tau,t)\in \mathbb{R}^2: t-T<\tau<t\}$, we can define, by overlapping and extension (see Fig. \ref{f5}), a function $x$ defined on $\mathbb{R}$ by: 
\begin{align}
x(\tau)=\bar x(\tau,\tau+h)=\sum_{k=-\infty}^{+\infty} X_k(\tau+h)e^{j\omega k \tau},\label{noncausal}
\end{align}
and such that $X_k(\tau)=\frac{1}{T}\int_{\tau-T}^\tau x(u)e^{-j\omega k u}du.$
This last relation is obtained from equation~\eqref{co} and since $x(u)=\bar x(u,u+h)=\bar x(u,\tau)$ for any $u$ such that $\tau-T<u<\tau$.

Moreover, notice that for all $h$, $0<h<T$, the function $x$ is uniquely defined and independent of the choice of $h$ as $\bar x(\tau,\tau+h_1)=\bar x(\tau,\tau+h_2)$ for all $h_1$ and $h_2$ such that $0<(h_1,h_2)<T$.
Consequently, we have proved by construction that there exists a function $x\in L^2_{loc}$ such that:
$X=\mathcal{F}_T(x).$ 
\end{proof}
\begin{figure}
\begin{center}
\includegraphics[width=7cm,height=4.5cm]{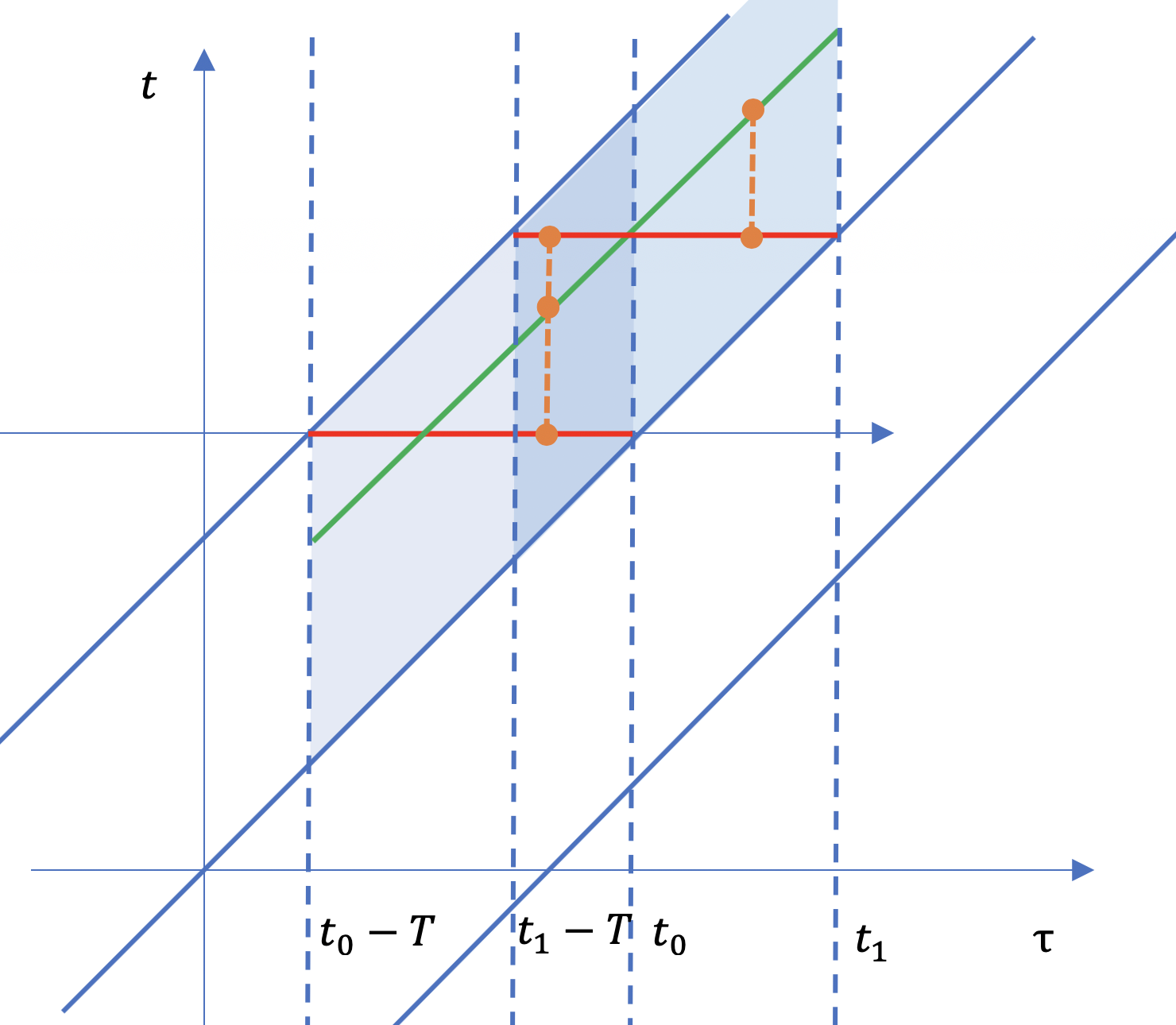}
\caption{Construction of $x(\tau)=\bar x(\tau,\tau+h)$ (green segment) by overlapping and extension. The values of $\bar x$ on the red segments correspond two by two to the values of $\bar x$ on the green segment, therefore to the ones of $x$. }\label{f5}
\end{center}
\end{figure}

\begin{remark}	In the previous proof, from equations~\eqref{dphasor0} and \eqref{jump}, one can notice that for a given $\tau$, the function $\bar x(\tau,\cdot)$ appears as a stair function with jumps $\sigma$ at time $t=\tau+kT$, $k\in\mathbb{Z}$ given by: $$\sigma(\tau+kT)=T\dot X_0(\tau+kT)=x(\tau+kT)-x(\tau+(k-1)T)\ a.e.$$ 
\end{remark}

Let us now show the impact of this result on the punctual convergence of the sliding Fourier series of a piecewise continuous function. The following proposition establishes the appropriate formulas to be used to reconstruct a signal from its sliding Fourier decomposition.

\begin{proposition}[punctual convergence]\label{rec}If $x\in C_{pw}^1$(or $C_{pw}^0$ with bounded variations), then the following reconstruction formulas are equivalent:
\begin{align}
x(t)&=2\sum_{p=-\infty}^{+\infty} X_p(t)e^{j\omega p t}-x(t-T),\label{form1}\\
x(t)&=\sum_{p=-\infty}^{+\infty} X_p(t+h)e^{j\omega p t},\ 0<h<T\label{form3}\\
x(t)&=\sum_{p=-\infty}^{+\infty} X_p(t)e^{j\omega p t}+\frac{T}{2}\dot X_0(t), \label{form2}\end{align} except at points of discontinuity of $x$ for which left and right limits exist.
In addition, if $x\in C^0$, the equalities \eqref{form1}, \eqref{form3} and \eqref{form2} hold everywhere.
\end{proposition}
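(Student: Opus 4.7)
The backbone of the proof is Dirichlet's classical pointwise convergence theorem for Fourier series of piecewise $C^1$ (or, more generally, bounded-variation) periodic functions. For a fixed time $t$, the series
$$\tau\mapsto \sum_{p=-\infty}^{+\infty} X_p(t)\,e^{j\omega p \tau}$$
is precisely the Fourier series of the $T$-periodic extension $\tilde x_t$ of the restriction $x|_{[t-T,t]}$. Under the regularity assumption on $x$, this series converges at every $\tau$ to $\tfrac{1}{2}(\tilde x_t(\tau^-)+\tilde x_t(\tau^+))$.

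To establish formula~\eqref{form1}, I would evaluate at the endpoint $\tau=t$. Since $e^{j\omega p t}=e^{j\omega p(t-T)}$, the wraparound of $\tilde x_t$ at $t$ makes its left limit equal to $x(t^-)$ and its right limit equal to $x((t-T)^+)$. Dirichlet's theorem thus yields
$$\sum_p X_p(t)\,e^{j\omega p t}=\tfrac{1}{2}\bigl(x(t^-)+x((t-T)^+)\bigr),$$
which is exactly~\eqref{form1} at every $t$ where $x$ is continuous at both $t$ and $t-T$. Formula~\eqref{form3} is obtained the same way, but applied to the sliding window $[t+h-T,t+h]$ with $0<h<T$: here $\tau=t$ lies strictly in the interior of the window, so the Dirichlet sum at $\tau=t$ converges to $\tfrac{1}{2}(x(t^-)+x(t^+))$, hence to $x(t)$ at every continuity point of $x$. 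Finally, formula~\eqref{form2} is essentially an algebraic rewriting of~\eqref{form1}: combining the identity $\dot X_0(t)=\tfrac{1}{T}(x(t)-x(t-T))$ proved in Theorem~\ref{coincidence} with~\eqref{form1} shows that $\sum_p X_p(t)e^{j\omega p t}+\tfrac{T}{2}\dot X_0(t)$ collapses to $x(t)$ at every point where $x$ is continuous at $t$ and $t-T$. These algebraic manipulations simultaneously verify the equivalence of the three formulas.

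For the continuous case, if $x\in C^0$ then the hypotheses ``$x$ continuous at $t$ and at $t-T$'' are automatically satisfied for every $t\in\mathbb{R}$, so all three identities hold pointwise everywhere.

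\paragraph{Expected main obstacle.}
The subtle point is formula~\eqref{form1}: although it looks like a pointwise convergence statement for the Fourier series at an interior point, the evaluation is in fact at an endpoint of the window, where the $T$-periodic extension generally has a jump even when $x$ itself is continuous. Correctly identifying the two one-sided limits $x(t^-)$ and $x((t-T)^+)$ at this wrap-around, and tracking how they combine with the boundary term $x(t-T)$ on the right-hand side of~\eqref{form1}, is what distinguishes \eqref{form1} from the ``interior-point'' formula~\eqref{form3} and is where one must be careful to avoid double-counting limits when $x$ has jumps inside the window.
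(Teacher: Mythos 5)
Your proposal is correct and follows essentially the same route as the paper: both invoke the Jordan--Dirichlet pointwise convergence theorem for the windowed Fourier series, identify the endpoint evaluation $\sum_p X_p(t)e^{j\omega p t}=\tfrac{1}{2}\bigl(x(t^-)+x((t-T)^+)\bigr)$ to get \eqref{form1}, use the shifted window (equivalently, the paper's change of variables $(\tau,t)\mapsto(t,t+h)$) for \eqref{form3}, and combine \eqref{form1} with $T\dot X_0(t)=x(t)-x(t-T)$ for \eqref{form2}. Your discussion of the wrap-around jump at the window endpoint is precisely the subtlety the paper's proof (and its Figure~\ref{f6}) is built around.
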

\begin{proof}
If $x\in C_{pw}^1$(or $C_{pw}^0$ with bounded variations), by construction of $X$, Jordan-Dirichlet's Theorem allows to write for any $t$ and every $\tau$ such that $t-T< \tau< t$:
\begin{equation}\frac{x(\tau^-)+ x(\tau^+)}{2}=\sum_{p=-\infty}^{+\infty} X_p(t)e^{j\omega p \tau} ,\label{nc2}\end{equation}
and for $\tau = t$ (using the periodicity of the series): 
\begin{equation}
\frac{x(t^-)+x(t^+-T)}{2}=\sum_{p=-\infty}^{+\infty} X_p(t)e^{j\omega p t}.\label{s1}\end{equation}
Applying the change of variables $(\tau,t)\mapsto (t,t+h)$ with $0<h<T$, we deduce from \eqref{nc2}:
\begin{equation}\frac{x(t^-)+ x(t^+)}{2}=\sum_{p=-\infty}^{+\infty} X_p(t+h)e^{j\omega p t}.\label{nc3}\end{equation}
Equations \eqref{form1} and \eqref{form3} are then deduced from \eqref{s1} and \eqref{nc3} at points of continuity of $x$.
Equation \eqref{s1} is illustrated in Figure~\ref{f6} for a continuous $x$. 
\begin{figure}[h]
\begin{center}
\includegraphics[scale=.22]{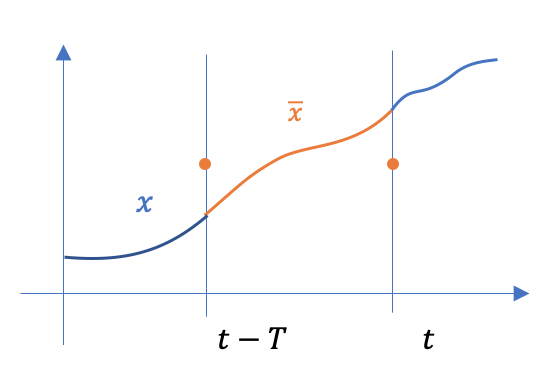}
\caption{The series $\bar x(\tau,t)=\sum_{p=-\infty}^{+\infty} X_p(t)e^{j\omega p \tau}$ for $t-T\leq\tau\leq t$ is plotted in orange and the original signal $x$ in blue. As the series is discontinuous at time $\tau=t$ and takes the value $\frac{x(t^-)+x(t^+-T)}{2}$, it can be observed that clearly, $\bar x(\tau,t)\neq x(t)$ for $\tau=t$.}\label{f6}
\end{center}
\end{figure}
Furthermore, the time derivative of $X_0(t)=\frac{1}{T}\int^t_{t-T}x(\tau)d\tau$ leads to:
\begin{equation}x(t)=T\dot X_0(t)+x(t-T), \label{s2}\end{equation}
everywhere except for a countable number of isolated points for which the right and left limits are well defined. 
Therefore, adding the equations \eqref{s1} and \eqref{s2} leads to formula~\eqref{form2}.
To end the proof, it is obvious that \eqref{form1}, \eqref{form3} and \eqref{form2} hold everywhere if $x\in C^0$.
\end{proof}

It can be noticed that equation \eqref{form2} has the advantage of being causal and it does not imply any delay contrary to equations \eqref{form1} and \eqref{form3}.
Note also that the literature does not mention formulas \eqref{form1} or \eqref{form2} but only formula \eqref{form3} without the strict restriction $0<h<T$. Thus, if $h$ is taken to be equal to zero, the discontinuity of the series makes equation \eqref{form3} erroneous and it must be replaced by formulas \eqref{form1} or \eqref{form2} that take into account the jumps. Otherwise, the Fourier series matches exactly the signal $x$ only if the original signal $x$ is $T$-periodic. One may expect a small gap between $x(t)$ and $x(t-T)$ if $T$ is small enough but this is obviously not the case in general.

Before exploiting the formulas introduced in this section, we introduce the sub-space $H$.
\begin{definition}
The sub-space $H$ of $L^{\infty}_{loc}(\mathbb{R},\ell^2(\mathbb{C}^n))$ is defined as:
\begin{align*}
H=\{X\in C^a(\mathbb{R},\ell^2(\mathbb{C}^n)):&\ \forall k\in\mathbb{Z}, \\
 &\dot X_k(t)=\dot X_0(t)e^{-j\omega k t}\ a.e.\}.
\end{align*}

\end{definition}
The following proposition states that for a.e. $t$, the derivative $\dot X(t)$ of any $X\in H$, is not square summable but only essentially bounded.

\begin{proposition}\label{derive}
If $X\in H$ then for almost all $t$, $\dot X(t)\notin \ell^p$, for $1\leq p<+\infty$ and $\dot X\in L^{2}_{loc}(\mathbb{R},\ell^{\infty})$. Moreover, for any $k$, $\dot X_k$ belongs to $ L_{loc}^{2}(\mathbb{R},\mathbb{C}^n)$.
\end{proposition}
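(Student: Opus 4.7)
The whole proof pivots on the defining relation of $H$, namely $\dot X_k(t)=\dot X_0(t)e^{-j\omega k t}$ a.e., which forces
$$|\dot X_k(t)|=|\dot X_0(t)|\quad \text{for every }k\in\mathbb{Z},\text{ a.e. }t.$$
Thus the sequence $(\dot X_k(t))_{k\in\mathbb{Z}}$ has all components of equal modulus, and the three claims will be read off from this identity.

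For the non-membership in $\ell^p$, I would argue that at a.e. time $t$ where $\dot X_0(t)\neq 0$, the series $\sum_{k\in\mathbb{Z}}|\dot X_k(t)|^p=\sum_{k\in\mathbb{Z}}|\dot X_0(t)|^p$ is a divergent sum of a strictly positive constant over $\mathbb{Z}$, hence $\dot X(t)\notin\ell^p$ for $1\leq p<+\infty$. Conversely, on the (possibly empty) set where $\dot X_0(t)=0$, we have $\dot X(t)=0$ trivially, so the statement has to be understood in the non-degenerate sense, i.e.\ on the support of $\dot X_0$; I would make this explicit in the proof to avoid ambiguity.

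For the second claim, I would use the identity above to note that $\|\dot X(t)\|_{\ell^\infty}=\sup_k|\dot X_k(t)|=|\dot X_0(t)|$, so $\dot X\in L^2_{\mathrm{loc}}(\mathbb{R},\ell^\infty)$ reduces to $\dot X_0\in L^2_{\mathrm{loc}}(\mathbb{R},\mathbb{C}^n)$. To obtain the latter, I would invoke Theorem~\ref{coincidence}: since $X\in H$, there is a representative $x\in L^2_{\mathrm{loc}}(\mathbb{R},\mathbb{C}^n)$, and the computation already carried out in the necessity part of that theorem gives
$$\dot X_0(t)=\tfrac{1}{T}\bigl(x(t)-x(t-T)\bigr)\quad \text{a.e.}$$
Both $x$ and the translate $t\mapsto x(t-T)$ belong to $L^2_{\mathrm{loc}}$, hence so does their difference, which yields $\dot X_0\in L^2_{\mathrm{loc}}$ and completes the second claim after integrating $|\dot X_0(t)|^2$ on arbitrary compacts.

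The third claim is then immediate: for any $k$, $|\dot X_k(t)|=|\dot X_0(t)|$, so $\dot X_k\in L^2_{\mathrm{loc}}(\mathbb{R},\mathbb{C}^n)$ follows from the bound just established. The only real subtlety is the first claim, where one must take care that ``for almost all $t$'' is really ``at every $t$ outside the zero set of $\dot X_0$''; the rest is a direct combination of the modulus identity inherent to $H$ with the representative supplied by the Coincidence Condition.
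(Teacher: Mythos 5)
Your proof is correct and follows essentially the same route as the paper: the modulus identity $|\dot X_k(t)|=|\dot X_0(t)|$ forced by membership in $H$, the representative $x\in L^2_{loc}$ supplied by the Coincidence Condition, and the formula $\dot X_0(t)=\frac{1}{T}(x(t)-x(t-T))$ a.e. Your explicit caveat that the first claim only holds off the zero set of $\dot X_0$ (where $\dot X(t)=0\in\ell^p$ trivially) is a fair and slightly more careful reading than the paper's, which asserts the non-membership without that qualification.
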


\begin{proof}
As $\forall k, \dot X_k(t)=\dot X_0(t)e^{-j\omega k t}\ a.e.$, one has $|\dot X_k(t)|=|\dot X_0(t)|\ a.e.$ which proves that $\dot X(t)\notin \ell^p$, for $1\leq p<+\infty$.
On the other hand, since the existence of $x\in L^2_{loc}$ such that $X=\mathcal{F}_T(x)$ leads to $\dot X_0(t)=\frac{1}{T}(x(t)-x(t-T))\ a.e.$, we can conclude that $\dot X_0$ belongs to $ L_{loc}^{2}(\mathbb{R},\mathbb{C}^n)$ as also for $\dot X_k$. We deduce that $\dot X\in L^{2}_{loc}(\mathbb{R},\ell^{\infty})$ since for any $a$ and $b$, $$\int_a^b |\dot X(t)|^2_{\ell^{\infty}}dt=\int_a^b |\dot X_0(t)|^2dt<+\infty.$$
\end{proof}
We are now in position to define the inverse of the sliding Fourier decomposition and use the formulas introduced in this section.

\begin{definition}The inverse of the Sliding Fourier Decomposition $\mathcal{F}_T$ is defined by the application $\mathcal{F}_T^{-1}$:
\begin{align*}
\mathcal{F}_T^{-1}: \ &H \rightarrow L^{2}_{loc}(\mathbb{R},\mathbb{C}^n) \\
&X\mapsto x \end{align*}
where $x$ is defined at any time $t$ by the non causal formula:
\begin{align}
x(t)&=\mathcal{F}_T^{-1}(X)(t)=\sum_{p=-\infty}^{+\infty} X_p(t+h)e^{j\omega p t}, \label{ver1}
\end{align}
for any $0<h<T$,
or alternatively by the causal formula, when $X$ has a representative of class $C^1_{pw}$(or $C_{pw}^0$ with bounded variations):
\begin{align}
x(t)&=\mathcal{F}_T^{-1}(X)(t)=\sum_{p=-\infty}^{+\infty} X_p(t)e^{j\omega p t}+\frac{T}{2}\dot X_0(t).\label{ver2}
\end{align}
 \end{definition}

\begin{proposition}\label{bij} Let $X\in L^{\infty}_{loc}(\mathbb{R},\ell^2(\mathbb{C}^n))$.
$X\in H$ if and only if $X=\mathcal{F}_T(\mathcal{F}_T^{-1}(X)).$
\end{proposition}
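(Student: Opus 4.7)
The statement is essentially a repackaging of Theorem~\ref{coincidence} together with the construction used in its proof, so the plan is to reduce both implications to that theorem.

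For the forward direction, I would start from $X\in H$, i.e.\ $X\in C^a(\mathbb{R},\ell^2(\mathbb{C}^n))$ together with the coincidence condition $\dot X_k(t)=\dot X_0(t)e^{-j\omega k t}$ a.e. By the sufficiency part of Theorem~\ref{coincidence}, there exists a representative $x\in L^2_{loc}(\mathbb{R},\mathbb{C}^n)$ with $X=\mathcal{F}_T(x)$. The key observation is that the function $x$ produced in that proof is exactly $x(\tau)=\bar x(\tau,\tau+h)=\sum_{p=-\infty}^{+\infty}X_p(\tau+h)e^{j\omega p\tau}$ for any $0<h<T$, i.e.\ formula~\eqref{ver1}. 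Hence $x$ coincides with $\mathcal{F}_T^{-1}(X)$ as defined, and applying $\mathcal{F}_T$ gives $X=\mathcal{F}_T(\mathcal{F}_T^{-1}(X))$. This step relies on noting that $\mathcal{F}_T^{-1}$ is well defined on all of $H$ (Theorem~\ref{coincidence} guarantees the convergence of the series and the independence from $h$).

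For the converse, I would observe that the very writing of the equality $X=\mathcal{F}_T(\mathcal{F}_T^{-1}(X))$ requires $\mathcal{F}_T^{-1}(X)$ to be meaningful; setting $x:=\mathcal{F}_T^{-1}(X)\in L^2_{loc}(\mathbb{R},\mathbb{C}^n)$, the assumption says that $X$ is the sliding Fourier decomposition of the $L^2_{loc}$ function $x$. By the necessity part of Theorem~\ref{coincidence}, any such $X$ lies in $C^a(\mathbb{R},\ell^2(\mathbb{C}^n))$ and satisfies the coincidence condition $\dot X_k(t)=\dot X_0(t)e^{-j\omega k t}$ a.e., i.e.\ $X\in H$.

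I do not foresee any serious obstacle: the content is bookkeeping on top of Theorem~\ref{coincidence}. The only point requiring a small justification is the identification, in the forward direction, of the representative constructed abstractly in the proof of Theorem~\ref{coincidence} with the explicit closed-form expression~\eqref{ver1}; this follows directly from the relation $x(\tau)=\bar x(\tau,\tau+h)$ established there, together with the $h$-independence of $\bar x(\tau,\tau+h)$ for $0<h<T$ proved via the vanishing of $\partial_t\bar x$ on the strip $\{t-T<\tau<t\}$.
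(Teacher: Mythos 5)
Your proposal is correct and follows essentially the same route as the paper: both directions are reduced to Theorem~\ref{coincidence}, with the forward implication resting on the identification of the representative constructed in that theorem's proof with the explicit non-causal formula~\eqref{ver1}, and the converse obtained by noting that $X=\mathcal{F}_T(x)$ for $x=\mathcal{F}_T^{-1}(X)\in L^2_{loc}$ and invoking the necessity part of the coincidence theorem. The only cosmetic difference is that the paper explicitly justifies $x\in L^2_{loc}$ via Riesz--Fischer applied on each window, where you simply assert that $\mathcal{F}_T^{-1}(X)$ is meaningful.
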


\begin{proof}
To prove the sufficiency part, let $x=\mathcal{F}_T^{-1}(X)$ given by \eqref{ver1} or \eqref{ver2}. 
If $X=\mathcal{F}_T(\mathcal{F}_T^{-1}(X))$ then $X=\mathcal{F}_T(x)$. As for a.e. $t$, $X(t)\in \ell^2$, then $x\in L^2({[t-T\ t]})$ (by Riesz Fisher Theorem) and thus $x\in L^2_{loc}$
and it can be concluded that $x$ is (by definition) a representative of $X$. 


To prove the necessity part, notice that if $X$ belongs to $H$, then $X$ has a representative $x\in L^2_{loc}$ as it has been shown in the proof of Theorem~\ref{coincidence}. It is provided by the non causal formula:
 \begin{align*}
x(t)&=\sum_{k=-\infty}^{+\infty} X_k(t+h)e^{j\omega k t} \ a.e.,
\end{align*}
for any $0<h<T$.
The result follows since by construction in the proof of Theorem~\ref{coincidence}, it has been proved that $X=\mathcal{F}_T(x)$.
In addition, if $x$ is $C^1_{pw}$(or $C_{pw}^0$ with bounded variations), formula \eqref{ver1} is equivalent to \eqref{ver2} as stated by Proposition~\ref{rec}. 
\end{proof}
\begin{remark}
For a given function $x \in L^2_{loc}$, if we define $v$ by $v=\mathcal{F}_T^{-1}(\mathcal{F}_T(x))$ using the non causal formula \eqref{ver1}, Riesz-Fisher's Theorem allows to conclude that the series $v$ converges to $x$ in the $L^2([t-T,t],\mathbb{C}^n)$ sense. Without more assumptions on the regularity of $x$, it is not obvious to prove that the causal formula \eqref{ver2} leads to the same result. 

\end{remark}

\section{Harmonic modeling}

\subsection{General case: Carathéodory systems}

We consider nonlinear dynamical systems described by:
\begin{equation}
\dot x=f(t,x) \label{dg}, \quad x(0) = x_0
\end{equation}
with the following assumption.

\begin{assumption} \label{as1}The differential equation \eqref{dg} admits solutions in the Carathéodory sense\footnote{see e.g. Chapter 19, p.p. 530-533 in \cite{Poznyak}} and for any $x\in L^2_{loc}(\mathbb{R},\mathbb{C}^n)$, the time function $f(t,x(t))$ belongs to $L^1_{loc}(\mathbb{R},\mathbb{C}^n)$. 
\end{assumption}

\begin{theorem}\label{CNS} Under Assumption \ref{as1}, if $x$ is a solution of the differential equation \eqref{dg} in the Carathéodory sense, then, for any $T>0$, $X=\mathcal{F}_T(x)$ is a solution of:
\begin{equation}
		\dot X=\mathcal{F}_T(f(t,x))(t)-\mathcal{N}X, \quad
		X(0)=\mathcal{F}_T(x)(0)
\label{hdg}
\end{equation}
with $\mathcal{N}=diag(j\omega k\ Id_n,\ k\in \mathbb{Z})$. Reciprocally, if $X\in H$ is a solution of \eqref{hdg}, then its representative $x$ (i.e. $X=\mathcal{F}_T(x))$ is a solution of \eqref{dg} and $x=\mathcal{F}_T^{-1}(X)$.
In addition, for any $k\in\mathbb{Z}$, the phasors $X_k \in C^1(\mathbb{R},\mathbb{C}^n)$ and $\dot X\in C^0(\mathbb{R},\ell^{\infty}(\mathbb{C}^n))$.
Moreover, if the solution $x$ is unique for the initial condition $x_0$, then $X$ is unique for the initial condition $X(0)=\mathcal{F}_T(x)(0)$. The reciprocal is not generally true.

\end{theorem}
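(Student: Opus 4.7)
The plan is to handle the two implications by integration by parts one way and a Fourier-completeness argument the other, and then read off the regularity and uniqueness statements as consequences.

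For the direct implication, I would rely on the fact that a Carathéodory solution of~\eqref{dg} is absolutely continuous with $\dot x = f(t,x)$ a.e., which makes integration by parts on $[t-T,t]$ legitimate. Computing $(\mathcal{F}_T(\dot x))_k(t)$ this way yields
\[(\mathcal{F}_T(\dot x))_k(t) = \frac{1}{T}\bigl[x(t)e^{-j\omega k t} - x(t-T)e^{-j\omega k (t-T)}\bigr] + j\omega k X_k(t),\]
and the proof of Theorem~\ref{coincidence} has already identified the bracketed term as $\dot X_k(t)$ through~\eqref{dphasor}. Rearranging gives $\dot X_k = (\mathcal{F}_T(f(\cdot,x(\cdot))))_k - j\omega k X_k$, which is exactly~\eqref{hdg}.

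For the reciprocal, first Theorem~\ref{coincidence} supplies a representative $x \in L^2_{loc}$ with $X = \mathcal{F}_T(x)$; the task is to show that this $x$ actually solves~\eqref{dg}. The trick I plan to use is to multiply the $k$-th component of~\eqref{hdg} by $e^{j\omega k t}$ so that the left-hand side becomes the clean derivative $\frac{d}{dt}(X_k(t)\,e^{j\omega k t})$. Integrating from $0$ to $t$ and rewriting both sides via the change of variable $u = \tau - t$ (so each integral becomes a Fourier coefficient on $[-T,0]$), Fubini's theorem — applicable because $f(\cdot,x(\cdot)) \in L^1_{loc}$ by Assumption~\ref{as1} — should yield
\[\frac{1}{T}\int_{-T}^{0}\biggl[x(u+t) - x(u) - \int_{0}^{t} f(u+s,x(u+s))\,ds\biggr] e^{-j\omega k u}\,du = 0\]
for every $k \in \mathbb{Z}$. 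Fourier completeness on $[-T,0]$ then forces the bracket to vanish a.e.\ in $u$, and selecting one admissible $u$ arbitrarily close to $0$ delivers the Carathéodory integral equation on $[0,\infty)$; the identity $x = \mathcal{F}_T^{-1}(X)$ follows from the definition in the preceding section.

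Regularity, uniqueness, and the anticipated obstacle come last. Since $F_k(t) := (\mathcal{F}_T(f(\cdot,x(\cdot))))_k(t)$ is continuous as a sliding integral of an $L^1_{loc}$ function, the right-hand side of~\eqref{hdg} is continuous, so $X_k \in C^1$; the claim $\dot X \in C^0(\mathbb{R},\ell^\infty)$ then drops out from $|\dot X_k(t)| = |\dot X_0(t)|$ granted by Theorem~\ref{coincidence}. For uniqueness, two solutions $X^1,X^2$ of~\eqref{hdg} sharing the same $X(0)$ have representatives that coincide on $[-T,0]$ (Parseval applied to $X^1(0) - X^2(0) = 0$), in particular at $t=0$, so the assumed Carathéodory uniqueness of~\eqref{dg} forces $X^1 = X^2$ throughout; the converse fails because the scalar $x_0$ is far less data than the phasor sequence $X(0)$, so harmonic uniqueness for a particular $X(0)$ cannot account for every $x_0$-compatible time-domain trajectory. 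The main obstacle I foresee is the Fubini/Fourier step of the reciprocal: what pops out naturally is an "a.e.\ in $u$" identity, whereas a genuine Carathéodory equation on a fixed interval starting at $0$ is required, so one must extract a single good $u$ and use continuity of $v \mapsto x(u) + \int_u^v f(s,x(s))\,ds$ to pin down an absolutely continuous version of $x$ on $[0,\infty)$.
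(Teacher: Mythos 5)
Your proposal is correct; the forward implication and the uniqueness discussion essentially match the paper's, but your reciprocal takes a genuinely different route, so a comparison is worthwhile. For the forward direction the paper differentiates the phasor integral under the integral sign (with an explicit domination argument), whereas you integrate $\mathcal{F}_T(\dot x)_k$ by parts and reuse the jump formula \eqref{dphasor} for $\dot X_k$ together with $\dot x=f(t,x)$ a.e.; the two computations are equivalent and yours is the shorter one. For the reciprocal, the paper introduces a primitive $z$ of $f(\cdot,x(\cdot))$, integrates $\langle f(t,x)\rangle_k$ by parts, and identifies Fourier coefficients to conclude $z=x+z_0$ a.e., which delivers in one stroke an absolutely continuous version of $x$ satisfying \eqref{dg}. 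You instead integrate the harmonic ODE in time, apply Fubini, and invoke $L^1$ Fourier completeness on $[-T,0]$ to force the bracketed Carath\'eodory defect to vanish for a.e.\ $u$; this is conceptually more direct, but, as you correctly flag yourself, it leaves an ``a.e.\ in $u$ for each fixed $t$'' statement that must still be upgraded (a second Fubini in $(u,t)$, or a countable dense set of $t$'s plus continuity of $v\mapsto x(u)+\int_u^v f(s,x(s))\,ds$) before a single absolutely continuous representative solving \eqref{dg} on a half-line is extracted --- a step the paper's primitive-based argument sidesteps because $z$ is globally defined from the outset. Both routes consume the same external inputs, namely Theorem~\ref{coincidence} for the existence of the representative and Assumption~\ref{as1} for local integrability of $f(\cdot,x(\cdot))$. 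One caveat you inherit from the paper rather than introduce: the claim $\dot X\in C^0(\mathbb{R},\ell^{\infty})$ is supported in both texts only by $|\dot X_k(t)|=|\dot X_0(t)|$ and continuity of $\dot X_0$, which gives continuity of each component and of the norm $t\mapsto\|\dot X(t)\|_{\ell^{\infty}}$ but not, strictly speaking, norm-continuity of the $\ell^{\infty}$-valued map itself.
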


\begin{proof}
Let us assume that $x$ is a solution of \eqref{dg} in the sense of Carathéodory and defined on $\mathbb{R}$. 
By setting $v=\tau-t$, for every fixed $t$, the phasors are given by:
\begin{align*}
X_k(t)&=\frac{1}{T}\int_{t-T}^t x(\tau)e^{-j\omega k \tau}d\tau\\
&=\frac{1}{T}\int_{-T}^0 x(v+t)e^{-j\omega k (v+t)}dv.
\end{align*}
Since the function $x$ is absolutely continuous, it is therefore differentiable $a.e$, and bounded over any bounded interval. Hence : 
\begin{align*}
\frac{dx(v+t)e^{-j\omega k (v+t)}}{dt}=\frac{dx(v+t)}{dt}&e^{-j\omega k (v+t)}-\cdots\\
&j\omega k\ x(v+t)e^{-j\omega k (v+t)},\end{align*}
for almost all $v\in[-T,0]$ and there is a constant $C$ such that, for any $k$ and for all $v\in[-T,0]$ :
 $$|-j\omega k \ x(v+t)e^{-j\omega k (v+t)}|=\omega k |x(v+t)|\leq C.$$ 

Always for fixed $t$, as $x(v+t)$ is continuous for $v\in[-T,0]$, $x(v+t)\in L^2{[-T\ 0]}$ and by assumption~\ref{as1}, it follows that $g(v+t)=f(v+t,x(v+t))\in L^1{([-T\ 0])}$.
As a result:
$$\left|\frac{dx(v+t)e^{-j\omega k (v+t)}}{dt}\right|\leq |g(v+t)|+C,$$
for almost all $v\in[-T,0]$, which proves that the term $|\frac{dx(v+t)e^{-j\omega k (v+t)}}{dt}|$ is bounded by an integrable function on $[-T\ 0]$. As a consequence, a differentiation under the integral sign can be applied and we have:
\begin{align}
\dot X_k(t)&=\frac{1}{T}\int_{-T}^0 \frac{dx(v+t)e^{-j\omega k (v+t)}}{dt}dv\nonumber\\
&=\frac{1}{T}\int_{-T}^0 \frac{dx(v+t)}{dt}e^{-j\omega k (v+t)}dv-\cdots\nonumber\\
&\qquad \quad j\omega k\frac{1}{T}\int_{-T}^0 x(v+t)e^{-j\omega k (v+t)}dv\nonumber\\
&=<f(x,t)>_k(t)-j\omega k X_k(t)\ a.e..\label{p1}
\end{align}
where the notation $<v>_k(t)$ refers to the $k$-th phasor of $v$ at time $t$: $<v>_k(t)=\frac{1}{T}\int_{t-T}^tv(\tau)e^{-j\omega k \tau}d\tau$.

Let: \begin{align}\mathcal{N}&=diag(j\omega k\ Id_n, k\in \mathbb{Z})\label{N}\\
&=\left[
\begin{array}{ccccc}
\ddots & & \vdots & &\udots \\ & -j\omega\ Id_n & 0 & 0 & \\
\cdots & 0& 0 & 0 & \cdots \\
 & 0 & 0 & j\omega\ Id_n& \\
\udots & & \vdots & & \ddots\end{array}\right],\nonumber\end{align} we have:
$$\dot X=\mathcal{F}_T(f(\tau,x(\tau))(t)-\mathcal{N}X\ a.e.,$$
which establishes the dynamic phasor model. 

In addition, $X\in H$ by Theorem~\ref{coincidence} and as $x$ is continuous, for any $k\in\mathbb{Z}$, the phasor 
$X_k(t)=\frac{1}{T}\int_{t-T}^t x(\tau)e^{-j\omega k \tau}d\tau$ 
belongs to $C^1(\mathbb{R},\mathbb{C}^n)$ and the relation \eqref{dphasor2} is satisfied everywhere.
Following similar steps as in the proof of Proposition~\ref{derive}, we have $\dot X \in C^0(\mathbb{R},\ell^{\infty}(\mathbb{C}^n))$.

Conversely, assume that $X \in H$ is a solution of the differential equation:
\begin{equation} \dot X_k(t)=<f(t,x)>_k(t)-j\omega k X_k(t),\ k\in\mathbb{Z}\label{phasor}\end{equation} 
where $x$ denotes a representative of $X$.
First at all, it can be observed that $X$ is necessarily a Carathéodory solution. Indeed, if $X\in H$, then $X$ is an absolutely continuous function and consequently it satisfies the integral equation 
\begin{align*}
X(t)&=X(0)+\int_0^t\mathcal{F}_T(f(\nu,x(\nu))(\tau)-\mathcal{N}X(\tau)d\tau.
 \end{align*}
 

Now, as $X$ admits a representative $x \in L^2_{loc}(\mathbb{R},\mathbb{C}^n)$ such that $X=\mathcal{F}_T(x)$, we have to show that this representative $x$ satisfies the differential equation \eqref{dg} almost everywhere. To this end, let us consider equation~\eqref{phasor} and notice that the phasors: $<f(x,t)>_k(t)$ for any $k$, are well defined for such a $x$ by Assumption \ref{as1}. Let $z$ be a primitive of $f(t,x)$ ($z$ is well defined by Assumption~\ref{as1}) and integrating by parts (of absolutely continuous and a.e. differentiable functions), one gets:
{\small
\begin{align*}
\dot X_k(t)&=<f(t,x)>_k(t)-j\omega k X_k(t)\\
&= \frac{1}{T}\int_{t-T}^t f(\tau,x(\tau))e^{-j\omega k \tau}d\tau-\frac{1}{T}\int_{t-T}^t x(\tau)j\omega ke^{-j\omega k \tau}d\tau\\
&= \frac{1}{T}(z(t)-z(t-T))e^{-j\omega k t}+\dots\\&\frac{1}{T}\int_{t-T}^t z(\tau)j\omega ke^{-j\omega k \tau}d\tau-
\frac{1}{T}\int_{t-T}^t x(\tau)j\omega ke^{-j\omega k \tau}d\tau.
\end{align*}}
As $X\in H$, we have for any $k$:
\begin{equation}\dot X_k(t)=\dot X_0 e^{-j\omega k (t)}=\frac{1}{T}(x(t)-x(t-T))e^{-j\omega k t} \ a.e.\end{equation}
By identification, we obtain, a.e.: 
\begin{align}\frac{1}{T}&(x(t)-x(t-T))e^{-j\omega k t}=\frac{1}{T}(z(t)-z(t-T))e^{-j\omega k t}+\dots\nonumber\\&\frac{1}{T}\int_{t-T}^t z(\tau)j\omega ke^{-j\omega k \tau}d\tau-\frac{1}{T}\int_{t-T}^t x(\tau)j\omega ke^{-j\omega k \tau}d\tau.\label{ipp}
\end{align}
For $k=0$, we have:
$$\frac{1}{T}(x(t)-x(t-T))=\frac{1}{T}(z(t)-z(t-T))\ a.e.$$
Hence, for any $k$, the expression \eqref{ipp} reduces to:
$$0=\frac{1}{T}\int_{t-T}^t (z(\tau)-x(\tau))\ j\omega k\ e^{-j\omega k \tau}d\tau\ a.e.,$$
Dividing these expressions by $ j\omega k$, for $k\neq0$, by unicity of the Fourier decomposition 
we conclude that $z$ and $x$ differ only by a constant $z_0$, that is: $z=x+z_0 \ a.e.$
As $z$ is a primitive of $f(t,x)$, we conclude that $x$ is differentiable $a.e.$ and satisfies the differential equation \eqref{dg}. Moreover, $x$ is absolutely continuous and by Proposition~\ref{rec} we have $x(t)=\mathcal{F}_T^{-1}(X)(t)$ for all $t$.
In particular, $x(0)= \mathcal{F}_T^{-1}(X)(0)$.

Finally, as in the first part of the proof, the continuity of $x$ implies that for any $k\in\mathbb{Z}$, $X_k$ is $C^1(\mathbb{R},\mathbb{C}^n)$
and the relation \eqref{dphasor2} is satisfied everywhere.

Concerning the uniqueness part, if \eqref{dg} admits an unique solution for a given $x_0$, then the unicity of the trajectory $x$ leads to the unicity of the initial condition for \eqref{hdg}.
If \eqref{hdg} admits several solutions in $H$, then for two distinct solutions $X_i$, $i=1, 2$, one would associate distinct solutions $x_i(t)=\mathcal{F}_T^{-1}(X_i)(t)$, $i=1,2$ with $x_0= \mathcal{F}_T^{-1}(X_i)(0)$, $i=1,2$ which would contradict the hypothesis of uniqueness of the $x$. On the other hand, it is not possible to conclude that there is a unique solution to \eqref{dg} with 
$ x(0)=x_0=\mathcal{F}_T^{-1}(X)(0)$ from the uniqueness of the solution of \eqref{hdg} with $X(0)=X^0$.
Indeed, if there are two solutions $x_1$ and $x_2$ to \eqref{dg} with 
$ x(0)=x_0$, the initial conditions generated by $x_1$ and $x_2$: 
$$X_{1}(0)=\mathcal{F}_T(x_1)(0), \quad X_{2}(0)=\mathcal{F}_T(x_2)(0),$$
are distinct if $x_1$ and $x_2$ do not coincide on $[-T\ 0]$ and each initial condition $X_1(0)$ and $X_2(0)$ leads to two distinct problems in the harmonic domain.
\end{proof}

To have $X \in H$, the initial conditions for the harmonic system are constrained by $X(0)=\mathcal{F}_T(x)(0)$ with $x$ solution of \eqref{dg}. This is obvious, as by Riesz-Fisher Theorem, any $X(0)\in \ell^2$ leads to $x\in L^2([-T\ 0])$ which is not a priori a solution of \eqref{dg}. Nevertheless, for the equilibria of the harmonic model, one has the following useful result.

\begin{corollary}[Equilibria Admissibility]\label{equi}$x$ is a $T$-periodic trajectory of $\dot x=f(x,t)$ if and only if $X=\mathcal{F}_T(x)$ is an equilibrium of $\dot X=\mathcal{F}_T(f(\tau,x(\tau))(t)-\mathcal{N}X$.
\end{corollary}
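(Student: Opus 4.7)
The plan is to recognize this corollary as essentially a direct consequence of Theorem~\ref{CNS} combined with the explicit formula for the derivative of a phasor, equation~\eqref{dphasor}: for any $x \in L^2_{loc}$ and any $k$,
\begin{equation*}
\dot X_k(t) = \frac{1}{T}\bigl(x(t) - x(t-T)\bigr)e^{-j\omega k t} \quad a.e.
\end{equation*}
The key observation is that this formula makes $T$-periodicity of $x$ equivalent, almost everywhere, to $\dot X_k = 0$ for every $k$, which is precisely what ``$X$ is an equilibrium of the harmonic system'' means once we know that $X$ already satisfies the harmonic ODE.

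For the direct implication, I would assume $x$ is a $T$-periodic trajectory of $\dot x = f(t,x)$. Theorem~\ref{CNS} immediately gives that $X = \mathcal{F}_T(x)$ is a solution of the harmonic system \eqref{hdg} and belongs to $H$. Plugging $x(t) = x(t-T)$ into the formula above yields $\dot X_k(t) = 0$ a.e.\ for every $k$, so $X$ is constant in $t$ and is therefore an equilibrium of \eqref{hdg}.

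For the converse, suppose $X = \mathcal{F}_T(x)$ is an equilibrium. Since $X$ is by construction in $H$ and also a (trivial) solution of the harmonic ODE, Theorem~\ref{CNS} tells us that $x$ is a Carath\'eodory solution of $\dot x = f(t,x)$. From $\dot X = 0$, and in particular $\dot X_0(t) = \tfrac{1}{T}(x(t) - x(t-T)) = 0$ a.e., we conclude $x(t) = x(t-T)$ a.e.; since $x$ is (absolutely) continuous as a Carath\'eodory solution, this holds everywhere, proving that $x$ is $T$-periodic.

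There isn't really a hard step here, so the main thing to be careful about is book-keeping: invoking Theorem~\ref{CNS} in both directions to transfer between solutions of \eqref{dg} and solutions of \eqref{hdg}, and ensuring that the almost-everywhere periodicity obtained from $\dot X_0 = 0$ upgrades to pointwise periodicity using the absolute continuity of $x$. Everything else follows mechanically from the phasor derivative identity.
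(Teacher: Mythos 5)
Your proof is correct and follows essentially the same route as the paper: the forward direction uses the fact that the sliding Fourier decomposition of a $T$-periodic signal is constant (you derive this from the phasor derivative formula, the paper states it directly), and the converse uses membership in $H$ together with the reciprocal part of Theorem~\ref{CNS} to recover a $T$-periodic representative solving the time-domain equation. The only cosmetic difference is that the paper observes a constant $X$ is trivially in $H$, whereas you obtain $H$-membership from $X=\mathcal{F}_T(x)$ and then extract periodicity from $\dot X_0=0$; both are valid.
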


\begin{proof}
If $x$ is a $T$-periodic trajectory of $\dot x=f(x,t)$, then $X=\mathcal{F}_T(x)$ is constant and it follows that $\dot X=0$. Thus $\mathcal{F}_T(f(\tau,x(\tau))(t)-\mathcal{N}X$ is equal to zero at any time instant and defines an equilibrium $X$ of the harmonic equation. Conversely, if $X$ is an equilibrium, then as $X$ is constant, it is trivially in $H$ since it defines an absolutely continuous function whose derivative satisfies $\dot X_k=0$ for all $k$. Consequently there exists a $T-$periodic representative $x=\mathcal{F}_T^{-1}(X)$ that satisfies the differential equation $\dot x=f(x,t)$.
\end{proof}

%
%
%
%

\subsection{LTV (LTP, bilinear affine) systems case}\label{sysaffine}
In this section, we explain how Theorem~\ref{CNS} can be applied to different classes of LTV and bilinear affine systems. Recall that the sliding Fourier decomposition of a product $A(t)x(t)$ is determined as a convolution product (see Property~\ref{product} in Appendix): 
$$\mathcal{F}_T(Ax)=\mathcal{T}_T(A)X=\mathcal{A}X,$$
where $X=\mathcal{F}_T(x)$ and where the block Toeplitz transformation $\mathcal{T}_T(A)$ defines an infinite dimensional matrix function as follow: 
\begin{align*}\mathcal{A}(t)&=\mathcal{T}_T(A)(t)\\&=
\left[
\begin{array}{ccccc}
\ddots & & \vdots & &\udots \\ & A_0(t) & A_{-1}(t) & A_{-2}(t) & \\
\cdots & A_{1}(t) & A_0(t) & A_{-1}(t) & \cdots \\
 & A_{2}(t) & A_{1} (t)& A_0(t) & \\
\udots & & \vdots & & \ddots\end{array}\right],\end{align*}
with $A_k(t)=\frac{1}{T}\int_{t-T}^t A(\tau)e^{-j\omega k \tau}d\tau$.

The following proposition treats the cases of LTV and LTI systems. 
\begin{proposition}\label{lineaire}
Assume that $A$ and $B$ $\in L^{2}_{loc}$ and $u\in L^{2}_{loc}$.
Then, $x$ is the unique solution of the LTV system.
\begin{align}
\dot x=A(t)x(t)+B(t)u(t), \quad x(0)=x_0 \label{ltp}
\end{align} 
if and only if $X\in H$ is the unique solution of the LTV system
\begin{align*}
\dot X(t)=(\mathcal{A}(t)-\mathcal{N})X(t)+\mathcal{B}(t)U(t), \quad X(0)=\mathcal{F}_T(x)(0)
\end{align*}
where $X=\mathcal{F}_T(x)$, $U=\mathcal{F}_T(u)$, $\mathcal{A}=\mathcal{T}_T(A)$ and $\mathcal{B}=\mathcal{T}_T(B)$ . \\
If, in addition, $A$ and $B$ are $T$-periodic then $\mathcal{A}$ and $\mathcal{B}$ are constant and the harmonic system is LTI.
\end{proposition}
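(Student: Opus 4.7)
The plan is to recognize that the LTV system \eqref{ltp} is a special instance of the general Carath\'eodory equation \eqref{dg} with $f(t,x) = A(t)x + B(t)u(t)$, and then apply Theorem~\ref{CNS} together with the block Toeplitz convolution formula. First I would verify Assumption~\ref{as1}: for any $x \in L^2_{loc}$, the Cauchy--Schwarz inequality combined with $A, B, u \in L^2_{loc}$ yields $Ax, Bu \in L^1_{loc}$, hence $f(t,x(t)) \in L^1_{loc}$. Existence and uniqueness of a Carath\'eodory solution on every compact time interval then follow from the Carath\'eodory--Picard theorem, since $f$ is linear (hence Lipschitz) in $x$ with locally square integrable modulus $\|A(\cdot)\|$.

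Next I would invoke Theorem~\ref{CNS}, which reduces the task to computing $\mathcal{F}_T(f(t,x(t)))$ explicitly. By linearity of $\mathcal{F}_T$ and the block Toeplitz convolution formula (Property~\ref{product} in the Appendix), one obtains $\mathcal{F}_T(Ax) = \mathcal{T}_T(A)X = \mathcal{A}X$ and $\mathcal{F}_T(Bu) = \mathcal{T}_T(B)U = \mathcal{B}U$. Substituting into the harmonic equation of Theorem~\ref{CNS} yields $\dot X = (\mathcal{A} - \mathcal{N})X + \mathcal{B}U$ with $X(0) = \mathcal{F}_T(x)(0)$, exactly as claimed, and Theorem~\ref{CNS} guarantees that $X \in H$ and that, conversely, any $X \in H$ solving this system has a representative $x = \mathcal{F}_T^{-1}(X)$ solving \eqref{ltp}.

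The main obstacle concerns the uniqueness equivalence. The forward direction (unique $x$ yields unique $X \in H$) is immediate from the uniqueness clause of Theorem~\ref{CNS}. The reverse direction is the delicate one, since Theorem~\ref{CNS} explicitly warns that uniqueness of $X$ does not in general transfer back to uniqueness of $x$. However, in the present LTV setting this difficulty evaporates: the Carath\'eodory--Picard theorem applied directly to \eqref{ltp} already delivers uniqueness of $x$ from its initial condition, independently of the harmonic problem, so both well-posedness statements hold in parallel and the equivalence follows from the bijection $x \leftrightarrow X$ via $\mathcal{F}_T$.

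Finally, for the periodic case I would apply formula \eqref{dphasor} to $A$ and $B$ in place of $x$: if $A$ is $T$-periodic then $\dot A_k(t) = \tfrac{1}{T}(A(t) - A(t-T)) e^{-j\omega k t} = 0$ almost everywhere, so each block $A_k$ is constant in $t$, and likewise each $B_k$. Hence $\mathcal{A}$ and $\mathcal{B}$ are time-invariant, and the harmonic system reduces to an LTI system on $\ell^2(\mathbb{C}^n)$.
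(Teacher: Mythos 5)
Your proposal is correct and follows essentially the same route as the paper: verify Assumption~\ref{as1} and uniqueness via (generalized) H\"older/Cauchy--Schwarz together with the Carath\'eodory theory for linear systems, then apply Theorem~\ref{CNS} and Property~\ref{product} to obtain the Toeplitz form of the harmonic model. Your extra care on the reverse uniqueness direction and your explicit computation showing $\dot A_k = 0$ for $T$-periodic $A$ simply make explicit what the paper delegates to \cite{Poznyak} and to Proposition~\ref{defpos} in the appendix.
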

\begin{proof}
If $A(t),\ B(t)$ and $u(t)\in L^2_{loc}$ then, using generalized Hölder's inequality\footnote{Let $p,q,r\in [1,+\infty]$ be such that $\frac{1}{p} + \frac{1}{q} = \frac{1}{r}$. 
Let $f\in L^p$ and $g\in L^q$. Then, $f g\in L^r$ and $\|fg\|_{L^r}\leq \|f\|_{L^p}\|g\|_{L^q}.$}, Assumption~\ref{as1} is satisfied as well as the uniqueness condition (see Chapter 19, p.p. 530-533 in \cite{Poznyak}). Thus, the result follows by Theorem~\ref{CNS} and by Property~\ref{product} given in the appendix which provides the harmonic model. 
\end{proof}


In the sequel, we discuss the harmonic modeling of bilinear affine systems. This class of systems is of particular interest as the analysis can be extended to the case of affine switched systems and applied to power converters (see e.g. \cite{Beneux}). Let:
\begin{equation}
\dot{x}(t)= A(s(t))x(t) + B(s(t))w(t), \quad x(0)=x_0\label{affine}
\end{equation}
where $s(\cdot)$ is the control variable and $w(\cdot)$ an exogenous input.
It is assumed that the control dependent matrices can be decomposed into an affine function with respect to $s(t)$:
\begin{equation*} 
\begin{array}{rcl}
A(s(t)) &=& A_{ind}+s(t)A_{dep}, \\
B(s(t)) &=& B_{ind}+s(t)B_{dep},
\end{array}
\end{equation*}
where $s(t)$ is a bounded scalar and $A_{ind},B_{ind},A_{dep},B_{dep}$ are constant matrices. 

\begin{proposition}\label{model_harmo_affine}
Assume that $s\in L^{2}_{loc}$ and $w\in L^{2}_{loc}$.
Then, $x$ is the unique solution of \eqref{affine} if and only if $X$ is the unique solution for
\begin{align}
\dot{X}(t) &= (\mathcal{A} (S(t))-\mathcal{N}) X(t)+\mathcal{B} (S(t)) W(t)\label{affineh}\\
 X(0)&=\mathcal{F}_T(x)(0),\nonumber
\end{align}
with
\begin{align}
\mathcal{A} (S(t))&=\mathcal{A}_{ind}+\mathcal{S}(t)\otimes A_{dep}, \label{r1}\\
\mathcal{B} (S(t))&=\mathcal{B}_{ind}+ \mathcal{S}(t)\otimes B_{dep}, \label{r2}
\end{align}
where $\otimes$ represents the Kronecker product, $X=\mathcal{F}_T(x)$, $W=\mathcal{F}_T(w)$, $\mathcal{S}=\mathcal{T}_T(s)$, $\mathcal{A}_{ind}=\mathcal{T}_T(A_{ind})$ and $\mathcal{B}_{ind}=\mathcal{T}_T(B_{ind})$. 
\end{proposition}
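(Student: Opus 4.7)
The plan is to reduce Proposition~\ref{model_harmo_affine} to the already-established Proposition~\ref{lineaire} by viewing the affine system \eqref{affine} as an LTV system with time-varying coefficients
$$A(t) := A_{ind} + s(t)\,A_{dep}, \qquad B(t) := B_{ind} + s(t)\,B_{dep},$$
and then identifying the block Toeplitz operators $\mathcal{T}_T(A)$ and $\mathcal{T}_T(B)$ with the Kronecker-product expressions \eqref{r1}--\eqref{r2}. Because $s\in L^2_{loc}$ and the matrices $A_{ind},A_{dep},B_{ind},B_{dep}$ are constant, it is immediate that $A,B\in L^2_{loc}$; together with $w\in L^2_{loc}$, this places us exactly in the hypotheses of Proposition~\ref{lineaire}, which delivers the equivalence between the Carathéodory solution $x$ of \eqref{affine} and the harmonic solution $X\in H$ of
$$\dot X(t) = (\mathcal{T}_T(A)(t)-\mathcal{N})X(t) + \mathcal{T}_T(B)(t)W(t), \qquad X(0)=\mathcal{F}_T(x)(0),$$
with uniqueness transferred in both directions.

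The remaining work is purely algebraic. I would exploit the linearity of $\mathcal{T}_T$ to split $\mathcal{T}_T(A) = \mathcal{T}_T(A_{ind}) + \mathcal{T}_T(s\,A_{dep})$. The first term equals $\mathcal{A}_{ind}$ by definition. For the second, I would compute the $k$-th phasor
$$(s\,A_{dep})_k(t) = \Bigl(\frac{1}{T}\int_{t-T}^{t} s(\tau)e^{-j\omega k\tau}d\tau\Bigr) A_{dep} = S_k(t)\,A_{dep},$$
where the scalar $S_k(t)$ factors out because $A_{dep}$ is constant. Assembling these phasors into the block Toeplitz pattern dictated by Property~\ref{product} yields blocks of the form $S_{k-\ell}(t)A_{dep}$, which is precisely the Kronecker product $\mathcal{S}(t)\otimes A_{dep}$. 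The identical argument applied to $B$ produces \eqref{r2}.

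The principal obstacle is really only bookkeeping: one must verify carefully that $A(\cdot),B(\cdot)\in L^2_{loc}$ so that Proposition~\ref{lineaire} actually applies (this is where the assumption $s\in L^2_{loc}$ is used), and one must recognize that the scalar-times-constant-matrix structure is exactly what the Kronecker product encodes. No new analytic machinery is required beyond Theorem~\ref{CNS}, Proposition~\ref{lineaire}, and Property~\ref{product}; the proof is essentially a specialization of the LTV case combined with the observation that a control-affine dependence yields a factored Toeplitz structure.
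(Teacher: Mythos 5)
Your proposal is correct and follows essentially the same route as the paper: observe that $s\in L^2_{loc}$ makes $A(s(\cdot)),B(s(\cdot))\in L^2_{loc}$, reduce to the LTV case via Proposition~\ref{lineaire}, and obtain \eqref{r1}--\eqref{r2} from Property~\ref{product}. Your explicit phasor computation showing that the scalar-times-constant-matrix structure yields the Kronecker product is just a spelled-out version of the first item of Property~\ref{product}, which the paper invokes directly.
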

\begin{proof}
As $s\in L^{2}_{loc}$, it is clear that $A(s)$ and $B(s)$ are $L^{2}_{loc}$. Consequently, the bilinear affine system \eqref{affine} appears as a special case of LTV systems and Proposition~\ref{lineaire} applies. Property~\ref{product} given in the appendix leads to \eqref{r1} and \eqref{r2}. 
\end{proof}

\section{Harmonic modelling based stability and stabilization}
In this section, the impact on stability analysis and stabilization using harmonic modeling of the results proposed previously is explcited. 
\subsection{Stability of LTP systems}\label{stab}
Here, we consider linear periodic systems and provide stability results that generalize existing ones. Indeed, we do not assume like in \cite{Zhou2008} that the matrix function $A$ is piecewise continuous and differentiable almost everywhere but only of class $L^2$. Moreover, contrary to \cite{Bittanti} and \cite{Zhou2002}, we do not need the Floquet Theorem. Our result in this subsection is a direct consequence of the necessary and sufficient condition given in Theorem~\ref{CNS} and concerns periodic linear systems given by
\begin{align}
	\dot x(t)=A(t)x(t)
	\label{autox}
\end{align} 
and their harmonic model:
\begin{align} \dot X(t)=(\mathcal{A}-\mathcal{N})X(t)
		\label{autoXN}
\end{align}
\begin{theorem}[Harmonic Lyapunov equation]\label{Lyapunov}
Assume that $A\in L^{2}([0 \ T])$ is a $T$-periodic matrix function with real values and let $Q\in L^{\infty}([0\ T])$ be a $T$-periodic symmetric and positive definite matrix function. $P$ is the unique $T$-periodic symmetric positive definite solution of the periodic Lyapunov differential equation:
$$\dot P(t)+A'(t)P(t)+P(t)A(t)+Q(t)=0,$$
if and only if $\mathcal{P}=\mathcal{T}_T(P)$ is the unique hermitian and positive definite solution of the Lyapunov algebraic equation:
\begin{equation}
\mathcal{P}(\mathcal{A}-\mathcal{N})+(\mathcal{A}-\mathcal{N})^*\mathcal{P}+\mathcal{Q}=0,\label{al}
\end{equation}
where $\mathcal{Q}=\mathcal{T}_T(Q)$ is hermitian positive definite and $\mathcal{A}=\mathcal{T}_T(A)$.
Moreover, $\mathcal{P}$ is a bounded operator on $\ell^2$.
\end{theorem}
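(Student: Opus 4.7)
The plan is to reduce the equivalence to a direct identification of Fourier coefficients: I will show that the $(i,j)$ block of the algebraic Lyapunov residual $\mathcal{L} := \mathcal{P}(\mathcal{A}-\mathcal{N}) + (\mathcal{A}-\mathcal{N})^*\mathcal{P} + \mathcal{Q}$ equals the $(i-j)$-th Fourier coefficient of the periodic residual $R := \dot P + A'P + PA + Q$. The equivalence $\mathcal{L}=0 \iff R=0$ a.e.\ then follows from the uniqueness of Fourier decomposition in $L^2([0,T])$.

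Three elementary identities drive this computation. First, the Toeplitz product rule (Property~\ref{product} in the appendix, a direct consequence of the convolution formula for Fourier series of $T$-periodic functions) gives $\mathcal{T}_T(XY) = \mathcal{T}_T(X)\mathcal{T}_T(Y)$, hence $(\mathcal{P}\mathcal{A})_{ij} = (PA)_{i-j}$. Second, since $A$ is real, $A_{-k} = \overline{A_k}$ yields $(\mathcal{T}_T(A))^* = \mathcal{T}_T(A')$, so $(\mathcal{A}^*\mathcal{P})_{ij} = (A'P)_{i-j}$. Third, $\mathcal{N}^* = -\mathcal{N}$ (diagonal with imaginary entries), so the four $\mathcal{N}$-terms in $\mathcal{L}$ collapse to the commutator $\mathcal{N}\mathcal{P}-\mathcal{P}\mathcal{N}$, whose $(i,j)$ block equals $j\omega(i-j)P_{i-j}$; this is precisely $(\dot P)_{i-j}$, the $(i-j)$-th Fourier coefficient of $\dot P$ for absolutely continuous $T$-periodic $P$ (obtained by integration by parts). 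Summing these identities gives $\mathcal{L}_{ij} = R_{i-j}$, completing the core equivalence.

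For the hermitian and positive-definite transfer, symmetry of $P(t)$ and $Q(t)$ gives $\mathcal{P}^* = \mathcal{T}_T(P^T) = \mathcal{P}$ and $\mathcal{Q}^* = \mathcal{Q}$; positive definiteness on $\ell^2$ follows from a Parseval computation, since for $X \in \ell^2$ with $T$-periodic representative $x(t) = \sum_k X_k e^{j\omega k t}$, one has $\langle X, \mathcal{Q}X\rangle_{\ell^2} = \frac{1}{T}\int_0^T x(\tau)^* Q(\tau) x(\tau)\, d\tau > 0$ under the uniform positivity of $Q$, and identically for $\mathcal{P}$. Boundedness of $\mathcal{P}$ on $\ell^2$ follows from the standard Toeplitz/Laurent norm identity $\|\mathcal{P}\|_{\ell^2 \to \ell^2} = \|P\|_{L^\infty([0,T])}$, noting that $P$ is absolutely continuous and $T$-periodic, hence bounded.

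The main obstacle is the uniqueness transfer in the reverse direction: assuming the algebraic equation has a unique HPD solution, one must argue it is of the form $\mathcal{T}_T(P)$ rather than some non-Toeplitz operator. I would restrict the algebraic equation to the class of bounded block-Toeplitz operators---within which the equivalence with the periodic ODE is exact by the computation above---and transfer uniqueness through the bijection $P \leftrightarrow \mathcal{T}_T(P)$, which is injective on continuous $T$-periodic matrix functions. Establishing uniqueness among \emph{all} HPD operators (not merely the Toeplitz ones) would require further structural argument, exploiting the invariance $\mathcal{N}\mathcal{P}-\mathcal{P}\mathcal{N} = -(\mathcal{P}\mathcal{A}+\mathcal{A}^*\mathcal{P}+\mathcal{Q})$ to force any candidate solution into block-Toeplitz form; I expect this to be the delicate step.
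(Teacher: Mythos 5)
Your core computation is correct, but it takes a genuinely different route from the paper. You work entirely with the classical Fourier coefficients of $T$-periodic functions and identify the $(i,j)$ block of the algebraic residual with the $(i-j)$-th coefficient of the periodic residual $R=\dot P+A'P+PA+Q$; the equivalence then reduces to uniqueness of Fourier coefficients. The paper instead treats the periodic Lyapunov equation as a Carath\'eodory matrix ODE, applies the general equivalence of Theorem~\ref{CNS} to obtain the harmonic \emph{differential} equation $\dot{\mathcal{P}}+\mathcal{P}(\mathcal{A}-\mathcal{N})+(\mathcal{A}-\mathcal{N})^*\mathcal{P}+\mathcal{Q}=0$ for the sliding decomposition $\mathcal{P}(t)=\mathcal{T}_T(P)(t)$, and only then specializes: when the window length equals the period, $\mathcal{P}$, $\mathcal{A}$, $\mathcal{Q}$ are constant, $\dot{\mathcal{P}}=0$, and the algebraic equation drops out. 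Your route is more elementary and self-contained for this particular statement; the paper's route buys the regularity and uniqueness transfer from the general machinery of Theorem~\ref{CNS} and Corollary~\ref{equi} (an equilibrium of the harmonic model is automatically in $H$, hence admits an absolutely continuous $T$-periodic representative solving the time-domain equation), and is the template reused verbatim for the Riccati and Sylvester equations later in the paper.

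Two points deserve attention. First, in the backward direction you should not presuppose that $\dot P$ exists in order to write $(\dot P)_k=j\omega k P_k$: from $\mathcal{L}=0$ you only learn that the Fourier coefficients of $-(A'P+PA+Q)\in L^1([0,T])$ equal $j\omega k P_k$, and you must then conclude that $P$ coincides a.e.\ with an absolutely continuous $T$-periodic function whose derivative is $-(A'P+PA+Q)$. This is standard, but it is precisely the step the paper delegates to the converse part of Theorem~\ref{CNS} (the primitive/integration-by-parts argument there), so it should appear explicitly in your write-up. Second, the uniqueness transfer you flag as delicate is indeed the weak point, but note that the paper does no more than what you propose: its argument via Corollary~\ref{equi} only addresses solutions arising as phasor vectors of time-domain trajectories, i.e.\ the block-Toeplitz class, and no independent proof is given that the algebraic equation \eqref{al} cannot admit a non-Toeplitz hermitian positive definite solution. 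Your honest restriction to the Toeplitz class therefore matches what the paper actually establishes, and your remaining identities (hermitian/positive-definite transfer by Parseval, boundedness from $P\in L^{\infty}$ via the Laurent-operator norm, consistent with Property~\ref{borne} and Proposition~\ref{defpos}) are sound.
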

\begin{proof}
$A\in L^{2}([0 \ T])$ and $Q\in L^{\infty}([0\ T])$ ensure that the Carathéodory conditions and the uniqueness condition are satisfied.
Moreover, by generalized Hölder's inequality, $A'(t)P(t)+P(t)A(t)+Q(t)$ is $L^1_{loc}$ for any $P\in L^2_{loc}$. Hence, Assumption~\ref{as1} is satisfied and the NSC of Theorem~\ref{CNS} applies.
Let us show that the sliding Fourier decomposition taken on a window corresponding to the period $T$ leads to the harmonic equation~\eqref{al}.

For any $T$ and any $k$, the sliding Fourier decomposition allows to write:
\begin{equation}\dot {P}_k=-<A'P+PA+Q>_k(t)-j\omega k P_k.\label{phk}\end{equation}
The Toeplitz form associated with \eqref{phk} can be determined (see Definition~\ref{toeplitz} in the appendix). One gets for the left side of \eqref{phk}: 
\begin{align*}\dot {\mathcal{P}}&=\mathcal{T}_T(\dot {P})=\left[
\begin{array}{ccccc}
\ddots & & \vdots & &\udots \\ & \dot P_0(t) & \dot P_1^*(t) & \dot P_2^*(t) & \\
\cdots & \dot P_{1}(t) & \dot P_0(t) & \dot P_1^*(t) & \cdots \\
 & \dot P_{2}(t) & \dot P_{1} (t)& \dot P_0(t) & \\
\udots & & \vdots & & \ddots\end{array}\right],\end{align*}
which is hermitian since $P(t)=P(t)'$ and $\mathcal{T}_T(P')=\mathcal{P}^*$.
For the first term of the right member of \eqref{phk}, we have (by Property~\ref{product} in the appendix):
\begin{align*}\mathcal{P}\mathcal{A}+\mathcal{A}^*\mathcal{P}+\mathcal{Q}=
\mathcal{T}_T(A'P+PA+Q).\end{align*}
where $\mathcal{P}=\mathcal{T}_T(P)$, $\mathcal{A}=\mathcal{T}_T(A)$ and $\mathcal{Q}=\mathcal{T}_T(Q)$.
Finally, for the last terms of the right member of \eqref{phk}, using the fact that $\mathcal{P}$ is hermitian it is easy to verify that:
\begin{align*}
\mathcal{T}_T(\mathcal{N}P)&=\left[
\begin{array}{ccccc}
\ddots & & \vdots & &\udots \\ 
& 0 & -j\omega {P}_1^*(t)& -j2\omega {P}_2^*(t) & \\
\cdots & j\omega {P}_1(t) &0 & -j\omega {P}_1^*(t) & \cdots \\
 & j2\omega {P}_2(t) & j\omega {P}_1(t)& 0 & \\
\udots & & \vdots & & \ddots
\end{array}\right]\\&=\mathcal{P}\mathcal{N}+\mathcal{N}^*\mathcal{P}.\end{align*}
where $\mathcal{N}$ is given by \eqref{N}. Hence, the final result follows:
$$\dot {\mathcal{P}}+\mathcal{P}(\mathcal{A}-\mathcal{N})+(\mathcal{A}-\mathcal{N})^*\mathcal{P}+\mathcal{Q}=0$$
with the initial condition $P(0)=\mathcal{F}_T(P)(0)$.
Note that $\mathcal{P}$ and $\mathcal{Q}$ are necessarily hermitian and positive definite (see Proposition~\ref{defpos} in the appendix). Now, if the length of the sliding window is equal to the period $T$, as $P$, $A$ and $Q$ are $T$-periodic, we have $\mathcal{P}$, $\mathcal{A}$ and $ \mathcal{Q}$ constants and $\dot {\mathcal{P}}=0$.
We conclude that $\mathcal{P}$ is the unique hermitian positive definite solution of the algebraic equation and we have necessarily $\mathcal{P}(t)=\mathcal{P}(0)=\mathcal{T}_T(P)$. 

Conversely, since the unique solution of the harmonic algebraic equation is independent of the initial condition (clearly in $H$ by Corollary~\ref{equi}), if $P(t)$ is a representative of $\mathcal{P}$, i.e. $\mathcal{P}=\mathcal{T}_T(P),$ it is unique. Moreover, as Theorem~\ref{CNS} states that this representative satisfies the periodic Lyapunov differential equation, $P(t)$ is absolutely continuous. Thus $\mathcal{P}$ is a bounded operator on $\ell^2$ following Property~\ref{borne} given in the appendix.
\end{proof}
\begin{corollary}\label{CNS2}
The linear periodic system \eqref{autox} where $A\in L^{2}([0\ T])$ is a $T$-periodic matrix function, is globally asymptotically stable if and only if for any $T$-periodic symmetric positive definite matrix function $Q \in L^{\infty}([0\ T])$, the hermitian and positive definite matrix $\mathcal{P}$ is the unique solution of the algebraic Lyapunov equation:
$$\mathcal{P}(\mathcal{A}-\mathcal{N})+(\mathcal{A}-\mathcal{N})^*\mathcal{P}+\mathcal{Q}=0,$$
where $\mathcal{Q}=\mathcal{T}_T(Q)$. 
\end{corollary}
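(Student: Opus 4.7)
The plan is to reduce the statement to the classical Lyapunov characterization of exponential stability for linear time-periodic systems, using Theorem~\ref{Lyapunov} as the bridge between the algebraic Lyapunov equation on $\ell^2$ and the periodic differential Lyapunov equation in the time domain. By Theorem~\ref{Lyapunov}, the unique hermitian positive definite $\mathcal{P}$ solving \eqref{al} is precisely $\mathcal{T}_T(P)$, where $P$ is the unique $T$-periodic symmetric positive definite solution of $\dot P + A'P + PA + Q = 0$. Hence it suffices to prove that \eqref{autox} is globally asymptotically stable if and only if, for every $T$-periodic symmetric positive definite $Q \in L^\infty([0,T])$, such a $P$ exists.

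For the sufficiency part, I would use $V(t,x) = x' P(t) x$ as a Lyapunov function. Since $P$ is $T$-periodic and absolutely continuous (a property already established in the proof of Theorem~\ref{Lyapunov}), it is uniformly bounded with uniformly bounded inverse, so $V$ satisfies $\alpha_1 |x|^2 \leq V(t,x) \leq \alpha_2 |x|^2$ for positive $\alpha_1,\alpha_2$. Differentiating $V$ along any Carath\'eodory solution yields $\dot V(t,x(t)) = -x(t)' Q(t) x(t) \leq -\alpha_3 |x(t)|^2$ almost everywhere, with $\alpha_3>0$ following from the periodic lower bound on the eigenvalues of $Q$. A standard Gr\"onwall estimate then produces exponential decay of $|x(t)|^2$, hence global asymptotic stability.

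For the necessity part, I would construct $P$ explicitly. Global asymptotic stability of an LTP system (with Carath\'eodory data) is equivalent to exponential stability of the state transition matrix $\Phi$: there exist $K,\lambda>0$ with $\|\Phi(t,s)\| \leq K e^{-\lambda(t-s)}$ for $t\geq s$. Setting $P(t) := \int_t^{+\infty} \Phi(s,t)' Q(s) \Phi(s,t)\,ds$, absolute convergence is immediate from the exponential bound, $T$-periodicity follows from $\Phi(t+T,s+T)=\Phi(t,s)$ together with the $T$-periodicity of $Q$, symmetry is manifest, and positive definiteness comes from the uniform lower bound on $Q$ combined with invertibility of $\Phi(s,t)$. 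Differentiating under the integral sign, justified by dominated convergence via the exponential bound, shows that $P$ is absolutely continuous and satisfies the differential Lyapunov equation almost everywhere; uniqueness follows from linearity together with the exponential contraction of the associated fundamental solution.

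The main obstacle I anticipate is handling the $L^2$ (rather than continuous or piecewise differentiable) regularity of $A$. Classical textbook proofs of the Lyapunov equivalence assume continuity of $A$, but here all computations must be carried out in the Carath\'eodory framework; fortunately, existence, uniqueness, and absolute continuity of the transition matrix $\Phi(\cdot,s)$ in this weaker setting follow from the hypotheses already validated in the proof of Theorem~\ref{Lyapunov}, so the Lyapunov construction and the differentiation under the integral go through with equalities holding almost everywhere, and the conclusion transports back to the harmonic algebraic equation through Theorem~\ref{Lyapunov}.
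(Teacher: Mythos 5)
Your proof is correct and follows essentially the same route as the paper: the paper's proof of this corollary is a one-line appeal to Theorem~\ref{Lyapunov}, implicitly invoking the classical equivalence between global asymptotic stability of an LTP system and solvability of the $T$-periodic differential Lyapunov equation, which you spell out in full (sufficiency via $x'P(t)x$ and Gr\"onwall, necessity via $P(t)=\int_t^{+\infty}\Phi(s,t)'Q(s)\Phi(s,t)\,ds$). The only small imprecision is your appeal to a \emph{uniform} lower bound on the eigenvalues of $Q$, which is not guaranteed for a merely positive definite $Q\in L^{\infty}$; this is harmless, since for the sufficiency direction you may specialize to $Q=Id_n$, and for positive definiteness of the constructed $P$ the a.e.\ pointwise positivity of $Q$ together with invertibility of $\Phi(s,t)$ already suffices.
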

\begin{proof} As the existence of a solution to the harmonic Lyapunov algebraic equation is equivalent to the existence of a solution of the periodic Lyapunov differential equation, the result follows.
\end{proof}
\begin{corollary}\label{lyap}
The following are equivalent:
\begin{enumerate}
\item $\mathcal{V}(X)=X^*\mathcal{P}X$ is a Lyapunov function for the harmonic system \eqref{autoXN}. 
\item $v(x)=x'Px$ is a Lyapunov function for the periodic linear system \eqref{autox}.
\item the functional 
\begin{align*}
v(x)(t)&=<x,Px>_{L^2([t-T,t])}(t)\\&=\frac{1}{T}\int_{t-T}^tx(\tau)'P(\tau)x(\tau)d\tau
\end{align*} 
is a Lyapunov function for the periodic linear system \eqref{autox}. 
\end{enumerate}
\end{corollary}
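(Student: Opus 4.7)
The plan is to establish the equivalences cyclically, showing $(1)\Leftrightarrow(3)$ by a Parseval-type identity on the sliding window, and $(2)\Leftrightarrow(1)$ by combining Theorem~\ref{Lyapunov} with a direct derivative computation along system trajectories.

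First, I would prove $(1)\Leftrightarrow(3)$ by showing the pointwise identity $v(x)(t)=\mathcal{V}(X)(t)$. Writing the functional in $(3)$ as
$$v(x)(t)=\langle x, Px\rangle_{L^2([t-T,t])},$$
the Toeplitz product rule (Property~\ref{product}) gives $\mathcal{F}_T(Px)=\mathcal{P}X$, and applying Parseval's identity (Theorem~\ref{thparseval}) on the window $[t-T,t]$ yields
$$\langle x, Px\rangle_{L^2([t-T,t])}=\langle X,\mathcal{P}X\rangle_{\ell^2}=X^*\mathcal{P}X=\mathcal{V}(X)(t).$$
Since the two expressions coincide pointwise in $t$, one is a Lyapunov function if and only if the other is.

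Next, I would prove $(1)\Leftrightarrow(2)$. The positivity part follows from Proposition~\ref{defpos} in the appendix (referenced in Theorem~\ref{Lyapunov}): $\mathcal{P}$ is hermitian positive definite and bounded on $\ell^2$ iff $P(t)$ is symmetric positive definite for a.e.\ $t$, so $\mathcal{V}(X)>0$ for $X\neq 0$ is equivalent to $v(x)>0$ for $x\neq 0$, with matching upper/lower quadratic bounds via $\|\mathcal{P}\|$ and the pointwise spectral bounds of $P(t)$. For the derivative part, differentiating $\mathcal{V}(X)=X^*\mathcal{P}X$ along $\dot X=(\mathcal{A}-\mathcal{N})X$ (using that $\mathcal{P}$ is constant when the window length equals $T$) gives
$$\dot{\mathcal{V}}(X)=X^*\bigl[(\mathcal{A}-\mathcal{N})^*\mathcal{P}+\mathcal{P}(\mathcal{A}-\mathcal{N})\bigr]X=-X^*\mathcal{Q}X,$$
exploiting the harmonic algebraic Lyapunov equation \eqref{al}. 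Symmetrically, differentiating $v(x)=x'Px$ along $\dot x=A(t)x$ yields
$$\dot v(x)=x'\bigl[\dot P+A'P+PA\bigr]x=-x'Qx,$$
by the periodic Lyapunov differential equation. Theorem~\ref{Lyapunov} tells us the two Lyapunov equations are strictly equivalent, so the decay conditions are equivalent.

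The main obstacle is ensuring that the Parseval identity in the first step is applied rigorously with the matrix weight $P(\tau)$, i.e.\ justifying $\langle x, Px\rangle_{L^2([t-T,t])}=\langle X,\mathcal{P}X\rangle_{\ell^2}$ for each fixed $t$. This requires $Px\in L^2([t-T,t])$ (guaranteed by $P$ being $T$-periodic and essentially bounded, since it is absolutely continuous and $T$-periodic) and the legitimacy of exchanging the inner product with the sliding Fourier decomposition, which is exactly the isometry content of Parseval combined with the Toeplitz convolution identity $\mathcal{F}_T(Px)=\mathcal{P}X$. Once this identity is secured, the remainder of the proof is an algebraic consequence of Theorem~\ref{Lyapunov} and the quadratic-form derivative computation, and the three statements collapse into restatements of the same underlying Lyapunov condition.
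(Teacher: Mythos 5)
Your proposal is correct and follows essentially the same route as the paper: the equivalence $(1)\Leftrightarrow(3)$ via Parseval's identity combined with $\mathcal{F}_T(Px)=\mathcal{P}X$, and $(1)\Leftrightarrow(2)$ via the strict equivalence of the harmonic algebraic and periodic differential Lyapunov equations from Theorem~\ref{Lyapunov} (a step the paper simply declares ``obvious'' and you flesh out with the standard derivative computation). The well-posedness points you flag ($Px\in L^2$ from the boundedness of the periodic $P$, $\mathcal{P}$ bounded on $\ell^2$) are exactly the ones the paper checks.
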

\begin{proof}
The fact that the two first items are equivalent is obvious. 
For the third one, the equivalence between 1 and 3 is shown via Parseval Theorem~\ref{thparseval}. As $\mathcal{P}$ is a bounded operator on $\ell^2$, the function $\mathcal{V}(X)=X^*\mathcal{P}X$ is well defined and differentiable $a.e.$ along a trajectory since $X$ is absolutely continuous. Its derivative is also well defined since $\dot {\mathcal{V}}(X)=-X^*QX$ with $\mathcal{Q}X\in \ell^2$ for all $X\in \ell^2$ (as $\mathcal{Q}$ is a bounded operator; see Property~\ref{borne} in the appendix). Similarly, for any $x\in L^2([t-T,t])$, as $P(t)$ is absolutely continuous, $Px \in L^2([t-T, t])$. Also, the functional $v(x)(t)$ is well defined for any $x\in L^2([t-T,t])$ as well as its derivative since $Qx\in L^2([t-T,t])$ for any $x\in L^2([t-T,t])$ as $Q\in L^{\infty}([0\ T])$.

Using Parseval Theorem~\ref{thparseval}, we get the equality:
 \begin{align*}v(x)(t)&=<x,Px>_{L^2([t-T,t])}(t)\\&=\frac{1}{T}\int_{t-T}^tx(\tau)'P(\tau)x(\tau)d\tau\\&=\mathcal{V}(X)(t), \end{align*}
 and the derivative leads to:
 \begin{align*}
\dot v(x)(t)&=-<x,Qx>_{L^2([t-T,t])}(t)=-X^*\mathcal{Q}X=\dot {\mathcal{V}}(X)(t).
\end{align*}
The result follows 
noticing that: $x=0\ a.e.$ on $[T-t,t]$ if and only if $X=0$.
\end{proof}

\subsection{Stabilization of harmonic systems}
In this section, useful results concerning state feedback design for harmonic systems are provided. We start by a proposition of practical importance for general harmonic control design and which is a consequence of the Coincidence Condition given in Theorem~\ref{coincidence}.

\begin{proposition}[Static gain]\label{static}Let $X\in H$. If $U=\mathcal{K}X$ with $\mathcal{K}$ a static gain then $U\in H$ if and only if $\mathcal{K}$ is a block Toeplitz matrix i.e. $\mathcal{K}=\mathcal{T}_T(K(t))$ where $K(t)$ is a $T$-periodic matrix function of class $L^2_{loc}$. \end{proposition}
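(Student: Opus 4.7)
The plan is to use the Coincidence Condition (Theorem~\ref{coincidence}) as the bridge between membership of $U=\mathcal{K}X$ in $H$ and the block Toeplitz structure of $\mathcal{K}$.

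For the sufficiency, if $\mathcal{K}=\mathcal{T}_T(K)$ with $K$ a $T$-periodic matrix function of class $L^2_{loc}$, then for any $X\in H$ with representative $x\in L^2_{loc}$ (Theorem~\ref{coincidence}), the product/convolution identity of Property~\ref{product} in the appendix yields $\mathcal{K}X=\mathcal{T}_T(K)\mathcal{F}_T(x)=\mathcal{F}_T(Kx)$, so $Kx$ is a representative of $U$ and hence $U\in H$.

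For the necessity, I would assume $U=\mathcal{K}X\in H$ for every $X\in H$ and first extract the Toeplitz structure. Writing $\mathcal{K}=(\mathcal{K}_{k,\ell})_{k,\ell\in\mathbb{Z}}$ in block form and using that $\mathcal{K}$ is time-invariant, one has $\dot U_k=\sum_\ell \mathcal{K}_{k,\ell}\dot X_\ell$. Applying the Coincidence Condition to $X$ and to $U$ and shifting the summation index yields, for a.e.\ $t$,
\[
\dot X_0(t)\sum_{\ell\in\mathbb{Z}}\bigl(\mathcal{K}_{k,\ell}-\mathcal{K}_{0,\ell-k}\bigr)e^{-j\omega\ell t}=0.
\]
Selecting a test $X\in H$ whose representative is not $T$-periodic (for instance $x(t)=t$, for which the formula in the proof of Theorem~\ref{coincidence} gives $\dot X_0\equiv 1$) and invoking uniqueness of Fourier decomposition on $[0,T]$, I would conclude $\mathcal{K}_{k,\ell}=\mathcal{K}_{0,\ell-k}$ for all $k,\ell$, i.e.\ $\mathcal{K}_{k,\ell}=K_{k-\ell}$ after setting $K_m:=\mathcal{K}_{0,-m}$.

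It remains to promote the formal series $K(t)=\sum_m K_m e^{j\omega m t}$ to a $T$-periodic function of class $L^2_{loc}$, i.e.\ to establish $(K_m)\in\ell^2$. For this I would probe $\mathcal{K}$ on constant harmonic signals: for any $v\in\mathbb{C}^n$, the sequence defined by $X_0\equiv v$ and $X_\ell\equiv 0$ ($\ell\neq 0$) is the sliding Fourier decomposition of the constant function $x\equiv v$ and hence belongs to $H$, while $U=\mathcal{K}X$ must take values in $\ell^2$ by definition of $H$. This yields $\sum_k\|K_k v\|^2<\infty$, and letting $v$ range over a basis of $\mathbb{C}^n$ produces $(K_m)\in\ell^2$; Riesz--Fischer (Theorem~\ref{riesz}) then identifies $K$ with an element of $L^2([0,T])$, extended $T$-periodically. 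The main obstacle I expect is the Toeplitz-derivation step: one must legitimately commute the infinite sum with the time derivative and apply Fourier-series uniqueness row by row, which tacitly requires the rows of $\mathcal{K}$ to be in $\ell^2$. This row-$\ell^2$ property is itself forced by the hypothesis, since $\mathcal{K}$ must act from $H$ into $H\subset L^\infty_{loc}(\mathbb{R},\ell^2)$ and is hence bounded on $\ell^2$, but making this implicit boundedness explicit is the only delicate point of the argument.
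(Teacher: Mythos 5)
Your proof is correct and follows essentially the same route as the paper: differentiate $U_k=\sum_p\mathcal{K}_{kp}X_p$, use the coincidence relations $\dot X_{p+k}=e^{-j\omega k t}\dot X_p$ and $\dot U_k=e^{-j\omega k t}\dot U_0$, and identify coefficients to force $\mathcal{K}_{kp}=\mathcal{K}_{0(p-k)}$, the converse resting on Property~\ref{product}. The extra care you take (an explicit test signal with $\dot X_0\not\equiv 0$ to justify the identification, and the $\ell^2$-summability of the symbol $(K_m)$ so that $K$ is a genuine $L^2$ function) fills in steps the paper leaves implicit; only your closing remark overreaches, since boundedness of $\mathcal{K}$ on $\ell^2$ would force $K\in L^{\infty}$ by Property~\ref{borne}, whereas all you need (and all that is forced) is that each row of $\mathcal{K}$ lies in $\ell^2$.
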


\begin{proof}
Let $U=\mathcal{K}X$ with $\mathcal{K}$ a static gain, then the time derivative of the $k$-th phasor is given by:
\begin{align*}
\dot U_k&=\sum_{p=-\infty}^{+\infty}\mathcal{K}_{kp}\dot X_p
\end{align*}
where $\mathcal{K}_{kp}$ are block matrices of $\mathcal{K}$ of appropriate dimensions.
If $U\in H$ then
$\dot U_k(t)=e^{-j\omega k t}\dot U_0(t)\ a.e.,$
and it follows that 
almost everywhere:
\begin{align*}
\dot U_k&=\sum_{p=-\infty}^{+\infty}\mathcal{K}_{kp}\dot X_p=e^{-j\omega k t}\dot U_0(t)\\
\dot U_k&=e^{-j\omega k t}\sum_{p=-\infty}^{+\infty}\mathcal{K}_{0p}\dot X_p=\sum_{p=-\infty}^{+\infty}\mathcal{K}_{0p}\dot X_{p+k},
\end{align*}
since $\dot X_{p+k}=e^{-j\omega k t}\dot X_p$ (The term $e^{-j\omega k t}$ acts as a shift operator on the components of $\dot X$).
In other words, it is necessary that:
$$\mathcal{K}_{kp}=\mathcal{K}_{0(p-k)},$$
and therefore $\mathcal{K}$ is a bloc Toeplitz matrix. The reciprocal is obvious. Notice that $K(t)$ must be $T-$periodic and at least of class $L^2_{loc}$ so that the Fourier decomposition of the $K(t)x(t)$ with $x(t)$ continuous, can be computed and associated with its Fourier decomposition via Riesz-Fischer's Theorem~\ref{riesz}. \end{proof}

The synthesis of a state feedback in the harmonic domain imposes a bloc Toeplitz structure on the gain matrix $\mathcal{K}$. If this is not the case, the relation between harmonics:
$$\dot U_k(t)=e^{-j\omega k t}\dot U_0(t),$$
is not satisfied and $U$ is not the sliding Fourier decomposition of any temporal signal $u$.
A naive harmonic control design that does not respect this mandatory condition, leads to erroneous conclusions on closed loop stability in the time domain, even if the closed loop harmonic model is globally asymptotically stable as we have in this case: $\mathcal{F}_T\mathcal{F}_T^{-1}(U)\neq U.$

The following Theorem provides a state feedback control design method for LTP systems with very weak regularity assumptions. It extends available result \cite{Zhou2008} to more general LTP systems.

\begin{theorem}[Harmonic Riccati equation]
Consider $T$-periodic matrix functions $A\in L^{2}([0 \ T])$, $B\in L^{\infty}([0\ T])$ and $T$-periodic symmetric positive definite matrix functions of $L^{\infty}$ class $R$ and $Q$. Assume that there exists a number $\eta>0$ such that the set $\{t: |det(R(t))|<\eta\}$ is of zero measure. Then, $P$ is the unique $T$-periodic symmetric positive definite solution of the periodic Riccati differential equation:
\begin{align*}\dot P(t)&+A'(t)P(t)+P(t)A(t)\\&-P(t)B(t)R^{-1}(t)B(t)'P(t)+Q(t)=0,\end{align*}
if and only if the matrix $\mathcal{P}=\mathcal{T}_T(P)$ is the unique hermitian and positive definite solution of the algebraic Riccati equation:
$$\mathcal{P}(\mathcal{A}-\mathcal{N})+(\mathcal{A}-\mathcal{N})^*\mathcal{P}-\mathcal{P}\mathcal{B}\mathcal{R}^{-1}(t)\mathcal{B}^*\mathcal{P}+\mathcal{Q}=0,$$
where $\mathcal{Q}=\mathcal{T}_T(Q)$ is hermitian positive definite.
Moreover, $\mathcal{P}$ is a bounded operator on $\ell^2$.
\end{theorem}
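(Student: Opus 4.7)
The plan is to follow the architecture of the proof of Theorem~\ref{Lyapunov} (the harmonic Lyapunov equation) and treat the extra quadratic term $P B R^{-1} B' P$ by iterated application of the Toeplitz product rule. First I would check that Assumption~\ref{as1} and uniqueness of the Carath\'eodory solution apply to the Riccati vector field
$$f(t,P)=-A'(t)P-PA(t)+P B(t)R^{-1}(t)B'(t)P-Q(t).$$
The hypothesis $|\det R(t)|\geq \eta$ a.e.\ gives $R^{-1}\in L^\infty$; combined with $B,Q\in L^\infty$ and $A\in L^2$, generalized H\"older's inequality yields $f(\cdot,P(\cdot))\in L^1_{loc}$ whenever $P\in L^2_{loc}$ (the linear terms are $L^2\cdot L^2\subset L^1$; the quadratic term is $L^2\cdot L^\infty \cdot L^2\subset L^1$). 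Hence Theorem~\ref{CNS} applies, so $P$ solves the periodic Riccati differential equation if and only if $\mathcal{P}=\mathcal{T}_T(P)$ is a Carath\'eodory solution in $H$ of the corresponding harmonic model.

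Next I would sliding-Fourier-decompose term by term with window length $T$, as done for the Lyapunov case. The differentiation under the integral in $\dot{\mathcal{P}}=\mathcal{T}_T(\dot P)$ produces the correction $-\mathcal{N}\mathcal{P}-\mathcal{P}\mathcal{N}^*$, exactly as in Theorem~\ref{Lyapunov}. The linear terms transport as $\mathcal{T}_T(A'P+PA+Q)=\mathcal{A}^*\mathcal{P}+\mathcal{P}\mathcal{A}+\mathcal{Q}$ via Property~\ref{product}. The quadratic term I would handle in two steps: set $M(t):=B(t)R^{-1}(t)B'(t)\in L^\infty$, apply the product property twice to obtain $\mathcal{T}_T(M)=\mathcal{B}\mathcal{R}^{-1}\mathcal{B}^*$, then apply it once more with $P$ on both sides to get $\mathcal{T}_T(PMP)=\mathcal{P}\mathcal{B}\mathcal{R}^{-1}\mathcal{B}^*\mathcal{P}$. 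Because $P$ is $T$-periodic and the sliding window has length $T$, $\mathcal{P}$ is constant and $\dot{\mathcal{P}}=0$, yielding the algebraic Riccati equation. Hermitian positive definiteness of $\mathcal{P}$ and $\mathcal{Q}$ follows from Proposition~\ref{defpos} in the appendix.

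For the converse, suppose $\mathcal{P}$ is a hermitian positive definite solution of the algebraic equation. Viewed as a constant element of $H$, $\mathcal{P}$ is an equilibrium of the harmonic Carath\'eodory system, so by the Equilibria Admissibility Corollary~\ref{equi} its representative $P=\mathcal{F}_T^{-1}(\mathcal{P})$ is $T$-periodic, and by Theorem~\ref{CNS} satisfies the periodic Riccati differential equation. Symmetry and positive definiteness of $P$ transfer from $\mathcal{P}$ through the inverse Toeplitz transformation. Uniqueness in each direction passes through Theorem~\ref{CNS}: if either the differential or the algebraic equation admitted two distinct $T$-periodic/positive definite solutions, the bijection between $P$ and $\mathcal{P}$ would produce two distinct solutions on the other side, contradicting the assumed uniqueness. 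Finally, boundedness of $\mathcal{P}$ on $\ell^2$ follows as in Theorem~\ref{Lyapunov}: the representative $P$ is absolutely continuous and $T$-periodic, hence bounded on $[0,T]$, so $P\in L^\infty$ and Property~\ref{borne} applies.

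The principal difficulty, and the only real departure from Theorem~\ref{Lyapunov}, is the quadratic term. Two subtleties need care: (i) the factorization $\mathcal{T}_T(PMP)=\mathcal{P}\,\mathcal{T}_T(M)\,\mathcal{P}$ requires $M\in L^\infty$ so that the convolution (product) property can be iterated without losing $\ell^2$-control on the phasors, which is exactly what the assumption $|\det R(t)|\geq \eta$ a.e.\ buys; and (ii) the representative $P$ of a hermitian positive definite $\mathcal{P}$ need not a priori be pointwise positive definite, so I would verify this using the identity $x(t)'P(t)x(t)=X(t)^*\mathcal{P}X(t)$ furnished by Parseval (Theorem~\ref{thparseval}) applied to arbitrary $T$-periodic test vectors $x$, exactly as in Corollary~\ref{lyap}.
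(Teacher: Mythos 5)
Your proposal follows essentially the same route as the paper's proof: verify Assumption~\ref{as1} (using $|\det R|\geq\eta$ a.e.\ to get $R^{-1}\in L^\infty$ and Property~\ref{inv} for $\mathcal{R}^{-1}=\mathcal{T}_T(R^{-1})$), apply Theorem~\ref{CNS} and the Toeplitz product property term by term as in Theorem~\ref{Lyapunov}, use $T$-periodicity to make $\mathcal{P}$ constant so the differential equation collapses to the algebraic one, invoke Proposition~\ref{defpos} and Corollary~\ref{equi} for the converse, and Property~\ref{borne} for boundedness. You actually supply more detail than the paper on the two points it leaves implicit (the $L^1_{loc}$ check for the quadratic term and the iterated factorization $\mathcal{T}_T(PBR^{-1}B'P)=\mathcal{P}\mathcal{B}\mathcal{R}^{-1}\mathcal{B}^*\mathcal{P}$), so the proposal is correct and consistent with the paper's argument.
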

\begin{proof}
By Property~\ref{inv} given in the appendix, $R^{-1}$ is defined a.e. and $\mathcal{R}^{-1}=\mathcal{T}_T(R^{-1})$.
Following the assumptions made on matrices $A$, $B$, $Q$ and $R$, it is straightforward to show that the matrix differential equation satisfies Assumption~\ref{as1}.
For any $T>0$, by applying the CNS of Theorem~\ref{CNS} on a window of length $T$ and following similar steps as in the proof of Theorem~\ref{Lyapunov}, we obtain that if $P$ is the unique positive definite solution then $\mathcal{P}$ is the unique solution ($\mathcal{P}\in H$) of the differential equation:
$$\dot {\mathcal{P}}+\mathcal{P}(\mathcal{A}-\mathcal{N})+(\mathcal{A}-\mathcal{N})^*\mathcal{P}-\mathcal{P}\mathcal{B}\mathcal{R}^{-1}\mathcal{B}^*\mathcal{P}+\mathcal{Q}=0,$$
with $P(0)=\mathcal{F}_T(P)(0)$.
It can be noticed that $\mathcal{P}$ and $\mathcal{Q}$ are necessarily hermitian and positive definite (Propositon~\ref{defpos} given in the appendix).

If now, the length of the sliding window is equal to the period $T$, then, as $P$, $A$, $B$, $Q$ and $R$ are $T$-periodic, $\mathcal{P}$, $\mathcal{A}$, $\mathcal{B}$, $\mathcal{Q}$ and $\mathcal{R}$ are constant and $\dot {\mathcal{P}}=0$.
We conclude that $\mathcal{P}$ is the unique hermitian positive definite solution of the algebraic equation and we have necessarily $\mathcal{P}(t)=\mathcal{P}(0)=\mathcal{T}_T(P)(0)$.

Conversely, since the unique solution of the harmonic algebraic equation is independent of the initial condition (clearly in $H$ by Corollary~\ref{equi}), if $P(t)$ is a representative of $\mathcal{P}$, i.e. $\mathcal{P}=\mathcal{T}_T(P),$ it is unique. Moreover, as Theorem~\ref{CNS} states that this representative satisfies the periodic Riccati differential equation, $P(t)$ is absolutely continuous. Hence, $\mathcal{P}$ is a bounded operator on $\ell^2$ (Property~\ref{borne} given in the appendix).
\end{proof}

\begin{corollary}\label{lyap2}
The following are equivalent:
\begin{enumerate}
\item the function $\mathcal{V}(X)=X^*\mathcal{P}X$ is a Lyapunov function for the harmonic system in closed loop ($U=-\mathcal{K}X$ with $\mathcal{K}=\mathcal{R}^{-1}\mathcal{B}^*\mathcal{P}$ a block Toeplitz matrix): $$\dot X=(\mathcal{A}-\mathcal{B}\mathcal{K}-\mathcal{N})X,$$ 
\item the function $v(x)=x'Px$ is a Lyapunov function for the $T-$periodic linear system in closed loop ($u=-K(t)x$ with $K(t)=R^{-1}(t){B}'(t)P(t)$):
$$\dot x=(A(t)-B(t)K(t))x,$$ 
\item the functional
\begin{align*}
v(x)(t)&=<x,Px>_{L^2([t-T,t])}(t)\\&=\frac{1}{T}\int_{t-T}^tx(\tau)'P(\tau)x(\tau)d\tau,
\end{align*} is a Lyapunov function for the $T-$periodic linear system in closed loop ($u=-K(t)x$ with $K(t)=R^{-1}(t){B}'(t)P(t)$):
$$\dot x=(A(t)-B(t)K(t))x.$$ 
\end{enumerate}
\end{corollary}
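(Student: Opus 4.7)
The plan is to reduce Corollary~\ref{lyap2} to the already established Corollary~\ref{lyap} by rewriting both the periodic Riccati equation and its harmonic counterpart as closed-loop Lyapunov equations, and then appealing to the three-way equivalence that Corollary~\ref{lyap} already provides.

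First, in the time domain I would set $K(t)=R^{-1}(t)B'(t)P(t)$ and $A_{cl}(t)=A(t)-B(t)K(t)$, and complete the square. Using the symmetry of $P$ and $R$, one verifies that $K'B'P+PBK-K'RK=PBR^{-1}B'P$, so the periodic Riccati equation may be recast as
\begin{equation*}
\dot P(t)+A_{cl}'(t)P(t)+P(t)A_{cl}(t)+Q_{cl}(t)=0,
\end{equation*}
with $Q_{cl}(t)=Q(t)+K'(t)R(t)K(t)$ a $T$-periodic symmetric positive definite matrix function.

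Second, I would perform the analogous manipulation in the harmonic domain. The key observation is that $\mathcal{K}=\mathcal{R}^{-1}\mathcal{B}^*\mathcal{P}$ is block Toeplitz: since $\mathcal{R}^{-1}=\mathcal{T}_T(R^{-1})$, $\mathcal{B}^*=\mathcal{T}_T(B')$ and $\mathcal{P}=\mathcal{T}_T(P)$, Property~\ref{product} gives $\mathcal{K}=\mathcal{T}_T(K)$, so Proposition~\ref{static} applies and $U=-\mathcal{K}X$ has a well-defined time representative $u=-Kx$. The same algebraic rearrangement transforms the harmonic Riccati equation into
\begin{equation*}
\mathcal{P}(\mathcal{A}-\mathcal{B}\mathcal{K}-\mathcal{N})+(\mathcal{A}-\mathcal{B}\mathcal{K}-\mathcal{N})^*\mathcal{P}+\mathcal{Q}_{cl}=0,
\end{equation*}
where $\mathcal{Q}_{cl}=\mathcal{T}_T(Q_{cl})$ is hermitian positive definite.

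Finally, I would verify the regularity hypotheses needed to invoke Corollary~\ref{lyap} applied to the pair $(A_{cl},Q_{cl})$. From the proof of the harmonic Riccati theorem $P$ is absolutely continuous, hence bounded on $[0,T]$; combined with $B,R,R^{-1}\in L^{\infty}([0,T])$ this yields $K\in L^{\infty}([0,T])$, $A_{cl}\in L^{2}([0,T])$, and $Q_{cl}\in L^{\infty}([0,T])$. Corollary~\ref{lyap} then delivers directly the three-way equivalence between $\mathcal{V}(X)=X^*\mathcal{P}X$, $v(x)=x'Px$, and the sliding functional $v(x)(t)=\tfrac{1}{T}\int_{t-T}^t x(\tau)'P(\tau)x(\tau)\,d\tau$ being Lyapunov functions for the respective closed-loop systems. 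The main obstacle is not the algebra itself but the bookkeeping: one must confirm that $\mathcal{K}$ inherits the block Toeplitz structure required by Proposition~\ref{static}, and that all regularity assumptions of Corollary~\ref{lyap} transfer to the closed-loop data $(A_{cl},Q_{cl})$, so that the harmonic closed-loop dynamics genuinely corresponds to an admissible time-domain feedback.
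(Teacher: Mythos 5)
Your proposal is correct and follows essentially the same route as the paper: the paper's proof consists precisely of noting that $\mathcal{K}=\mathcal{R}^{-1}\mathcal{B}^*\mathcal{P}$ is block Toeplitz by Property~\ref{product} (so Proposition~\ref{static} guarantees an admissible time-domain feedback $u=-K(t)x$) and then asserting that the rest ``follows similar steps as in the proof of Corollary~\ref{lyap}.'' Your completion of the square, turning the Riccati equations into closed-loop Lyapunov equations for $(A_{cl},Q_{cl})$ with $Q_{cl}=Q+K'RK$, together with the regularity bookkeeping ($K\in L^{\infty}$, $A_{cl}\in L^{2}$, $Q_{cl}\in L^{\infty}$), is a faithful and somewhat more explicit instantiation of that omitted step.
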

\begin{proof}
First, as it is required by Proposition \ref{static} and using Property \ref{product} given in the appendix, it can be noticed that the static gain $\mathcal{K}=\mathcal{R}^{-1}\mathcal{B}^*\mathcal{P}$ is guaranteed to be always block Toeplitz. The remaining part of the proof follows similar steps as in the proof of Corollary~\ref{lyap}.
\end{proof}

%

\subsection{Stabilization of bilinear affine systems with periodic exogenous inputs} 

This subsection is dedicated to the design of stabilizing state feedback control for bilinear affine systems described by \eqref{affine} or equivalently by their harmonic model \eqref{affineh}. For these systems and for a given period $T$, the set of harmonic equilibrium points can be defined by:
\begin{align}
\mathcal{E}_T &= \{(X^e,S^e,W^e) ~|~ 0 = (\mathcal{A}(S^e)-\mathcal{N}) X^e + \mathcal{B}(S^e)W^e\}.\nonumber
\end{align}
Note that all these equilibrium points are well defined in $H$ following the same reasoning of Corollary~\ref{equi}.

\begin{proposition}\label{lawaffine} Assume that $w$ is a $T$-periodic function of class $L^{2}$. For any equilibrium point $(X^e,S^e,W) \in \mathcal{E}_T$ of system \eqref{affineh} such that the matrix $( \mathcal{A}(S^e)-\mathcal{N})$ is Hurwitz, the state feedback control law: 
\begin{align} 				
S(t)&=S^e-\Gamma \mathcal{G}^*(\mathcal{X}(t))\mathcal{P} (X(t)-X^e),\nonumber
\end{align}
stabilizes globally and asymptotically the harmonic system~\eqref{affineh} towards $X^e$ with:
\begin{itemize}
\item $\Gamma=\mathcal{T}_T(\gamma)>0$ for a given $T$-periodic symmetric positive definite and $L^{\infty}$ function $\gamma$,
\item $\mathcal{G}(\mathcal{X}(t))= \mathcal{A}_{dep}\mathcal{X}(t) + \mathcal{B}_{dep}\mathcal{W}$ with $\mathcal{X}(t)=\mathcal{T}_T(x)(t)$, $\mathcal{W}=\mathcal{T}_T(w)$ 
\item $\mathcal{P}$ the hermitian positive definite solution of the Lyapunov equation:
\begin{equation}(\mathcal{A}(S^e)-\mathcal{N}) ^* \mathcal{P}+\mathcal{P} (\mathcal{A}(S^e)-\mathcal{N}) +\mathcal{Q}=0,\label{syl2}\end{equation}
with $\mathcal{Q}=\mathcal{T}_T(Q)$ for a given $T$-periodic symmetric positive definite and $L^{\infty}$ function $Q$.
\end{itemize}
Moreover, the state feedback control:
 \begin{equation}
s(t)=s^e(t)-\gamma(t) g(x(t))'P(t)(x(t)-x^e(t),\label{emb}\end{equation} where $g(x(t))=A_{dep}x(t) + B_{dep}w(t)$, stabilizes globally and asymptotically the bilinear affine system~\eqref{affine} towards the $T$-periodic trajectory $x^e(t)=\mathcal{F}_T^{-1}(X^e)(t)$.
\end{proposition}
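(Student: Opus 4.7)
The plan is to run a Lyapunov analysis directly on the error in the harmonic domain, and then transfer the conclusion back to the time domain via the strict equivalence already established in Theorem~\ref{CNS} and Proposition~\ref{model_harmo_affine}. First I would introduce the error variables $\tilde X = X - X^e$ and $\tilde S = S - S^e$, subtract the equilibrium identity $0 = (\mathcal{A}(S^e)-\mathcal{N})X^e + \mathcal{B}(S^e)W$ from \eqref{affineh}, and use the affine decompositions \eqref{r1}--\eqref{r2} to obtain
\[
\dot{\tilde X} = (\mathcal{A}(S^e)-\mathcal{N})\tilde X + [(\mathcal{S}-\mathcal{S}^e)\otimes A_{dep}]X + [(\mathcal{S}-\mathcal{S}^e)\otimes B_{dep}]W.
\]
The key algebraic step is to reorganize this bilinear term: because $s-s^e$ is scalar-valued, Property~\ref{product} applied in its ``dual'' form (swapping the roles of the convolutor and convolutee) gives
\[
[(\mathcal{S}-\mathcal{S}^e)\otimes A_{dep}]X + [(\mathcal{S}-\mathcal{S}^e)\otimes B_{dep}]W = \mathcal{G}(\mathcal{X})\,\tilde S,
\]
with $\mathcal{G}(\mathcal{X}) = \mathcal{A}_{dep}\mathcal{X} + \mathcal{B}_{dep}\mathcal{W}$. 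This delivers the compact error model $\dot{\tilde X} = (\mathcal{A}(S^e)-\mathcal{N})\tilde X + \mathcal{G}(\mathcal{X})\tilde S$.

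Next, since $\mathcal{A}(S^e)-\mathcal{N}$ is Hurwitz and $\mathcal{Q}\succ 0$, the algebraic Lyapunov equation~\eqref{syl2} admits a unique hermitian positive definite solution $\mathcal{P}$ that is bounded on $\ell^2$ (in the spirit of Theorem~\ref{Lyapunov}). Substituting $\tilde S = -\Gamma\mathcal{G}^*(\mathcal{X})\mathcal{P}\tilde X$ and taking $\mathcal{V}(\tilde X) = \tilde X^*\mathcal{P}\tilde X$ as Lyapunov candidate, differentiation along the closed-loop trajectory together with \eqref{syl2} yields
\[
\dot{\mathcal{V}} = -\tilde X^*\mathcal{Q}\tilde X - 2\,\tilde X^*\mathcal{P}\mathcal{G}(\mathcal{X})\Gamma\mathcal{G}^*(\mathcal{X})\mathcal{P}\tilde X \quad \text{a.e.},
\]
which is strictly negative for $\tilde X\neq 0$ because $\mathcal{Q}\succ 0$ and $\Gamma\succeq 0$. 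Combined with the radial unboundedness of $\mathcal{V}$ in $\ell^2$ (provided by $\mathcal{P}\succ 0$), a standard Lyapunov argument gives global asymptotic stability of $X^e$ for the closed-loop harmonic system.

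Finally, I would translate the result back to the time domain. Because $\mathcal{P}=\mathcal{T}_T(P)$, $\Gamma=\mathcal{T}_T(\gamma)$, $\mathcal{X}(t)=\mathcal{T}_T(x)(t)$ and $\mathcal{W}=\mathcal{T}_T(w)$ are all block Toeplitz, Property~\ref{product} shows that the harmonic feedback is exactly the sliding Fourier decomposition of the scalar law~\eqref{emb}. Since $x$ is continuous, $P,\gamma\in L^\infty$, $s^e$, $w\in L^2_{loc}$, the control $s$ lies in $L^2_{loc}$, so Proposition~\ref{model_harmo_affine} provides the strict equivalence between the closed-loop Carath\'eodory solution $x$ of \eqref{affine} and the closed-loop solution $X$ of \eqref{affineh}. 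Corollary~\ref{equi} identifies $x^e = \mathcal{F}_T^{-1}(X^e)$ as a genuine $T$-periodic trajectory, and by Parseval's identity (Theorem~\ref{thparseval}) asymptotic convergence $X\to X^e$ in $\ell^2$ is equivalent to convergence of $x$ to $x^e$ in the sliding $L^2([t-T,t])$ sense, in line with Corollary~\ref{lyap2}.

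The main obstacle I anticipate is the bilinear reorganization into $\mathcal{G}(\mathcal{X})\tilde S$: it hinges on exploiting that $\tilde S$ is scalar-valued so that Property~\ref{product} can be applied with the scalar factor on either side of the convolution, while keeping the block-Toeplitz structure transparent. Once that identity is secured, the Lyapunov computation and the transfer to the time domain are routine applications of the machinery developed in Sections~III and~IV.
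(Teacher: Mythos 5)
Your proposal is correct and follows essentially the same route as the paper: form the error dynamics, use the scalarity of $s$ to rewrite the bilinear term as $\mathcal{G}(\mathcal{X})\tilde S$ via Property~\ref{product}, apply the harmonic Lyapunov equation with $\mathcal{V}(\tilde X)=\tilde X^*\mathcal{P}\tilde X$, and transfer back to the time domain through the established equivalence (the paper invokes item 3 of Corollary~\ref{lyap} for this last step). Your factor of $2$ in the cross term of $\dot{\mathcal{V}}$ is in fact the correct coefficient; the paper's displayed expression omits it, which is immaterial to the sign argument.
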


\begin{proof}
For a constant input $W$ and an harmonic equilibrium defined by:
\begin{equation*}
\begin{array}{rcl}
0 &=& (\mathcal{A}(S^e)-\mathcal{N})X^e+\mathcal{B}(S^e)W,
\end{array}
\end{equation*}
the harmonic model \eqref{affineh} can be rewritten as:
\begin{align}
\dot{X}(t) = (\mathcal{A}&(S^e)-\mathcal{N})(X(t)-X^e)+\cdots\nonumber\\&(\mathcal{A}(S(t))-\mathcal{A}(S^e))X(t) +\cdots\nonumber\\&(\mathcal{B}(S(t))-\mathcal{B}(S^e))W^e.\label{har}
\end{align}
It can be noticed that $$(\mathcal{A}(S(t))-\mathcal{A}(S^e))X(t)= \mathcal{A}_{dep}\mathcal{X}(t)(S(t)-S^e),$$ where $\mathcal{X}(t)=\mathcal{T}_T(x)(t)$
since 
\begin{align*}
A(s(t))x(t) &-A(s^e(t))x(t)=(A_{ind}+s(t)A_{dep})x(t)-\cdots\\
 &-(A_{ind}+s^e(t)A_{dep})x(t)\\
&=A_{dep}x(t)(s(t)-s^e(t)),
\end{align*}
 which by sliding Fourier decomposition leads to the result (see Property~\ref{product} given in the appendix).
For $(\mathcal{B}(S(t))-\mathcal{B}(S^e))W^e$, using similar steps as before, we have:
 $$(\mathcal{B}(S(t))-\mathcal{B}(S^e))W^e=\mathcal{B}_{dep}\mathcal{W}(S(t)-S^e),$$
where $\mathcal{W}=\mathcal{T}_T(w)$.
 
Therefore, equation \eqref{har} can be rewritten as:
\begin{align}
\dot{X} & = (\mathcal{A}(S^e)-\mathcal{N})(X(t)-X^e)+\mathcal{G}(\mathcal{X}(t))(S(t)-S^e)\label{har2}
\end{align}
with $\mathcal{G}(\mathcal{X}(t))= \mathcal{A}_{dep}\mathcal{X}(t) + \mathcal{B}_{dep}\mathcal{W}$.
As $(\mathcal{A}(S^e)-\mathcal{N})$ is assumed to be Hurwitz, there exists an hermitian positive definite matrix $\mathcal{P}$ solution of the harmonic Lyapunov equation:
$$(\mathcal{A}(S^e)-\mathcal{N}) ^* \mathcal{P}+\mathcal{P} (\mathcal{A}(S^e)-\mathcal{N}) +\mathcal{Q}=0,$$
for any $\mathcal{Q}=\mathcal{T}_T(Q)$ defined by a $T$-periodic symmetric positive definite and $L^{\infty}$ function $Q$. 

Consider the state feedback law:
$$S(t)=S^e-\Gamma \mathcal{G}^*(\mathcal{X}(t))\mathcal{P}(X(t)-X^e),$$ where $\Gamma=\mathcal{T}_T(\gamma)$ is defined by a $T$-periodic symmetric positive definite and $L^{\infty}$ function $\gamma$, 
it is straightforward to prove that: $$\mathcal{V}(X-X^e)=(X-X^e)^*\mathcal{P}(X-X^e),$$ is a Lyapunov function for \eqref{har2}. Indeed, the time derivative of $\mathcal{V}$ along the system trajectories satisfies: 
\begin{align*}
\dot{\mathcal{V}}&(X-X^e)=-(X-X^e)^*\mathcal{Q}(X-X^e)-\ \cdots\\
&(X-X^e)^*\mathcal{P}\mathcal{G}(\mathcal{X}(t))\Gamma\mathcal{G}^*(\mathcal{X}(t))\mathcal{P}(X-X^e)<0, \text{ if }X \neq X^e\end{align*}

Therefore, one concludes that the functional $v(x-x^e)$ is a Lyapunov function (item 3. Corollary~\ref{lyap}) for the bilinear affine system \eqref{affine}
with $$s(t)=s^e(t)-\gamma(t) g(x(t))'P(t)(x(t)-x^e(t)),$$ and where $g(x(t))=A_{dep}x(t) + B_{dep}w(t)$.
In other words, the state feedback law $s(t)$ stabilizes globally and asymptotically the bilinear affine system \eqref{affine}
towards the $T-$periodic trajectory $x^e(t)$ with $x^e=\mathcal{F}_T^{-1}(X^ e)$. 
 \end{proof}
\subsection{Stabilization of bilinear affine systems with periodic exogenous inputs and disturbances} 
Exogenous disturbances lead to deviations from the nominal operating point or to undesirable behaviours on the controlled variables. Forwarding control technics as in \cite{Astolfi2017}, \cite{Astolfi2019}, based on internal model principle \cite{Francis}, are efficient to prevent these drawbacks. In this subsection, we show how the results of these papers can be used to extend forwarding control technics to $T-$periodic systems by considering the harmonic model rather than the time domain one. To this end, we introduce an augmented state $(X,Z)$ whose dynamics is given by:
\begin{align}
\dot X(t)&=(\mathcal{A}-\mathcal{N})(X(t)-X^e)+\mathcal{G}(X(t))(S(t)-S^e)\label{e1}\\
\dot Z(t)&=(\mathcal{O}-\mathcal{N}) Z(t)+\mathcal{L}\mathcal{C}(X(t)-X^e).\label{e2}
\end{align}
Note that the first equation corresponds to equation \eqref{har} with $\mathcal{A}=\mathcal{A}(S^e)$ and it describes a bilinear affine system \eqref{affine}. 
For the second equation, we assume that there exist respectively a $T-$periodic and $L^{\infty}$ representative $L$ and $C$ for $\mathcal{L}$ and $\mathcal{C}$ as well as a $T-$periodic skew symmetric and $L^{2}$ matrix function $O(t)$ such that $\mathcal{O}=\mathcal{T}_T(O)$ is skew hermitian i.e. $-\mathcal{O}=\mathcal{O}^*$.

Now, let us show that equation \eqref{e2} allows to incorporate in the design integral actions or oscillators in order to reject undesirable harmonics. For example, if $O(t)$ takes the form:
$$O(t)=\left(\begin{array}{ccc}0 & 0 & 0 \\0 & 0 & -k\omega \\0 & k\omega & 0\end{array}\right)$$ and $(y_1,y_2,0)'=L(t)C(t)(x-x^e)$ then, 
an integral action is introduced with input $y_1$ on the first line and an oscillator with input $y_2$ on the second line. The aim of this oscillator is to reject an exogenous constant disturbance that acts on the harmonic model and disturbs the $k-$th phasor of $y_2$ (see \cite{Astolfi2019}).
\color{black}

\begin{proposition}[Forwarding control]\label{forwarding}
The state feedback control law:
\begin{align}
S(t)=S^e- \eta \mathcal{G}(X(t))^*&\left[\mathcal{P} (X(t)-X^e)-\cdots \right.\nonumber\\&\left .\mathcal{M}^*(Z(t)-\mathcal{M}(X(t)-X^e))\right] \nonumber
\end{align}
where $\eta>0$, $\mathcal{P}$ determined as in Proposition~\ref{lawaffine} and $\mathcal{M}$ satisfies 
the harmonic Sylvester equation:
\begin{equation}
(\mathcal{O}-\mathcal{N}) \mathcal{M}-\mathcal{M}(\mathcal{A}-\mathcal{N})+\mathcal{L}\mathcal{C}=0\label{syl}\end{equation}
stabilizes globally and asymptotically the equilibrium $X^e$ of \eqref{e1}-\eqref{e2}. 
\end{proposition}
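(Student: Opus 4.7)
The plan is to combine a forwarding change of coordinates with a composite Lyapunov argument, exploiting both the Sylvester equation~\eqref{syl} and the Lyapunov equation~\eqref{syl2}. First, I would introduce the error variable $\tilde Z = Z - \mathcal{M}(X-X^e)$. Differentiating along equations~\eqref{e1}--\eqref{e2} and using~\eqref{syl} to cancel the $(X-X^e)$ driving term produces the cascade form
\begin{equation*}
\dot{\tilde Z} = (\mathcal{O}-\mathcal{N})\tilde Z - \mathcal{M}\,\mathcal{G}(X)(S-S^e),
\end{equation*}
so the $\tilde Z$--dynamics is driven only through the control channel. This is precisely the role of the Sylvester equation in the forwarding construction.

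Next I would propose the composite Lyapunov candidate
\begin{equation*}
\mathcal{V}(X-X^e,\tilde Z) = (X-X^e)^*\mathcal{P}(X-X^e) + \tilde Z^*\tilde Z,
\end{equation*}
well defined on $\ell^2\times\ell^2$ since $\mathcal{P}$ is a bounded operator on $\ell^2$ by Proposition~\ref{lawaffine}. Along closed-loop trajectories, \eqref{syl2} handles the quadratic term in $(X-X^e)$, while the skew-Hermitian property $(\mathcal{O}-\mathcal{N})^*=-(\mathcal{O}-\mathcal{N})$ (using $\mathcal{N}^*=-\mathcal{N}$ and $\mathcal{O}^*=-\mathcal{O}$) annihilates the quadratic term in $\tilde Z$. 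The remaining cross terms aggregate into $2\,\mathrm{Re}\{\Lambda^*\mathcal{G}(X)(S-S^e)\}$ with $\Lambda := \mathcal{P}(X-X^e)-\mathcal{M}^*\tilde Z$, exactly the quantity appearing in the proposed feedback. Substituting $S-S^e = -\eta\,\mathcal{G}(X)^*\Lambda$ then yields
\begin{equation*}
\dot{\mathcal{V}} = -(X-X^e)^*\mathcal{Q}(X-X^e) - 2\eta\,\|\mathcal{G}(X)^*\Lambda\|_{\ell^2}^{2} \leq 0,
\end{equation*}
which already gives Lyapunov stability of $(X^e,0)$ in the $(X,\tilde Z)$ coordinates and hence of $(X^e,0)$ in the original $(X,Z)$ coordinates.

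To upgrade this to global asymptotic stability I would invoke a LaSalle-type invariance argument: on the largest invariant set inside $\{\dot{\mathcal{V}}=0\}$ one has $X\equiv X^e$ and $\mathcal{G}(X^e)^*\Lambda\equiv 0$, so $S\equiv S^e$, $\dot X\equiv 0$, and $\dot{\tilde Z}=(\mathcal{O}-\mathcal{N})\tilde Z$. The persistent vanishing of $\mathcal{G}(X^e)^*\mathcal{M}^* e^{(\mathcal{O}-\mathcal{N})t}\tilde Z(0)$ then forces $\tilde Z\equiv 0$ provided the pair $(\mathcal{O}-\mathcal{N},\mathcal{G}(X^e)^*\mathcal{M}^*)$ is observable, an internal-model condition implicit in the design of $(\mathcal{O},\mathcal{L},\mathcal{C})$. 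Note also that existence and uniqueness of $\mathcal{M}$ in~\eqref{syl} is ensured by the disjoint spectra of $\mathcal{O}-\mathcal{N}$ (imaginary axis) and $\mathcal{A}-\mathcal{N}$ (Hurwitz).

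The hard part will be the LaSalle step: carrying out the invariance argument rigorously on the infinite-dimensional space $\ell^2\times\ell^2$, where pre-compactness of bounded trajectories is not automatic, and extracting the observability property from the structural assumptions on $(\mathcal{O},\mathcal{L},\mathcal{C})$. A secondary point to verify is admissibility of the feedback in $H$: by Proposition~\ref{static}, $S(t)$ has a time-domain representative only if its construction preserves the block Toeplitz structure along trajectories, which must be checked so that Theorem~\ref{CNS} transfers the harmonic stability conclusion into the time domain for the underlying bilinear affine system with exogenous inputs and disturbances.
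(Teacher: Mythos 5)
Your core computation is exactly the paper's proof: the same change of variable $\tilde Z = Z-\mathcal{M}(X-X^e)$, the same composite candidate $\mathcal{V}=\tilde X^*\mathcal{P}\tilde X+\tilde Z^*\tilde Z$, cancellation of the $\tilde X$-driving term in $\dot{\tilde Z}$ via the Sylvester equation \eqref{syl}, annihilation of the $\tilde Z$-quadratic term via skew-hermitianity of $\mathcal{O}-\mathcal{N}$, and the observation that the residual cross term is $2\,\mathrm{Re}\{\Lambda^*\mathcal{G}(X)\tilde S\}$ with $\Lambda=\mathcal{P}\tilde X-\mathcal{M}^*\tilde Z$, killed by the choice $\tilde S=-\eta\,\mathcal{G}(X)^*\Lambda$. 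Where you diverge is in the final step. The proposition only claims asymptotic stabilization of $X^e$, i.e.\ of the $X$-component; accordingly, the paper stops at $\dot{\mathcal{V}}\leq-\tilde X^*\mathcal{Q}\tilde X-\tfrac{1}{\eta}\tilde S^*\tilde S$, notes that $\dot{\mathcal{V}}=0$ forces $\tilde X=0$, and concludes that $\mathcal{V}$ is a \emph{weak} Lyapunov function with $Z$ merely bounded --- it does not assert $\tilde Z\to 0$. You instead attempt to prove asymptotic stability of the full augmented state $(X^e,0)$, which, as you correctly diagnose, requires an observability condition on $(\mathcal{O}-\mathcal{N},\mathcal{G}(X^e)^*\mathcal{M}^*)$ that is nowhere assumed in the statement, plus a LaSalle invariance argument whose pre-compactness prerequisites are genuinely problematic on $\ell^2\times\ell^2$. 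That extra step is both unnecessary for the claim as stated and not closable from the given hypotheses, so you should drop it and settle, as the paper does, for convergence of $X$ to $X^e$ together with boundedness of $Z$. (Even for that weaker conclusion a Barbalat-type argument is implicitly needed to pass from $\int\tilde X^*\mathcal{Q}\tilde X\,dt<\infty$ to $\tilde X\to 0$; the paper glosses over this too.) Your closing remarks on existence of $\mathcal{M}$ via spectral separation and on Toeplitz admissibility of the feedback are sound and consistent with how the paper handles these points (the latter through Proposition~\ref{static} and Corollary~\ref{cont2}).
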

\begin{proof}First, notice that an equivalence between the solution of the harmonic Sylvester differential equation and its time $T-$periodic version can be proved using similar steps as in the proof of Theorem~\ref{Lyapunov}.
Let $\tilde X=(X-X^e)$, and consider the Lyapunov function candidate defined by:
$$\mathcal{V}(\tilde X,Z)=\tilde X^*\mathcal{P} \tilde X+(Z-\mathcal{M} \tilde X)^*(Z-\mathcal{M} \tilde X).$$
Then, the derivative of $\mathcal{V}$ along the trajectories leads to:
\begin{align*}\dot{\mathcal{V}}(\tilde X,Z)=-\tilde X^*&\mathcal{Q} \tilde X+2\tilde X \mathcal{P}\mathcal{G}(X)\tilde S+\cdots \\&2(Z-\mathcal{M} \tilde X)^*((\mathcal{O}-\mathcal{N}) Z+\cdots \\&\mathcal{L}\mathcal{C}\tilde X-\mathcal{M} ((\mathcal{A}-\mathcal{N})\tilde X+\mathcal{G}(X)\tilde S)).\end{align*}
where $\tilde S=(S-S^e)$.
Using the relation $(\mathcal{O}-\mathcal{N}) \mathcal{M}-\mathcal{M}(\mathcal{A}-\mathcal{N})+\mathcal{L}\mathcal{C}=0$, we have:
\begin{align*}\dot{\mathcal{V}}(\tilde X)=-\tilde X^*&\mathcal{Q} \tilde X+\cdots\\
&2\left (\tilde X ^*\mathcal{P}\mathcal{G}(X)-(Z-\mathcal{M} \tilde X)^*\mathcal{M}\mathcal{G}(X)\right)\tilde S+\cdots\\
&2(Z-\mathcal{M}\tilde X)^*(\mathcal{O}-\mathcal{N})(Z-\mathcal{M}\tilde X).
\end{align*}
As $(\mathcal{O}-\mathcal{N})^*=-(\mathcal{O}-\mathcal{N})$, one gets for any $Y$, $$Y^*(\mathcal{O}-\mathcal{N})Y=0,$$ 
and the derivative reduces to:
\begin{align*}\dot{\mathcal{V}}(\tilde X,Z)&=-\tilde X^*\mathcal{Q} \tilde X+2\left (\tilde X ^*\left[\mathcal{P}-(Z-\mathcal{M} \tilde X)^*\mathcal{M}\right]\mathcal{G}(X)\right)\tilde S.
\end{align*}
Taking $$\tilde S= - \eta \mathcal{G}(X)^*\left[\mathcal{P}\tilde X-\mathcal{M}^* (Z-\mathcal{M}\tilde X)\right]$$ for any $\eta>0$ yields the result:
$$\dot{\mathcal{V}}(\tilde X,Z)=-\tilde X^*\mathcal{Q} \tilde X-\frac{1}{\eta}\tilde S^* \tilde S<0, \ \tilde X\neq 0.$$
Moreover, as $\dot{\mathcal{V}}=0$ implies $\tilde X= 0$, it follows that $\mathcal{V}(0,Z)=Z^*Z$ is constant and we conclude that $Z$ is bounded and that $\mathcal{V}(\tilde X,Z)$ is a weak Lyapunov function.

\end{proof}
The following corollary follows easily.
\begin{corollary}\label{cont2}
The state feedback control law:
\begin{align}
s(t)=s^e(t)-&\eta g(x(t))'\left[P(t)\tilde x(t)-\cdots\right. \nonumber\\
&\left.-M(t)'(z(t)-M(t)\tilde x(t))\right] \label{fb}
\end{align}
with $\tilde x(t)=x(t)-x^e(t)$, $g(x(t))=A_{dep}x(t) + B_{dep}w(t)$ and where $P(t)$ and $M(t)$ are respectively the representative of $\mathcal{P}=\mathcal{T}_T(P)$ and $\mathcal{M}=\mathcal{T}_T(M)$ (see Proposition~\ref{forwarding})
stabilizes globally and asymptotically the system: 
\begin{align}
\dot { x}(t)&=A(s(t)) x(t)+B( s(t)) w(t)\nonumber\\ 
\dot z(t)&=O(t)z(t)+L(t)C(t)\tilde x(t) \nonumber
\end{align}
towards the $T$-periodic trajectory $x^e(t)=\mathcal{F}_T^{-1}(X^e)(t)$.
\end{corollary}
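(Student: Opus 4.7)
The plan is to descend the harmonic stabilization result of Proposition~\ref{forwarding} to the time domain by invoking the equivalence machinery developed in Sections III and IV. The key observation is that the augmented system \eqref{e1}--\eqref{e2} is the harmonic image (via $\mathcal{F}_T$) of the time-domain augmented system stated in the corollary, and that the harmonic feedback built in Proposition~\ref{forwarding} is the sliding Fourier decomposition of the proposed time-domain feedback $s(t)$. Once these two facts are established, global asymptotic stability transfers from the harmonic $(X,Z)$-coordinates to the time-domain $(x,z)$-coordinates by the isometry in Theorem~\ref{thparseval} and Corollary~\ref{lyap}.

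First I would verify that $s(t)$ given in \eqref{fb} coincides with $\mathcal{F}_T^{-1}(S)(t)$ where $S(t)$ is the harmonic feedback of Proposition~\ref{forwarding}. Since $\mathcal{P}=\mathcal{T}_T(P)$ and $\mathcal{M}=\mathcal{T}_T(M)$ are block Toeplitz, Proposition~\ref{static} guarantees that the static gains acting on $X-X^e$ and on $Z-\mathcal{M}(X-X^e)$ produce harmonic signals in $H$ whose representatives are the $T$-periodic matrix-valued time functions $P(t)$ and $M(t)$ applied to $\tilde x(t)$ and $z(t)-M(t)\tilde x(t)$, respectively. Similarly, by Property~\ref{product} (convolution rule), $\mathcal{G}(\mathcal{X}(t))=\mathcal{A}_{dep}\mathcal{X}(t)+\mathcal{B}_{dep}\mathcal{W}$ is the Toeplitz transformation of the time function $g(x(t))=A_{dep}x(t)+B_{dep}w(t)$. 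Reading the scalar product $\mathcal{G}^*[\cdot]$ again through Property~\ref{product} yields the scalar $s(t)$ exactly as written in \eqref{fb}, so $\mathcal{F}_T(s)(t)=S(t)$ belongs to $H$.

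Next I would apply Proposition~\ref{model_harmo_affine} to the bilinear affine block $\dot x = A(s)x+B(s)w$: its harmonic image is \eqref{e1}, with the control $s(t)$ above producing $S(t)$ in closed loop. For the exosystem $\dot z = O(t)z + L(t)C(t)\tilde x$, an argument analogous to the proof of Proposition~\ref{lineaire} (using that $O\in L^2$, $L,C\in L^\infty$ are $T$-periodic, and invoking Theorem~\ref{CNS}) shows that its harmonic image is precisely \eqref{e2}. Hence the closed loop in the time domain and the closed loop in the harmonic domain are in one-to-one correspondence through $\mathcal{F}_T/\mathcal{F}_T^{-1}$, with trajectories $(x,z)\leftrightarrow(X,Z)$ linked by $x=\mathcal{F}_T^{-1}(X)$, $z=\mathcal{F}_T^{-1}(Z)$.

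Finally, Proposition~\ref{forwarding} ensures that $\mathcal{V}(\tilde X,Z)=\tilde X^*\mathcal{P}\tilde X+(Z-\mathcal{M}\tilde X)^*(Z-\mathcal{M}\tilde X)$ is a weak Lyapunov function whose LaSalle-type analysis already performed in that proof yields $\tilde X\to 0$ and $Z$ bounded. Applying item~3 of Corollary~\ref{lyap} (and its straightforward extension to the quadratic form in $Z-\mathcal{M}\tilde X$ via Parseval's identity) gives the equivalent time-domain functional
\begin{equation*}
v(\tilde x,z)(t)=\tfrac{1}{T}\!\!\int_{t-T}^{t}\!\!\tilde x(\tau)'P(\tau)\tilde x(\tau)d\tau+\tfrac{1}{T}\!\!\int_{t-T}^{t}\!\!\bigl(z(\tau)-M(\tau)\tilde x(\tau)\bigr)'\bigl(z(\tau)-M(\tau)\tilde x(\tau)\bigr)d\tau,
\end{equation*}
whose derivative along the time-domain closed loop equals $\dot{\mathcal{V}}(\tilde X,Z)$ almost everywhere. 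Convergence $\tilde X\to 0$ in $\ell^2$ then forces $\tilde x\to 0$ in $L^2([t-T,t])$ (Parseval), which together with $T$-periodicity of $x^e=\mathcal{F}_T^{-1}(X^e)$ gives the announced global asymptotic convergence of $x(t)$ to $x^e(t)$. The main obstacle I anticipate is the careful bookkeeping in step two, namely ensuring that the Toeplitz factorization of $\mathcal{G}(\mathcal{X}(t))^*$ correctly produces the scalar $g(x(t))'$ rather than a spurious infinite-dimensional object; everything else is an application of already-proved equivalences.
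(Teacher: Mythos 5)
Your proposal is correct and follows essentially the same route as the paper: the paper's own proof is a one-line invocation of Theorem~\ref{CNS}, and your argument is a faithful unpacking of exactly what that invocation requires (Toeplitz structure of the gains via Proposition~\ref{static} and Property~\ref{product}, the harmonic image of the augmented system, and the descent of the Lyapunov functional via Parseval as in Corollary~\ref{lyap}).
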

\begin{proof}
The proof is established invoking Theorem~\ref{CNS}.
\end{proof}

The following proposition is of particular interest when saturations occur. 
\begin{proposition}\label{slaw}
If the control domain is restricted to a closed set $S_c$ and if the $T-$periodic trajectory $s^e$ belongs to the interior of $S_c$, the saturation function
\begin{equation}sat(s(t))=s^e(t)+\alpha(t) (s(t)-s^e(t))\label{sat} \end{equation}
where $\alpha(t) =max\{0\leq\alpha \leq1: s^e(t)+\alpha \tilde s(t)\in S_c\}$, does not modify the stability property of the control law of Corollary~\ref{cont2}.
\end{proposition}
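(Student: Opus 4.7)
The plan is to re-run the Lyapunov analysis of Proposition~\ref{forwarding} with the saturated feedback in place of the unsaturated one, showing that the only modification of the Lyapunov derivative is the insertion of a multiplication by $\alpha(t)\in[0,1]$ in the cross term, which preserves its sign.

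Following the decomposition carried out in the proof of Proposition~\ref{lawaffine} and using the identity $sat(s(t))-s^e(t)=\alpha(t)\tilde s(t)$ built into \eqref{sat}, the error dynamics in the time domain become
\begin{align*}
\dot{\tilde x}(t) &= A(s^e(t))\tilde x(t) + g(x(t))\,\alpha(t)\tilde s(t),\\
\dot z(t) &= O(t)z(t) + L(t)C(t)\tilde x(t),
\end{align*}
the $z$-equation being unaffected because it does not depend on the control. Applying Theorem~\ref{CNS} together with the Toeplitz product rule (Property~\ref{product}), the harmonic counterpart reads
\begin{align*}
\dot{\tilde X} &= (\mathcal{A}(S^e)-\mathcal{N})\tilde X + \mathcal{G}(\mathcal{X})\,\mathcal{T}_T(\alpha)\tilde S,\\
\dot Z &= (\mathcal{O}-\mathcal{N})Z + \mathcal{L}\mathcal{C}\tilde X,
\end{align*}
where $\tilde S$ denotes the unsaturated harmonic feedback of Proposition~\ref{forwarding}.

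Differentiating the Lyapunov candidate $\mathcal{V}(\tilde X,Z)=\tilde X^*\mathcal{P}\tilde X+(Z-\mathcal{M}\tilde X)^*(Z-\mathcal{M}\tilde X)$ along these dynamics, all manipulations of the proof of Proposition~\ref{forwarding} (use of the Sylvester equation \eqref{syl} and of the skew-Hermitian character of $\mathcal{O}-\mathcal{N}$) carry over unchanged, and only the cross term is modified, becoming
$$2[\tilde X^*\mathcal{P}-(Z-\mathcal{M}\tilde X)^*\mathcal{M}]\mathcal{G}(\mathcal{X})\,\mathcal{T}_T(\alpha)\tilde S.$$
Invoking the already established identity $[\tilde X^*\mathcal{P}-(Z-\mathcal{M}\tilde X)^*\mathcal{M}]\mathcal{G}(\mathcal{X})=-\eta^{-1}\tilde S^*$, this reduces to $-\tfrac{2}{\eta}\tilde S^*\mathcal{T}_T(\alpha)\tilde S$, yielding
$$\dot{\mathcal{V}}(\tilde X,Z)=-\tilde X^*\mathcal{Q}\tilde X-\tfrac{2}{\eta}\tilde S^*\mathcal{T}_T(\alpha)\tilde S.$$
Since $\alpha(\tau)\in[0,1]$ for almost every $\tau$, Parseval's identity (Theorem~\ref{thparseval}) gives
$$\tilde S^*\mathcal{T}_T(\alpha)\tilde S=\langle\tilde s,\alpha\tilde s\rangle_{L^2([t-T,t])}=\frac{1}{T}\int_{t-T}^{t}\alpha(\tau)\,\tilde s(\tau)^2\,d\tau\ge 0,$$
so $\dot{\mathcal{V}}\le -\tilde X^*\mathcal{Q}\tilde X\le 0$ with equality forcing $\tilde X=0$. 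The LaSalle-type argument concluding the proof of Proposition~\ref{forwarding} (on $\{\tilde X=0\}$, $\mathcal{V}(0,Z)=Z^*Z$ is constant, so $Z$ stays bounded and $\mathcal{V}$ is a weak Lyapunov function) then applies verbatim, which via Theorem~\ref{CNS} and Corollary~\ref{cont2} gives global asymptotic convergence of the closed loop to the $T$-periodic trajectory $x^e(t)=\mathcal{F}_T^{-1}(X^e)(t)$.

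The main obstacle is the positivity statement $\mathcal{T}_T(\alpha)\succeq 0$ on $\ell^2$. The key observation is that this Toeplitz operator is nothing but the sliding-window version of multiplication by the nonnegative scalar $\alpha$ on $L^2([t-T,t])$, so Parseval reduces it to the trivial pointwise inequality $\alpha\ge 0$ which holds by definition of the saturation. One should additionally verify that $\alpha(\cdot)$ is measurable and essentially bounded by $1$ so that $\mathcal{T}_T(\alpha)$ is a bounded operator on $\ell^2$ and Property~\ref{product} applies to the product $g(x)\alpha\tilde s$; this follows from the closedness of $S_c$, the interior assumption on $s^e$, and the regularity of $\tilde s$ inherited from Corollary~\ref{cont2}.
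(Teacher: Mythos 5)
Your proof is correct and follows essentially the same route as the paper's: the saturation merely multiplies the control deviation by the nonnegative scalar $\alpha(t)$, so the only change in the Lyapunov derivative is a factor $\alpha\geq 0$ inside the (already nonpositive) control-induced term, leaving $\dot{\mathcal{V}}\leq-\tilde X^*\mathcal{Q}\tilde X$. The paper performs this computation directly in the time domain with the functional $v(\tilde x,z)(t)$ of Corollary~\ref{lyap}, whereas you insert $\mathcal{T}_T(\alpha)$ in the harmonic domain and invoke Parseval at the end --- the same argument viewed through the isometry.
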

\begin{proof}
If the $T-$periodic trajectory $s^e$ belongs to the interior of $S_c$, there always exists, at any time $t$, an unique strictly positive $\alpha(t)\leq 1$ 
such that $\alpha(t) =\max\{0\leq\alpha \leq1: s^e(t)+\alpha (s(t)-s^e(t))\in S_c\}$. 
Let us consider the positive definite functional (as in Corollary \ref{lyap}):
\begin{align*}
v(\tilde x,z)(t)&=<\tilde x,P\tilde x>_{L^2([t-T,t])}(t)\ \cdots\\
&+<z-M\tilde x,z-M\tilde x>_{L^2([t-T,t])}(t)
\end{align*} 
where $\tilde x=x-x^e$.
When the feedback law \eqref{fb} is saturated as in \eqref{sat}, the derivative along the trajectories is given by:
\begin{align*}
&\dot v(x,z)(t)=-<\tilde x,Q\tilde x>_{L^2([t-T,t])}(t)\ \cdots\\
&-\eta<g(x)'P\tilde x,\alpha g(x)'P\tilde x>_{L^2([t-T,t])}(t)\ \cdots\\
&-\eta <g(x)'M'(z-M\tilde x),\alpha g(x)'M'(z-M\tilde x)>_{L^2([t-T,t])}(t)\end{align*}
As $\alpha(t)>0$, the result follows.
\end{proof}

\section{Application to a single-phase rectifier bridge}
In this part, we consider a single-phase rectifier bridge (see Fig.\ref{sch_red1P_appli}) whose primary objective is to supply a near constant voltage $v_{dc}$ to a given resistive load from an AC voltage source $v_{in}$. In the sequel, $i$ refers to the current and $s$ to the control variable. 
Considering a resistive load, $i_{dc}=\frac{v_{dc}}{R_L}$, leads to the following states equations:\begin{align*}
	\dot {\begin{bmatrix} i(t)\\ v_{dc}(t)\end{bmatrix}} &=
	\begin{bmatrix}
		\frac{-R}{L} 		& \frac{-1}{L}s(t)\\
		\frac{1}{C}s(t)		 	& \frac{-1}{R_{L}C}
	\end{bmatrix}
	\begin{bmatrix} i(t) \\ v_{dc}(t)		\end{bmatrix} + 
	\begin{bmatrix}
		\frac{1}{L} \\
		0 
	\end{bmatrix}
	v_{in}(t) 
\end{align*}
It is assumed that the nominal input voltage is known and given by: $v_{in}=100 sin(\omega t)$ with $\omega =2\pi f$ and $f=50hz$. The parameters for the converter are the following: $R = 1 m \Omega$, $L = 1 mH$, $C = 4 mF$ and $R_L = 10 \Omega$. It is obvious that this system is a switched affine system for which the control design problem can be formulated as a control problem with continuous variables invoking density arguments; see \cite{Beneux} for example. Thus, we assume in the sequel that the control $s$ takes its values in the set $[-1 \ 1]$ instead of the discrete set $\{-1\ ,1\}$. Moreover, the harmonic model can be established as follows (modulo a permutation matrix):
\begin{align*}
	\dot {\begin{bmatrix} I(t)\\ V_{dc}(t)\end{bmatrix}} &=
	\begin{bmatrix}
		\frac{-R}{L}\mathcal{I}-\mathcal{N} 		& \frac{-1}{L}\mathcal{S}(t)\\
		\frac{1}{C}\mathcal{S}(t)		 	& \frac{-1}{R_{L}C}\mathcal{I}-\mathcal{N}
	\end{bmatrix}
	\begin{bmatrix} I(t) \\ V_{dc}(t)		\end{bmatrix} + 
	\begin{bmatrix}
		\frac{1}{L}\mathcal{I} \\
		\mathcal{Z} 
	\end{bmatrix}
	V_{in}(t) 
\end{align*}
where $\mathcal{S}=\mathcal{T}_T(s)$, $I=\mathcal{F}_T(i)$, $V_{in}=\mathcal{F}_T(v_{in})$, $V_{dc}=\mathcal{F}_T(v_{dc})$, $\mathcal{I}=\mathcal{F}_T(I)$ is the infinite dimensional identity matrix and $\mathcal{Z}$ is the infinite dimensional zero matrix.

\begin{figure}
\includegraphics[scale=0.5]{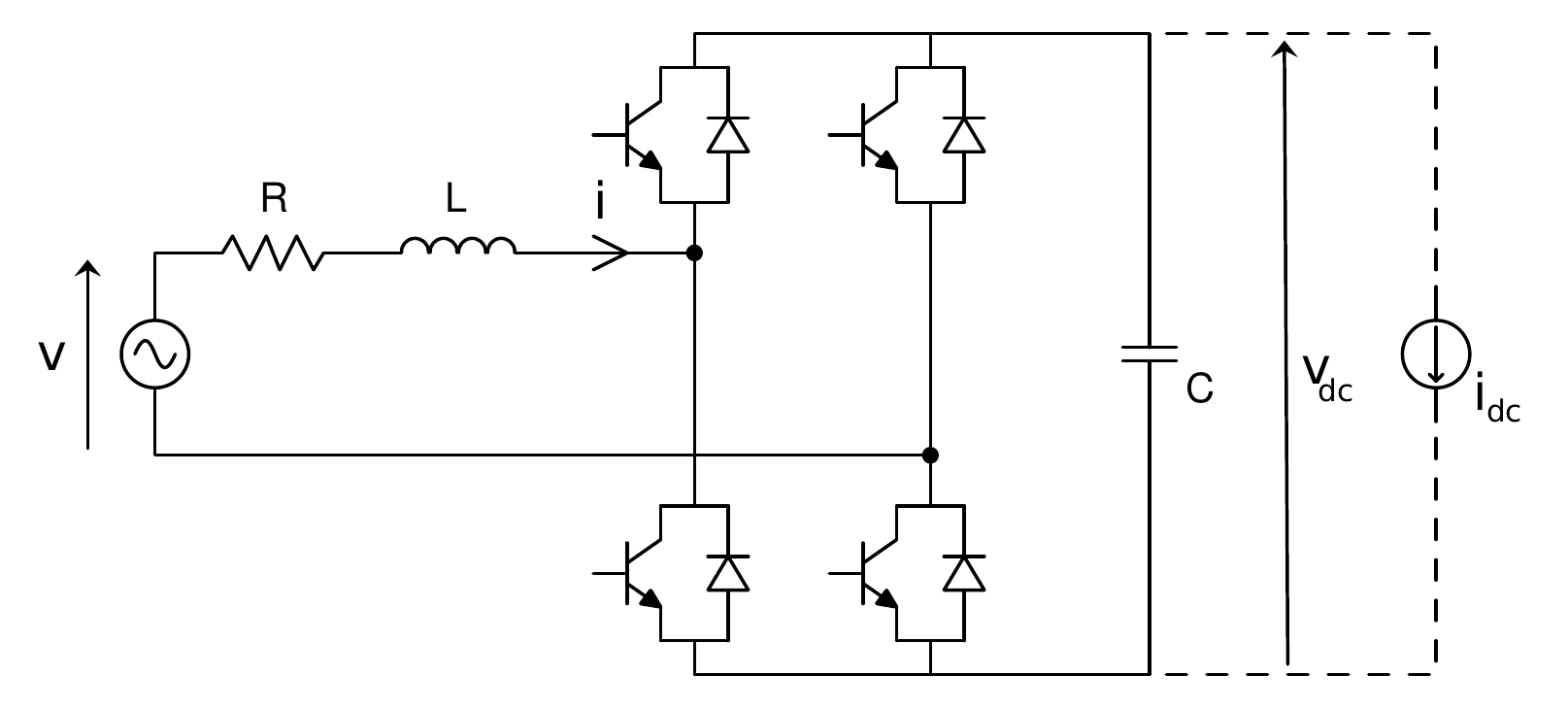}
\caption{Electrical scheme of single-phase rectifier bridge.}
\label{sch_red1P_appli}
\end{figure}

The control objective is twofold : reduce the signals harmonic content and ensure a $T-$periodic steady state with
\begin{itemize}
\item the mean value of $v_{dc}$ around $v_{ref}=200$ volts with values within the band $\pm 10 \%$,
\item the power factor maximized which occurs when the phase between signal $v_{in}$ and $i$ is minimized.
\end{itemize}
Regarding harmonic content reduction, it is an important objective from the practical point of view. It allows to avoid harmonic pollution of the electrical network from which the input voltage is derived.

\subsection{$T-$periodic equilibrium determination} 

The first task is to determine an equilibrium $(V^e,I^e,S^e)$ for the harmonic model that feets the control objectives. This is done by defining a criterion with a weighting sum of the the following quantities: 
\begin{itemize}
\item $|V_{dc_0}^e-v_{ref}|^2\approx 0$ to satisfy the objective on the DC component of the voltage.
\item $\sum_{|k|\geq 1} |V_{dc_k}^e|^2$ which has to be sufficiently small in order to reduce the harmonic content of the voltage.
\item $\sum_{|k|\geq 1} |I^e_k|^2$ which has to be sufficiently small in order to reduce harmonic content of the current.
\item $\sum_{|k|\geq 0}|Real(I_k^e)|^2\approx 0$ to force $v_{in}$ and $i$ to be in phase ($v_{in}$ is an odd function which means that $Real(v_{in})=0$).
\end{itemize}
Notice that as the signals have to be real, the relation $X_{-k}= \bar X_{k}$ must be satisfied and a substitution can be made for harmonics of negative subscript to simplify the problem.
As a consequence, the determination of the equilibrium is carried out using the following objective function :
\begin{align*}J^*&=\min_{S} \left( w_0 |V_{dc_0}^e-V_{ref}|^2+w_1\sum_{k\geq 1} |V_{dc_k}^e|^2+\cdots \right. \\
&\left. w_2\sum_{|k|\geq 0}|Re(I_k^e)|^2+w_3\sum_{k\geq 2} |I^e_k|^2 \right) \end{align*}
where $I^e$ and $V^e$ are determined for a given $\mathcal{S}$ by the equilibrium equation:
\begin{align*}
\begin{bmatrix} I^e \\ V_{dc}^e	\end{bmatrix}=-	\begin{bmatrix}
 \frac{-R}{L}\mathcal{I}-\mathcal{N} 		& \frac{-1}{L}\mathcal{S}\\
 \frac{1}{C}\mathcal{S}		 	& \frac{-1}{R_{L}C}\mathcal{I}-\mathcal{N}
\end{bmatrix}^{-1} 
\begin{bmatrix}
 \frac{1}{L}\mathcal{I} \\
 \mathcal{Z} 
\end{bmatrix}
 V_{in} 
\end{align*}

This nonlinear least squares problem can be easily solved for any weighting matrix $W=[w_0,w_1,w_2,w_3]$ using ad hoc solver. 
\subsection{State feedback control} 
Once the equilibrium is determined, the control law given in Proposition~\ref{lawaffine} can be used to stabilize asymptotically the closed loop system. In practice, a truncation is necessary on the maximum order $h_{max}$ of the harmonics to be taken into account. A detailed analysis of the approximation introduced by this truncation is out of the scope of this paper. Further developments are necessary to address this very challenging problem. For simulation purposes, a truncation corresponding to $h_{max}=5$ is considered as being satisfactory for this example. Indeed, higher orders do not modify or improve the obtained simulation results. 

We choose a weighing matrix $W=[w_0,w_1,w_2,w_3]=[10^3,1,10^3,1]$. The matrix $Q(t)$ involved in the Lyapunov equation \eqref{syl2} is taken constant : $Q(t)=diag([Q_i, Q_v])$ with $Q_i=Id_n$ for the current part and $Q_v=\frac{1}{10^2}Id_n$ for the voltage part. Finally, $\gamma(t)$ (see \eqref{emb}) is taken constant and equal to $10^{-4}$. 
 
 Figure~\ref{f11} shows respectively the current $i$, the output voltage $v_{dc}$ and the control $s$ for a startup of the converter from zero initial conditions. As it can be observed and as expected, all these variables converge to the equilibrium $i^e$, $v^e$ and $s^e$ determined by solving the previous nonlinear optimization problem. The power factor is maximized and the output voltage meets the objective in mean value $v_{dc}=200$ volts $\pm 10\%$. Moreover, the harmonic content of both current and voltage is reduced. It can be observed that, after starting the converter, the control is saturated to $1$. As it is proved in Corollary \ref{slaw}, this saturation does not affect the stability property.
\begin{figure}[h]
\includegraphics[width=\linewidth,height=7cm]{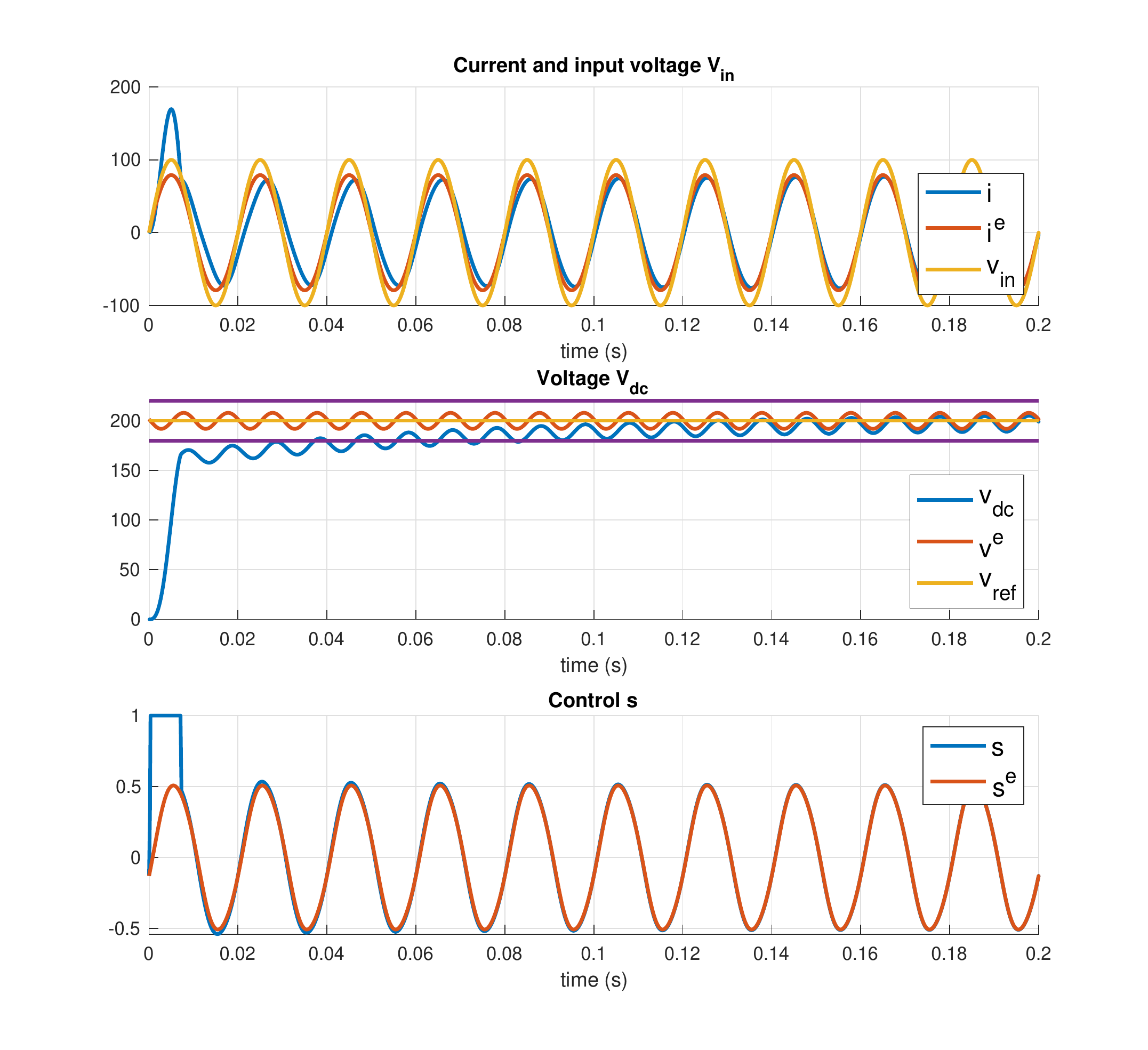}
\caption{Startup of the converter from zero initial conditions}
\label{f11}
\end{figure}

\subsection{Perturbed system and Integral actions}
In this section, two types of $T-$periodic disturbances are considered: an input disturbance $V_{{in}_p}$ on $V_{in}$ and a load current disturbance $I_p$. The harmonic model is now described by the following equations:
\begin{align*}
\dot {\begin{bmatrix} I(t)\\ V_{dc}(t)\end{bmatrix}} &=
\begin{bmatrix}
 \frac{-R}{L}\mathcal{I}-\mathcal{N} 		& \frac{-1}{L}\mathcal{S}(t)\\
 \frac{1}{C}\mathcal{S}(t)		 	& \frac{-1}{R_{L}C}\mathcal{I}-\mathcal{N}
\end{bmatrix}
\begin{bmatrix} I(t) \\ V_{dc}(t)		\end{bmatrix} +\cdots \\ 
&\begin{bmatrix}
 \frac{1}{L}\mathcal{I} & \mathcal{Z} \\
 \mathcal{Z} & \frac{1}{C}\mathcal{I}
\end{bmatrix}
\left[\begin{array}{c}V_{in}+ V_{{in}_p} \\I_p\end{array}\right].
\end{align*}

In order to preserve the control objectives against these disturbances, integral actions can be considered in the control design provided that augmented system remains stabilizable. Hence, the choices and the number of these integral actions are constrained. To preserve the mean value of the output voltage, we add an integral action on the dc component of output voltage $V_{{dc}_0}$. To force the current to be in phase with the input voltage in order to maximize the power factor, a second integral action is introduced (recall that $V_{in}$ is assumed to be an odd function, the integral action forces the real part of the first harmonic $(I_1-I^e_1)$ to zero). As a result, these integral actions are taken into account by considering the new variable $Z=(Z_1,Z_2)$ whose dynamics is given by:
\begin{align}
\dot Z&=-\mathcal{N} Z+\left(\begin{array}{c}\mathcal{L}_1 \mathcal{C}_1\\\mathcal{L}_2\mathcal{C}_2\end{array}\right)(X-X^e) \label{int}
\end{align}
with $\mathcal{C}_1=[\mathcal{Z}\ \mathcal{I}]$, $\mathcal{C}_2=[\mathcal{I}\ \mathcal{Z}]$,
$\mathcal{L}_1=\mathcal{T}_T(\gamma_1)$ and : 
\begin{align*}\mathcal{L}_2&=\gamma_2\left[
\begin{array}{ccccc}
\ddots & & \vdots & &\udots \\ 
&0 & 1 & 0 & \\
\cdots & 1& 0 & 1 & \cdots \\
 & 0 & 1 & 0& \\
\udots & & \vdots & & \ddots\end{array}\right]
\\&=\mathcal{T}_T(\gamma_2 cos(\omega t))
\end{align*}
and where $\gamma_1 >0$ and $\gamma_2>0$ are constant gains to be tuned.
The above equation corresponds to equation \eqref{e2} with $\mathcal{O}$ identically equals to zero.

Applying the forwarding control design proposed in Corollary \ref{cont2}, a controller is obtained with $\gamma_1=200$ and $\gamma_2=100$.
The state feedback is then given by:
\begin{align}s(t)=&sat(s^e(t)-\eta_1 g(x)'P(t)\tilde x-\cdots\nonumber\\
&\eta_2 g(x)'M(t)'(z-M(t)\tilde x)),\end{align} 
where $\tilde x(t)=x(t)-x^e(t)$ and where the saturation function is defined by:
$$sat(s)=\begin{cases}
1 \text { if }s>1\\
-1\text{ if }s<-1\\
s \text{ otherwise}
\end{cases}.$$ 
The matrix $P(t)$ is the same as in the previous section and $M(t)$ is provided by the Sylvester equation~\eqref{syl}.
The matrix $g(x)$ is given by: $g(x)=\left(\begin{array}{cc}0 & -\frac{1}{L} \\\frac{1}{C} & 0\end{array}\right)x$. Finally, the tuning parameters $\eta_1$ and $\eta_2$ are chosen to be equal to $\eta_1=10^{-7}$, $\eta_2=2.10^{-9}$.

For simulation purpose, we introduce the following disturbances at time $t=0.04$:
\begin{itemize}
\item the input voltage $V_{in}(t)=100 sin(\omega t)$ is perturbed by $V_{{in}_p}(t)=10 sin(\omega t)+20 sin(3\omega t)+20 sin(5\omega t)$,
\item the load current is perturbed by adding 2th and 4th order harmonics: $i_{per}(t)= 20 cos (2\omega t)- 20 sin(4\omega t)$. This introduces harmonic current disturbances on the odd harmonics.
\end{itemize}

\begin{figure}[h]
\includegraphics[width=\linewidth,height=7cm]{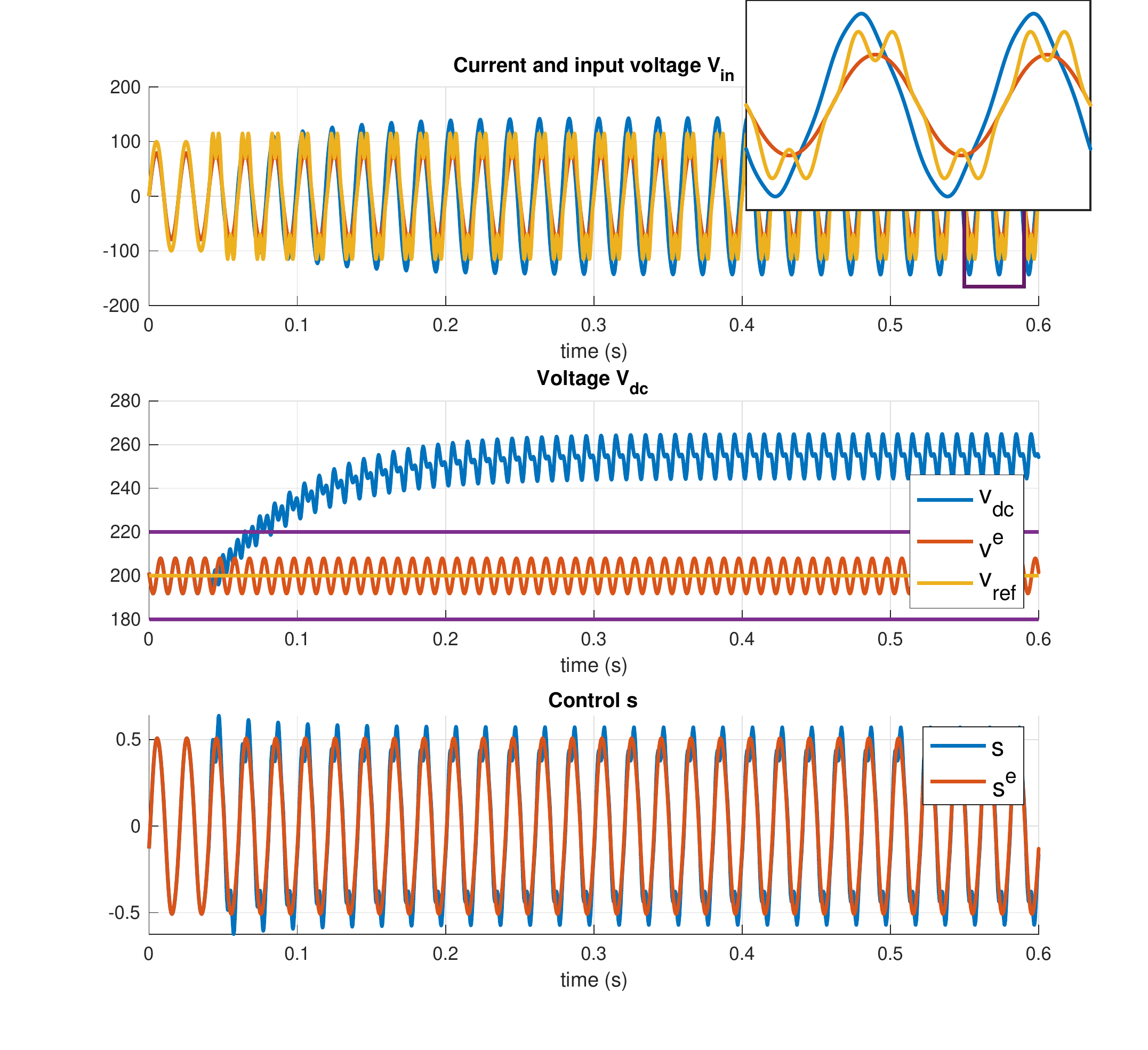}
\caption{Disturbance response when no integral action occurs.}
\label{f12} 
\end{figure}
\begin{figure}[h]
\includegraphics[width=\linewidth,height=7cm]{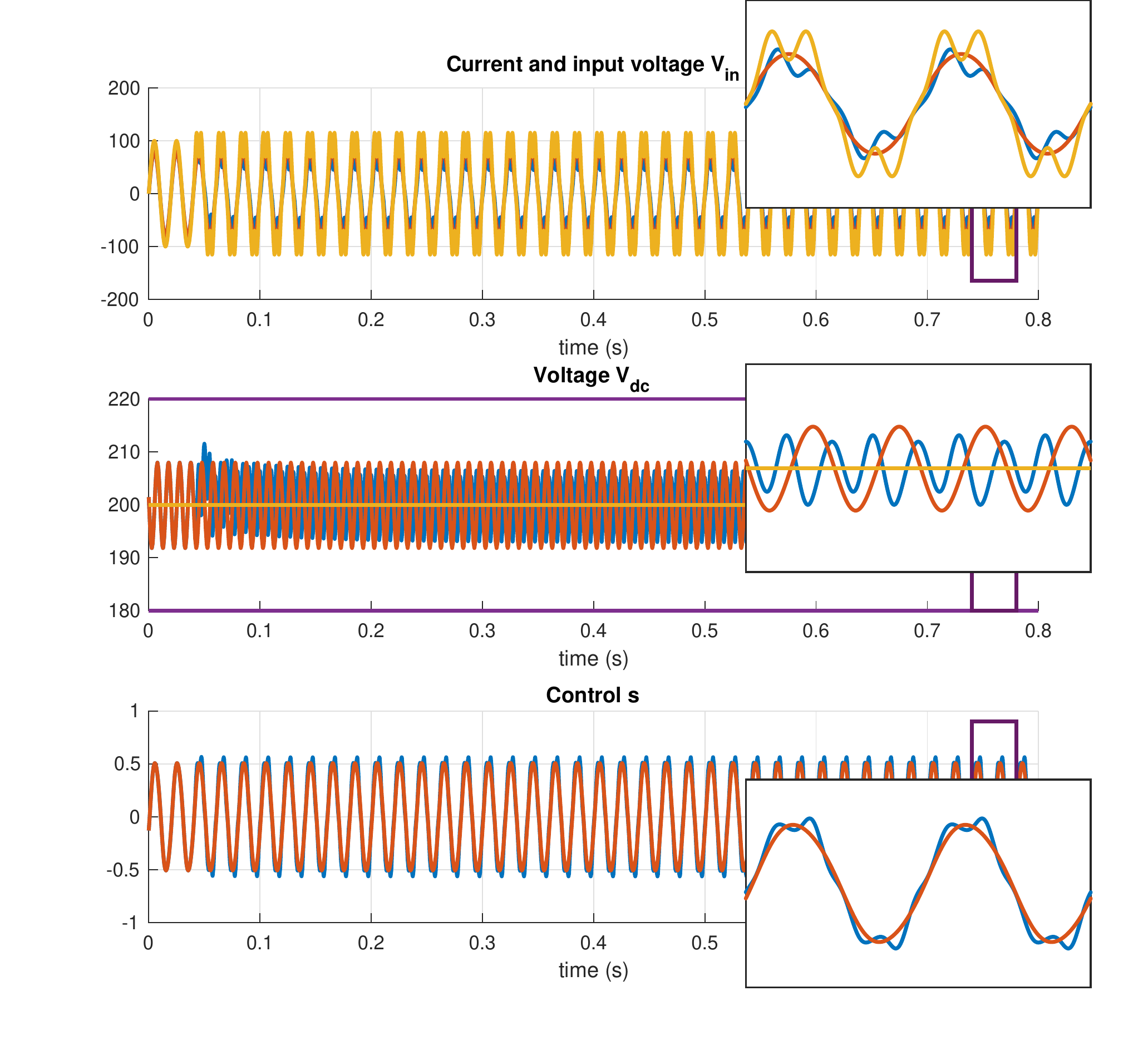}
\caption{Disturbance rejection with integral actions.}
\label{f13} 
\end{figure}
Figure~\ref{f12} shows the effect of these disturbances when no integral action is introduced. As it can be seen, the voltage is clearly out of its nominal reference while the current is also out of phase with respect to $V_{in}$.

Now, if we consider the two integral actions, the control objectives are satisfied as shown in Figure~\ref{f13} where we also see that the current harmonic content (blue curve) is not satisfactory.

We decided to add two additional integral actions on the phasors $I_3$ and $I_5$ of $I$ by defining the new variable $Z=(Z_1,Z_2,Z_3,Z_4)$ where $Z_1$ and $Z_2$ are defined by \eqref{int} and where the additional variables $(Z_3,Z_4)$ are defined by:
\begin{align}
\dot Z_i&=(\mathcal{O}_i -\mathcal{N})Z_i+\mathcal{L}_i \mathcal{C}_i(X-X^e) \label{int2}
\end{align}
for $i=3,4$, $\mathcal{O}_i=\left[
\begin{array}{cc}
\mathcal{Z} & \mathcal{T}_T(-j\omega h_i)\\
\mathcal{T}_T(j\omega h_i)& \mathcal{Z}
\end{array}\right]
$, with $h_3=3$ and $h_4=5$, $\mathcal{C}_i=\mathcal{C}_2$ and $\mathcal{L}_i=\mathcal{T}_T(\gamma_i)$, $\gamma_i>0$.

\begin{figure}[ht]
\includegraphics[width=\linewidth,height=7cm]{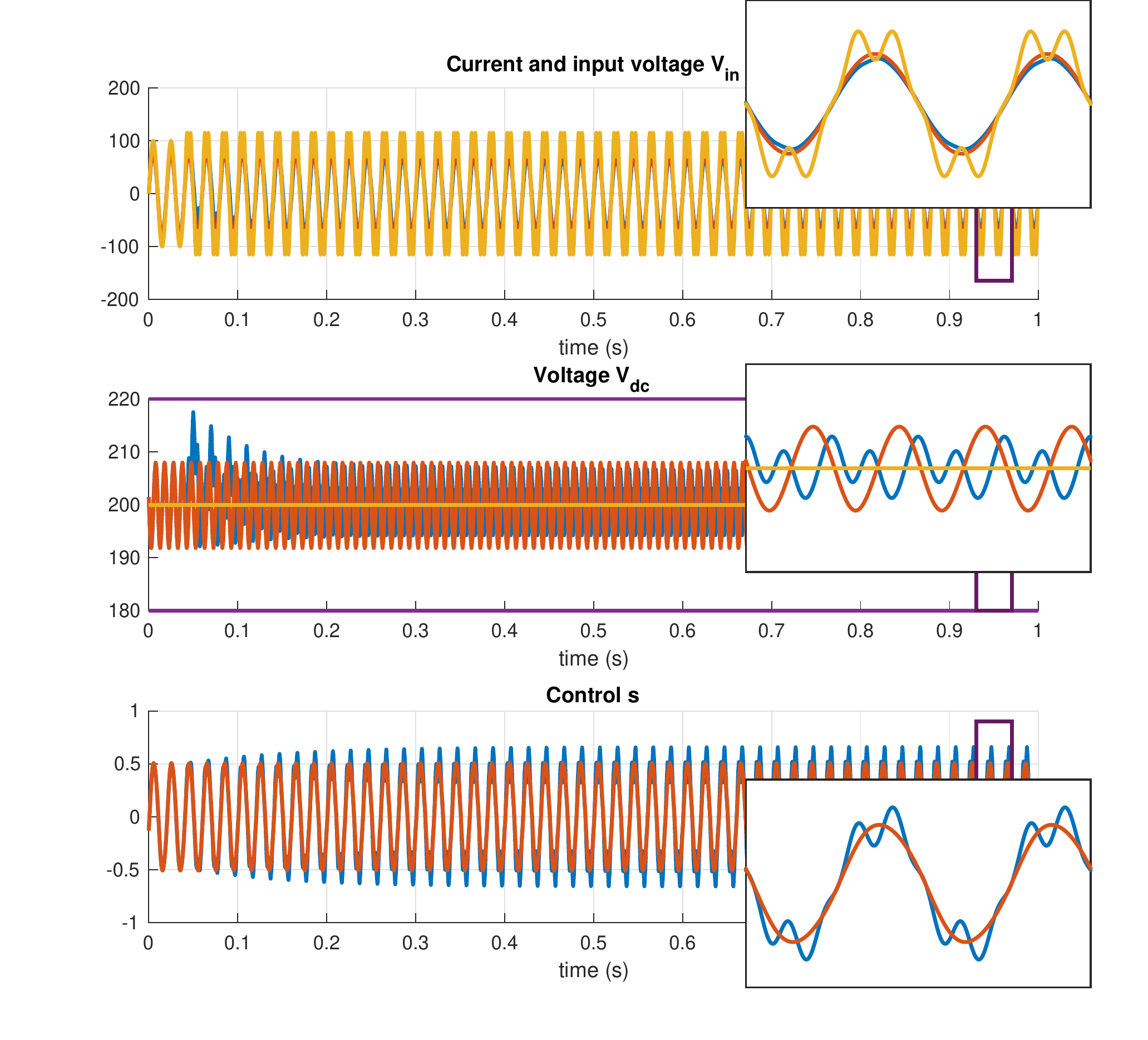}
\caption{Perturbation rejection with integral actions for primary objectives and clean current harmonic content.}
\label{f15} 
\end{figure}

\begin{figure}[ht]
\includegraphics[width=\linewidth,height=12cm]{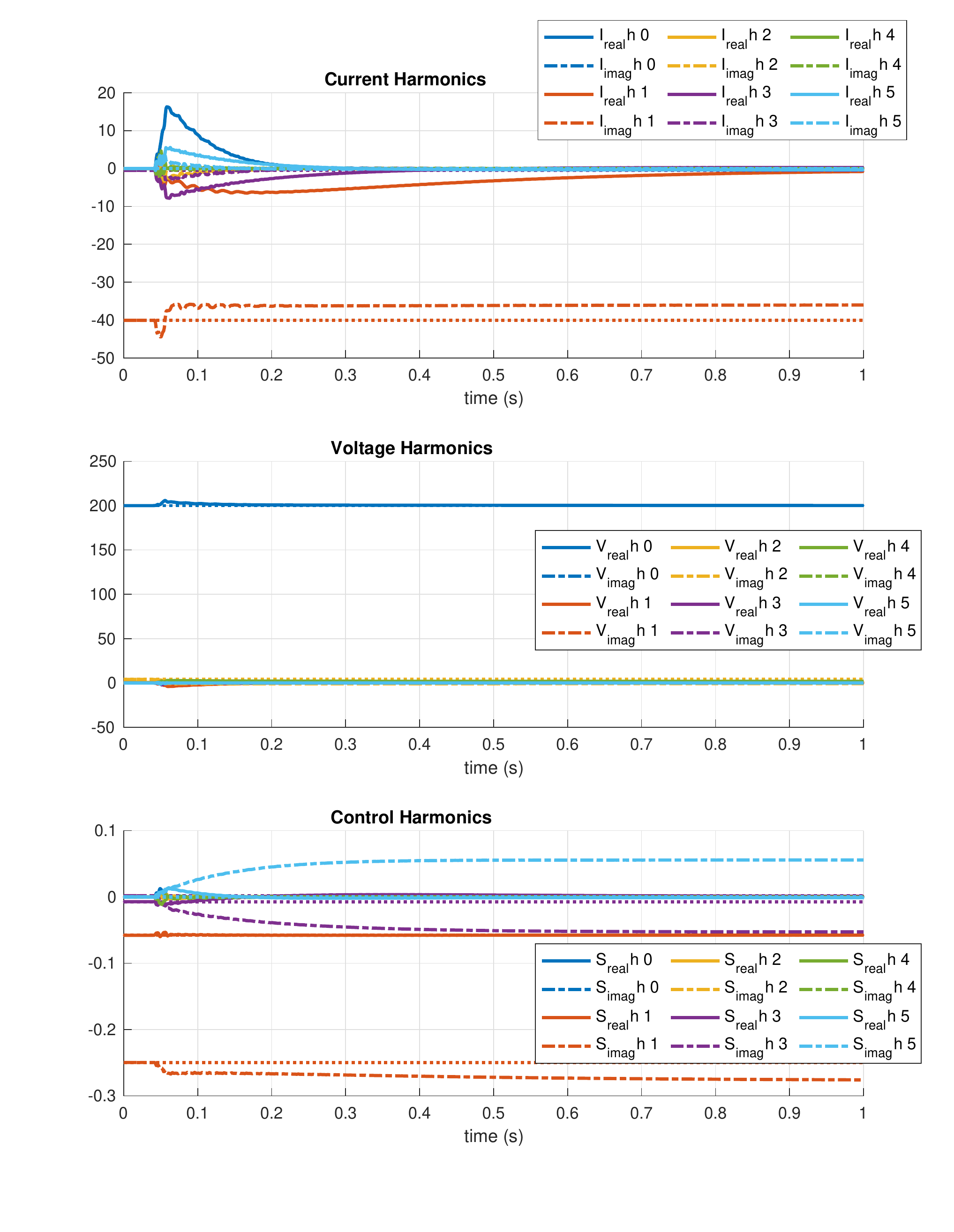}
\caption{Current, voltage and control harmonic responses for both primary objectives and clean current harmonic content. Solid line: Real part; Dash-Dot line: Imaginary part, Dotted line references.}
\label{f16} 
\end{figure}

The controller is now designed with $\gamma_1=400$, $\gamma_2=100$ $\gamma_3=1000$, $\gamma_4=2000$, $\eta_1=4.10^{-8}$ and $\eta_2 =4.10^{-10}$.
Figure~\ref{f15} shows that the control objectives are now satisfied. The current harmonic content is free from undesirable harmonics as it is confirmed by Figure~\ref{f16}. Finally, the control harmonic content has non vanishing components of order $h=1$, $h=3$ and $h=5$.

\section{Conclusion}

In this paper, a one-to-one correspondence has been established between signals $x\in L_{loc}^2$ and their sliding Fourier decomposition $X\in H$ through Theorem~\ref{coincidence} and Proposition\ref{bij}. Appropriate formulas have been provided to reconstruct a signal from its phasors (Proposition \ref{rec}). The consequence of these first important results is the proof of a strict equivalence between the Carathéodory's solutions of general nonlinear dynamical systems and their harmonic versions (Theorem~\ref{CNS}). This equivalence is an another very important result with a major impact. We illustrated this impact by deriving new interpretations of Lyapunov and Riccati harmonic equations for fairly general LTP systems without invoking the Floquet's Theorem. 
We also demonstrated the benefit of these results from a practical point of view as the design part of this paper covers several classes of systems including bilinear affine systems with application to power converters. With the contributions of this paper, one is able to design globally and asymptotically stable feedback control laws in the harmonic domain, including integral actions on the phasors and obtain the time domain version of these control laws which are of great help in practice. The results have been illustrated on a realistic model of a single-phase rectifier bridge subject to periodic exogenous inputs. This example shows clearly that the equilibrium and the control design can be seriously simplified in the harmonic domain using the theory presented in this paper to provide stabilizing control strategies with harmonic disturbances rejection.
This paper leaves open some challenging questions such as the order of the truncation and the robustness with respect to parameter variations and to the period $T$. This will be addressed in future works.

\section{Appendix}

This appendix recalls properties used in this paper and borrowed from Toeplitz (Laurent) decomposition Part V p.p. 562-573 of \cite{Gohberg}. By convention, the components of $X$ are stored in decreasing order: $X=\left(\cdots, X_{-1}', X_0', X_{1}',\cdots \right)'.$
\begin{definition}\label{toeplitz}
 For any matrix function $A\in L^{1}_{loc}(\mathbb{R},\mathbb{C}^{n\times m} )$, we denote: $\mathcal{A}(t)=\mathcal{T}_T(A)(t)$ the Toeplitz matrix of infinite dimension defined by 
$$\mathcal{A}(t)=\mathcal{T}_T(A)(t)=
\left[
\begin{array}{ccccc}
\ddots & & \vdots & &\udots \\ & A_0(t) & A_{-1}(t) & A_{-2}(t) & \\
\cdots & A_{1}(t) & A_0(t) & A_{-1}(t) & \cdots \\
 & A_{2}(t) & A_{1} (t)& A_0(t) & \\
\udots & & \vdots & & \ddots\end{array}\right],$$
where $$A_k(t)=\frac{1}{T}\int_{t-T}^t A(\tau)e^{-j\omega k \tau}d\tau.$$

If $A\in L^{2}_{loc}(\mathbb{R},\mathbb{C}^{n\times m} )$ then $\mathcal{T}_T(A)(t)$ is an operator of $\ell^2$ in $\ell^2$, for any fixed $t$ (not necessarily bounded). 
\end{definition}

\begin{property}\label{borne} For any fixed $t$,
$\mathcal{T}_T(A)(t)$ is bounded on $\ell^2$ (i.e. $\sup_X\frac{\| \mathcal{T}_T(A)(t)X \|_{\ell^2}}{\| X \|_{\ell^2}}<+\infty$) if and only if $A\in L^{\infty}_{loc}(\mathbb{R},\mathbb{C}^{n\times m} )$.
\end{property}
The following proposition is useful for the computation of sliding Fourier decomposition of any polynomial system.
\begin{property}\label{product}
For any scalar function $\lambda$ and any $n$-dimensional vector function $x$ such as $\lambda$, $x$ and $\lambda x$ admit a Fourier decomposition, we have: $$\mathcal{F}_T(\lambda x )=(\mathcal{T}_T(\lambda)\otimes Id_n)X,$$ where $\otimes$ is the kroneker product, $Id_n$ is the $n$-dimensional identity matrix and $X=\mathcal{F}_T(x)$.

For any matrix function $A$ and vector function $x$ such as $x$, $A$ and $Ax$ admit a Fourier decomposition, we have: $$\mathcal{F}_T(Ax)=\mathcal{A}X,$$ with $\mathcal{A}=\mathcal{T}_T(A )$ and $X=\mathcal{F}_T(x)$. 

For all $A$ and $B$ matrix functions of appropriate dimensions such as $A$, $B$ and $AB$ admit a Sliding Fourier decomposition, we have : $$\mathcal{T}_T(AB)=\mathcal{T}_T(A)\mathcal{T}_T(B)=\mathcal{A}\mathcal{B}.$$
\end{property}
Note that the last property implies that the product of two Toeplitz matrices is Toeplitz, which is not the case in finite dimension.
\begin{property}[Inverse of square block Toeplitz matrix]\label{inv}
For any matrix function $A$ in $L^{\infty}_{loc}$, $\mathcal{T}_T(A)$ is invertible if there is $\gamma>0$ such that the set $\{t: |det(A(t))|<\gamma\}$ has measure zero.
The inverse $\mathcal{T}_T(A)$ is determined by $\mathcal{T}_T(A^{-1})$.
\end{property}

\begin{proposition}\label{defpos} Let $A\in L^{2}_{loc}$ be a matrix function with real values. 
\begin{enumerate}
\item $A$ is $T$-periodic if and only if $\mathcal{A}=\mathcal{T}_T(A)$ is constant
\item $A(t)=A(t)' \ a.e. $ if and only if $\mathcal{A}$ is hermitian.
\item Assume that $A\in L^{\infty}_{loc}$. Then, $A(t)> 0 \ a.e.$ if and only if $\mathcal{A}$ is positive definite.
\end{enumerate}
\end{proposition}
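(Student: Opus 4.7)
For this final proposition, I would handle the three items in turn, with Part 3 being the genuinely analytic one.

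For Part 1, the plan is purely computational. The $k$-th block of $\mathcal{A}(t)$ is $A_k(t)=\frac{1}{T}\int_{t-T}^{t}A(\tau)e^{-j\omega k\tau}d\tau$, and exactly as in the derivation of \eqref{dphasor}, differentiation gives $\dot A_k(t)=\frac{1}{T}(A(t)-A(t-T))e^{-j\omega k t}$ a.e. Hence $\mathcal{A}$ is constant in $t$ iff $A(t)=A(t-T)$ a.e., which is the defining property of $T$-periodicity for an $L^2_{loc}$ function.

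For Part 2, I would first exploit that $A$ is real-valued to note that $\overline{A_k(t)}=A_{-k}(t)$, hence $A_k(t)^{*}=A_{-k}(t)'$. The matrix $\mathcal{A}$ is Toeplitz with $(i,j)$-block $A_{i-j}(t)$, so $\mathcal{A}=\mathcal{A}^{*}$ is equivalent to $A_k(t)=A_{-k}(t)^{*}$ for every $k$. Combining with the real-valued identity yields the clean statement $A_k(t)=A_k(t)'$ for every $k$. Since the matrix-valued map $\tau\mapsto A(\tau)'$ has the Fourier coefficients $A_k(\tau)'$ on $[t-T,t]$, uniqueness of the Fourier decomposition (Theorem~\ref{riesz}) gives $A(\tau)=A(\tau)'$ a.e. on $[t-T,t]$, and letting $t$ vary gives the equivalence a.e. on $\mathbb{R}$.

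For Part 3, the key tool is the product identity $\mathcal{F}_T(Ax)=\mathcal{A}X$ from Property~\ref{product} together with Parseval (Theorem~\ref{thparseval}). Given $X\in\ell^2$, Riesz–Fischer produces $x\in L^2([t-T,t],\mathbb{C}^n)$ with $\mathcal{F}_T(x)=X$, and then
\begin{equation*}
X^{*}\mathcal{A}(t)X=\langle X,\mathcal{F}_T(Ax)\rangle_{\ell^2}=\langle x,Ax\rangle_{L^2([t-T,t])}=\tfrac{1}{T}\!\int_{t-T}^{t}\!x(\tau)^{*}A(\tau)x(\tau)\,d\tau .
\end{equation*}
The forward direction is then immediate: if $A(\tau)>0$ a.e., the integrand is nonnegative and vanishes only where $x=0$ a.e., so $X^{*}\mathcal{A}(t)X>0$ for all $X\ne 0$.

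The converse is the main obstacle, because $A$ is only $L^\infty_{loc}$ (no continuity) and one must go from a global positivity statement on $\ell^2$ to a pointwise-a.e. statement on $A$. The plan is a localization/contrapositive argument: if the set $E=\{\tau:A(\tau)\not>0\}$ had positive measure, one could, by a measurable selection, choose a unit vector field $\tau\mapsto v(\tau)$ and a small subset $E_0\subset E$ on which $v(\tau)^{*}A(\tau)v(\tau)\le 0$; setting $x(\tau)=\mathbf{1}_{E_0}(\tau)v(\tau)$ yields a nonzero $x\in L^2([t-T,t])$ with $\int x^{*}Ax\le 0$, hence $X^{*}\mathcal{A}X\le 0$ for $X=\mathcal{F}_T(x)\ne 0$, contradicting positive definiteness. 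The hermitian symmetry of $A$ (Part 2) is used to decompose into eigen-directions so that the measurable selection of $v(\tau)$ in the offending cone is standard; this is where I would be most careful, and I would cite a measurable-selection result (or argue directly using rational test vectors to bypass the selection theorem) to keep the proof self-contained.
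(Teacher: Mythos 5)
Your proposal follows essentially the same route as the paper: Part 1 by direct computation (the paper simply declares it obvious), Part 2 via the identity $\mathcal{T}_T(A')=\mathcal{A}^*$ for real-valued $A$ together with uniqueness of the Fourier decomposition, and Part 3 via the Parseval identity $X^*\mathcal{A}(t)X=\langle x,Ax\rangle_{L^2([t-T,t])}$. The only difference is in the converse of Part 3, where the paper merely asserts that strict positivity of the integral for every nonzero $x$ forces $A(t)>0$ a.e., whereas you supply the localization/measurable-selection argument that actually justifies this step --- a more careful rendering of the same idea rather than a different approach.
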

\begin{proof}
Point 1 is obvious.
For point 2: as the following relationship is satisfied when $A$ is a real-valued matrix function:
$\mathcal{T}_T(A')(t)=\mathcal{A}^*(t)$, we can conclude that $A(t)=A(t)'\ a.e.$ implies that the matrix $\mathcal{A}=\mathcal{T}_T(A)$ is hermitian and reciprocally.
For point 3: by generalized Hölder's inequality if $A\in L^{\infty}_{loc}$ then $Ax\in L^{2}([t-T \ t])$ for any $x\in L^2([t-T \ t])$. Thus, Parseval's Theorem~\ref{thparseval} implies:
\begin{align}
<x,Ax>_{L^2([t-T \ t])}(t)&=\frac{1}{T}\int_{t-T}^tx(\tau)'A(\tau)x(\tau)d\tau\nonumber\\
&=X^*(t)\mathcal{A}(t)X(t),\label{parseval2}\end{align}
with $X=\mathcal{F}_T(x)$. Note that the assumption $A\in L^{\infty}_{loc}$ implies that $\mathcal{A}(t)$ is a bounded operator on $\ell^2$ by property~\ref{borne}.
For any $t$ and any $x\in L^2([t-T\ t])$, if $A(t)>0 \ a.e.$, the integral in \eqref{parseval2} is zero if and only if $x$ is equal to zero almost everywhere which implies that $X^*(t)\mathcal{A}(t)X(t)=0$ if and only if $X(t)=0$ and thus that for any $t$, $\mathcal{A}(t)>0$. Conversely, if $\mathcal{A}$ is positive definite then, for any $t$, the integral in \eqref{parseval2} is strictly positive for any $x\in L^2([t-T\ t])$ such that $x\neq 0\ a.e.$ on $[t-T\ t]$ which implies necessarily that $A(t)>0 \ a.e.$.
\end{proof}

\begin{IEEEbiography}[{\includegraphics[width=1in,height=1.25in,clip,keepaspectratio]{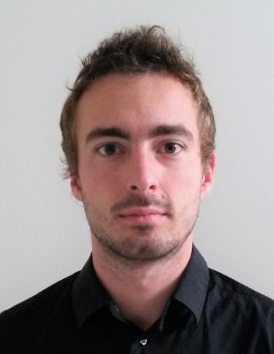}}]
{Nicolas Blin} received both an engineering diploma (Master's degree) in energy from ENSEM and a Master's degree in Systems, Information Technology and Communication from Lorraine University, France, in 2017.
Since 2018, he has been a Ph.D. student at CRAN-CNRS, Université de Lorraine and with Safran Electronics $\&$ Defense, France. His research interests are transmission and conversion of energy, power systems modelling and control, harmonics and electrical drive.
\end{IEEEbiography}
\begin{IEEEbiography}[{\includegraphics[width=1in,height=1.25in,clip,keepaspectratio]{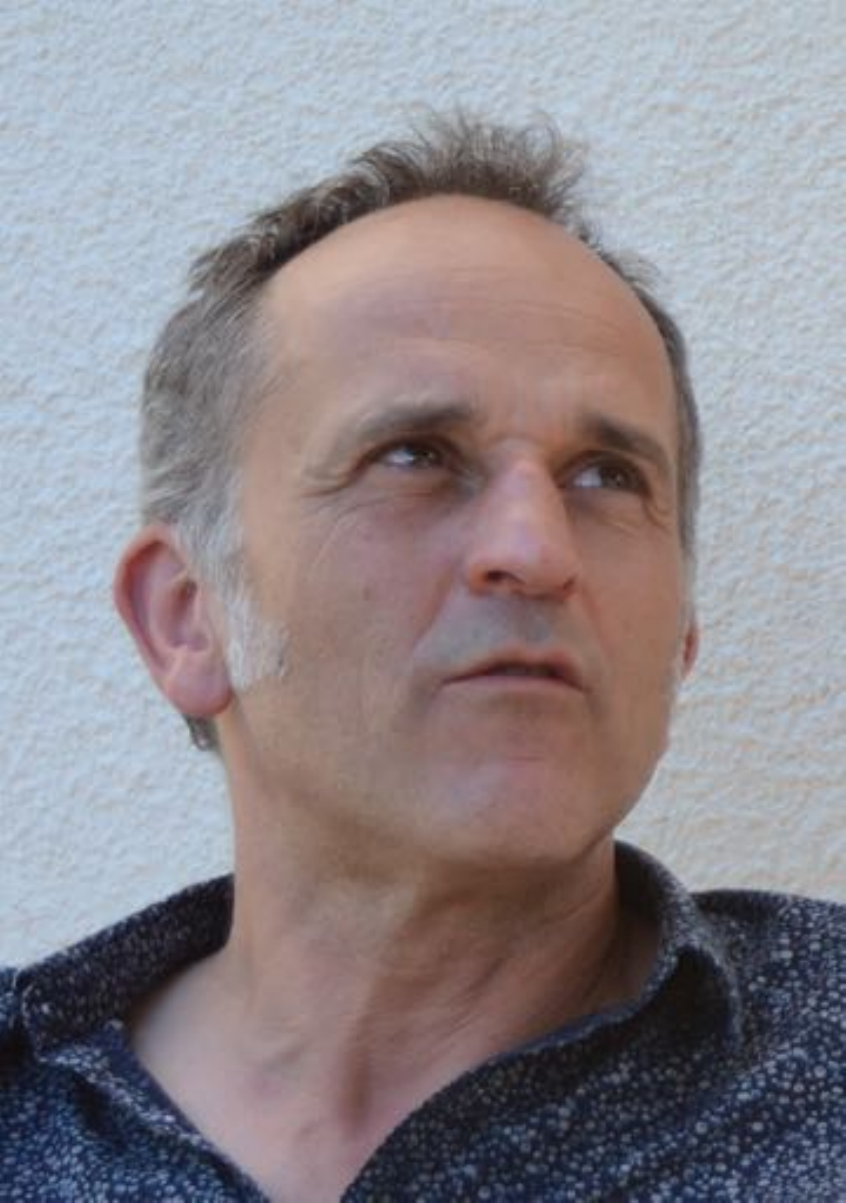}}]
{Pierre Riedinger} is a Full Professor and Head of Information Sciences Department at the engineering school Ensem and researcher at CRAN - CNRS UMR 7039, Universit\'e de Lorraine (France).
He received his M.Sc. degree in Applied Mathematics from the University Joseph Fourier, Grenoble in 1993 and the Ph.D. degree in Automatic Control in 1999 from the Institut National Polytechnique de Lorraine (INPL). He got the French Habilitation degree from the INPL in 2010. 
His current research interests include control theory and optimization of systems with their applications in electrical and power systems.
\end{IEEEbiography}
\begin{IEEEbiography}[{\includegraphics[width=1in,height=1.25in,clip,keepaspectratio]{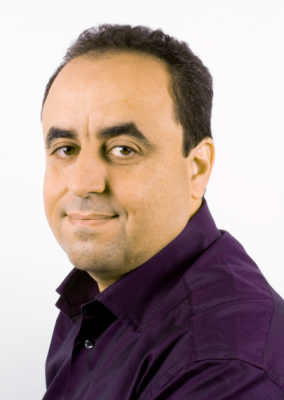}}]
{Jamal Daafouz} 
is a Full Professor at University
de Lorraine (France) and researcher at CRAN-CNRS. In 1994, he received a Ph.D.
in Automatic Control from INSA Toulouse, in 1997.
He also received the "Habilitation à Diriger des
Recherches" from INPL (University de Lorraine),
Nancy, in 2005.
His research interests include analysis, observation
and control of uncertain systems, switched
systems, hybrid systems, delay and networked systems with a particular
interest for convex based optimisation methods.
In 2010, Jamal Daafouz was appointed as a junior member of the
Institut Universitaire de France (IUF). He served as an associate editor
of the following journals: Automatica, IEEE Transactions on Automatic
Control, European Journal of Control and Non linear Analysis and Hybrid
Systems. He is senior editor of the journal IEEE Control Systems Letters. \end{IEEEbiography}
\begin{IEEEbiography}[{\includegraphics[width=1in,height=1.25in,clip,keepaspectratio]{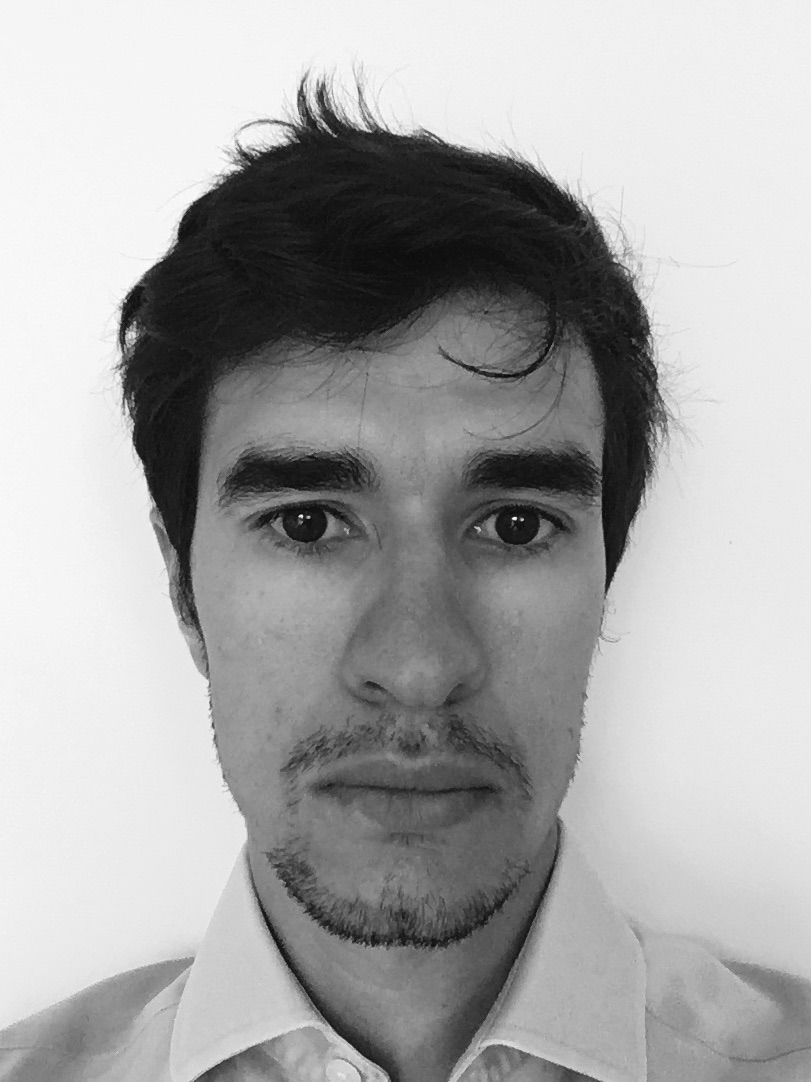}}]{Louis Grimaud} is a control expert in Safran Electronics and Defence (SED) in the field of power electronic and electrical systems. He graduated from ESCPE Lyon in 2011 and joined SED to work on LEAP engines regulation system. 
He joined the Research $\&$ Technology Department in 2015 to led several projects on modelling and robust control of soft-switching high frequency DC-DC converters, PFC and multi-level inverter. 
His current research covers control and observation of nonlinear system for critical real-time applications with an emphasis on mastering harmonics for aircraft electrical propulsion and actuation systems. He holds several patents on robust control, power converter topologies and electrical architecture.
 \end{IEEEbiography}
 \begin{IEEEbiography}[{\includegraphics[width=1in,height=1.25in,clip,keepaspectratio]{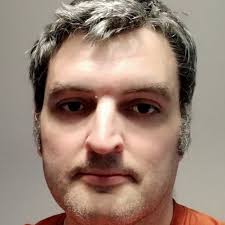}}]{Philippe Feyel} is a R$\&$T engineer for the high-tech company Safran Electronics $\&$ Defense (Safran Group). Senior Expert in automation applied to the line of sight stabilization problem and with a PhD in Automation sciences, he works in partnership with the academic world on the industrial implementation of robust and intelligent control through using modern optimization techniques and artificial intelligence tools (stochastic optimization by meta heuristics, non-smooth optimization, dynamic neural networks and machine learning).\end{IEEEbiography}
\end{document}